\documentclass{article}
\usepackage[english]{babel}

\usepackage{amsmath,amsthm,amssymb}
\usepackage{mathtools,systeme}
\usepackage{setspace,dsfont,hyperref}
\usepackage{color}
\usepackage[super]{nth}
\usepackage{hyperref}       
\usepackage{url}            
\usepackage{booktabs}       
\usepackage{amsfonts}       
\usepackage{authblk}
\usepackage{microtype}      
\usepackage{xcolor}         
\usepackage{algorithm, algpseudocode}
\usepackage{cleveref}
\usepackage{tikz}
\usepackage{subcaption}
\usetikzlibrary{arrows.meta, positioning, decorations.pathreplacing}
\usepackage{pgfplots}
\usepackage[style=alphabetic, maxbibnames=99]{biblatex}
\pgfplotsset{compat=1.18}

\newtheorem{thm}{Theorem}[section]
\newtheorem{cor}[thm]{Corollary}
\newtheorem{lemma}[thm]{Lemma}
\newtheorem{prop}[thm]{Proposition}
\newtheorem{claim}[thm]{Claim}

\theoremstyle{definition}
\newtheorem{dfn}{Definition}[section]

\theoremstyle{definition}

\theoremstyle{definition}

\theoremstyle{remark}
\newtheorem{rem}{Remark}[section]

\newcommand{\eps}{\varepsilon}

\DeclarePairedDelimiter\norm{\lVert}{\rVert}

\usepackage{titlesec}
\titleformat{\section}{\normalfont\bfseries\large}{\thesection}{1em}{}
\titleformat{\subsection}{\normalfont\bfseries}{\thesubsection}{1em}{}

\begin{document}

\title{\bfseries Approximating CDTW Distance of Piecewise Algebraic Curves}
\author[1]{Alperen A. Erg\"ur}
\author[2]{Shamik Khowala}

\affil[1]{UT San Antonio, Mathematics and Computer Science Departments, San Antonio, TX}
\affil[2]{The Harker School, San Jose, CA}
\date{}

\maketitle

\begin{abstract}
    Curves as input data naturally arise in a variety of fields including finance, seismology, medicine, spatio-temporal data mining, malicious activity detection, and more. A common way to analyze these data sets is to do similarity matching or clustering. The most common metrics used for measuring similarity of curves are Dynamic Time Warping (DTW) and Fréchet distance. These metrics are sensitive to sampling rate and outliers respectively, and do not yield robust outcomes. Continuous Dynamic Time Warping (CDTW) is a more robust distance metric that improves upon DTW and Fréchet distances. Existing algorithms for CDTW are either exact algorithms that focus on non-Euclidean norms and piecewise linear curves, or approximation algorithms limited to piecewise linear curves. We present an approximation algorithm for computing the CDTW distance under Euclidean norm between piecewise (higher degree) algebraic curves. That is, we present a fully polynomial-time approximation scheme (FPTAS) of multiplicative error $\eps$, with  $O \left( (m+n)^{\frac{19}{6}} (\frac{1}{\varepsilon})^{\frac{10}{3}} \log \left( \frac{ (m+n) }{\varepsilon^2} \right) \right)$ complexity, where $m$ and $n$ are the number of pieces of the two input  curves.
\end{abstract}

\section{Introduction}

\noindent The ability to rigorously analyze time-series and trajectory data is essential in fields such as signature verification, animal tracking, and electrocardiography. Such analysis frequently involves clustering, classifying, and spotting trends. A common way to do this is by measuring the \textit{similarity} between curves. Many similarity measures have been considered, and the right choice often varies among applications.

Dynamic Time Warping (DTW) distance computes an alignment between two sequences accounting for the distances between all pairs of points, however, as an inherently discrete measure, it is sensitive to the sequences' sampling rates. Fréchet distance aligns curves accounting for the maximum distance between corresponding points; it is  a bottleneck measure  and inherently sensitive to outliers. Continuous Dynamic Time Warping (CDTW) distance overcomes both disadvantages by taking the best of both worlds: it combines the continuous nature of Fréchet distance with the summation-based nature of DTW distances.

The robustness of CDTW distance to outliers and sampling rates makes it a natural choice for a variety of applications. One of the first areas where CDTW was proposed is signature verification and handwriting analysis, where writing is naturally represented as continuous curves. Munich and Perona developed a CDTW-based algorithm for translation-invariant curve alignment that handled arbitrary point correspondence in signatures and  matching the continuous shape of the signature rather than isolated sample points \cite{munich1999continuous}. 


Clustering trajectories under CDTW avoids unwanted artifacts, such as zig-zags, that appear when using discrete DTW or the Fréchet distance. 
Brankovic, Buchin, Klaren, Nusser, Popov, and Wong studied $k$-median problem under CDTW distance, providing the first $(k,l)$-medians clustering algorithm using this metric. They utilize the squared Euclidean norm. The reasons for this is discussed in detail by Klaren \cite{brankovic2020k}. In short, Euclidean distance is the natural norm for most applications, and especially for clustering and regression tasks. Gudmundsson and Valladares gave a GPU-based approach for subtrajectory clustering using the continuous Fréchet distance \cite{gudmundsson2012gpu} (a variant of CDTW), which aids in the identification of movement patterns (e.g., flocks, leadership, and convergence) across large spatio-temporal datasets \cite{gudmundsson2008movement}. 


In the remaining parts of the introduction, we review algorithms and hardness results for DTW, Fréchet, and CDTW similarity metrics in detail. This gives context to state our results in \Cref{contributions}.

\subsection{Dynamic Time Warping}
\noindent DTW was largely popularized by Rabiner and Juang in the context of speech recognition, where it addressed differences in talking speed through time-normalization \cite{rabiner1993fundamentals}. Their approach relies on a weighted spectral distortion rather than a single fixed norm.

Given sequences $P = (p_1, \dots, p_n)$ and $Q = (q_1, \dots, q_m)$, DTW computes a monotone alignment $\pi$ that minimizes the cumulative cost
\[ DTW(P, Q) = \min_{\pi \in \mathcal{A}(n, m)} \sum_{(i, j) \in \pi} d(p_i,q_j), \]
where $\mathcal{A}(n, m)$ is the set of all valid alignments satisfying boundary, continuity, and monotonicity conditions and the distance function $d(p_i, q_j)$ is almost always the $L_1$ or $L_2$ norms.

The natural dynamic programming algorithm for computing DTW distance takes $O(mn)$ time. For temporal data, such as discrete time series in large databases, this quadratic complexity becomes too costly \cite{berndt1994using}. One workaround, FastDTW \cite{salvador2007toward}, approximates DTW in linear time and space. In another breakthrough paper, the quadratic barrier of computing DTW exactly was broken \cite{gold2018dynamic}. Gold and Sharir presented a deterministic algorithm which runs in $O(n^2/\log\log n)$ time and computes under any arbitrary fixed distance norm. For higher-dimensional sequences under the Euclidean $L_2$ norm, Agarwal, Fox, Pan, and Ying introduced $(1+\eps)-$approximation algorithms achieving near-linear time for $k$-packed or $k$-bounded curves as well as subquadratic time for backbone sequences \cite{agarwal2015approximating}.

DTW is not a metric since it violates the triangle inequality. For kernel methods, where a metric is required, Cuturi proposes Fast Global Alignment Kernels that instead sum over all possible alignments, not just the minimum \cite{cuturi2011fast}. Building on this, Random Warping Series (RWS) approximates using random features; it reduces computational complexity, enabling scaling to millions of time series  \cite{pmlr-v84-wu18b}.

\subsection{The Fréchet Distance}
\noindent The Fréchet distance was originally proposed by Maurice Fréchet in his 1906 doctoral thesis; Alt and Godau introduced an algorithm to calculate it in the early 1990s \cite{alt1995computing}. This measure provides a continuous alternative to DTW, taking the entire curve into account. An often cited analogy is that it represents the minimum leash length required for a man and his dog to traverse their respective curves without backtracking.

Originally, Alt and Godau proposed an $O(n^2\log n)$ algorithm which utilized Free Space Diagrams (FSDs). An FSD depicts regions where all pairs of points are within a certain distance $\eps$. However, Bringmann showed that no strongly subquadratic algorithms can compute the exact Fréchet distance between polygonal curves in any dimension, unless the Strong Exponential Time Hypothesis (SETH) fails \cite{bringmann2014walking}. Nevertheless, Buchin, Buchin, Meulemans, and Wolfgang achieved an improvement by precomputing small, repeating parts of the FSD and obtained an algorithm with  $O(n^2\sqrt{\log n}(\log \log n)^{3/2})$ complexity \cite{buchin2017four}.

Past work has also extended this concept to more complex geometric variants. For piecewise smooth curves, Rote showed the decision problem is solvable in $O(mn)$ time \cite{rote2007computing}, and he also extended this result to computing the Fréchet distance for piecewise algebraic curves (our focus). Conradi, Driemel, and Kolbe presented a simpler algorithm that achieves the same time complexity for piecewise smooth algebraic curves in arbitrary dimensions ($\mathbb{R}^d$) \cite{Conradi_Driemel_Kolbe_2025}. Our own results will refer to theirs when dealing with the algebraic curves. Buchin, Buchin, and Wang developed the first exact polynomial-time algorithm for partial curve matching --- maximizing the total length of subcurves within a distance $\delta$ of each other --- by simplifying it to a longest path problem solvable in $O(mn(m+n)\log(mn))$ time \cite{buchin2009exact}. For triangulated surfaces, computing the Fréchet distance remains a difficult problem, being only upper semi-computable \cite{buchin2007computability}. Meanwhile, Bringmann and Mulzer showed that a simple greedy algorithm for the discrete Fréchet distance gives a $2^{\Theta(n)}$-approximation in linear time \cite{bringmann2016approximability}.

\subsection{Continuous Dynamic Time Warping}
\noindent Continuous Dynamic Time Warping (CDTW), also called the integral Fréchet distance, addresses both DTW's sensitivity to sampling rates and the Fréchet distance's sensitivity to outliers; it computes a minimum-cost continuous alignment as in Fréchet distance but, instead of a maximum, it considers an integral, just as the sum of its discrete counterpart DTW.

Klaren explains a theoretical framework of CDTW, presenting a generalized definition that accommodates $L_1$, $L_2$, or $L_\infty$ norms on both the height function and the warping-path metric, including powers of a norm such as the squared Euclidean distance $L_2^2$ that we adopt here \cite{klaren2020continuous}. Klaren also presents three algorithms: ApproxCDTW for additive approximations, FastCDTW for faster execution, and ExactCDTW for exact computation of CDTW respectively.


Maheshwari, Sack, and Scheffer study two-dimensional polygonal curves using the Euclidean norm for the height function: They give a  pseudo-polynomial time $(1+\eps)$-approximation algorithm which runs in $O(\zeta^4 n^4 / \eps^2)$ time, where $\zeta$ is the maximal ratio of any pair of segment lengths from the input curves \cite{maheshwari2018approximating}.

Buchin, Nusser, and Wong presented the first exact algorithm for computing CDTW of one-dimensional curves, under the $L_1$ height  in $O(n^5)$ time by propagating continuous functions through a dynamic programming setup \cite{buchin2022computing}. Later Buchin, Buchin, Swiadek, and Wang showed that CDTW distance cannot be computed exactly under the Euclidean $L_2$ height norm using only algebraic operations, because the resulting integrands may involve transcendental numbers \cite{buchin2026constantfactorapproximationcontinuousdynamic}. This result motivates our own approximation algorithm in order to enable polynomial-time computation. In the same paper, they give the first polynomial-time constant-factor approximation for this problem. They achieve a 5-approximation in $O(n^5)$ time under the $L_1$ norm and a $(5+\eps)$-approximation in $O(n^5 / \eps^{1/2})$ time for any fixed polygonal norm \cite{buchin2026constantfactorapproximationcontinuousdynamic}.


In conclusion, developing more and more robust similarity measures is significant to support new applications and accurate analyses. CDTW is the most resilient of the measures reviewed previously, as it addresses both temporal distortions and outliers. The difficulty of computing CDTW, particularly for algebraic curves, is what we address. 


\subsection{Our Contribution} \label{contributions}
No progress has thus far been made toward exactly or approximately computing the CDTW distance between higher degree algebraic curves. Hardness results in \cite{buchin2026constantfactorapproximationcontinuousdynamic} effectively rule out exact computation for Euclidean height norms, so approximation is the natural target. In this paper, we present the first FPTAS for computing the CDTW distance between piecewise algebraic curves.  The algorithm runs in bit-complexity $O\left((m+n)^{\frac{19}{6}} (\frac{1}{\varepsilon})^{\frac{10}{3}} \log\frac{M+N}{\eps^2}\right)$, where $m$ and $n$ are the number of pieces of the input curves, see \Cref{FPTAS}.

We now give a brief summary of our structural results:
\begin{itemize}
      \item We characterize the shape of an optimal alignment for CDTW distance using Pontryagin's maximum principle; this is \Cref{optimalpath}.
    \item  We obtain a lower bound for CDTW distance of piecewise algebraic curves using their Fréchet distance; this is \Cref{lower-bound}.
    \item We develop a geometric pre-processing scheme based on average curvature (turning angles) \Cref{turningangles}. The goal of this pre-processing is to create principled and fast piecewise quadratic approximations of higher degree algebraic curves.
\end{itemize}
These structural results effectively create a piece-wise quadratic approximation with precise control on the error in CDTW distance. We expect these structural results to remain useful for future work to build on ours. 
\begin{itemize}
    \item Our main algorithmic engine is a dynamic program that computes CDTW between piecewise quadratic curves to any target multiplicative accuracy $\eps$.
    \item The idea of using dynamic programming and propagating cost through boundaries with functions rather than point-wise discretization is already present in \cite{buchin2022computing}. Our main contribution lies in the detailed approximation and bit-size  control machinery in \Cref{errorandcost} and \Cref{propagation}. It is worth mentioning that the tools used in these sections are Chebyshev interpolants (approximation theory) and resultants (real algebraic geometry) which are powerful but elementary tools in these respective fields. These basic tools  allows us to handle CDTW distance of non-linear curves.   
    \item Combining the structural results and the algorithmic engine for piecewise quadratics yields the FPTAS for any pair of piecewise algebraic curves  \Cref{FPTAS}.  
\end{itemize}

\section{Preliminaries}
\subsection{Setup and Definitions}
We have two piecewise algebraic curves $P$ and $Q$, divided into $m$ and $n$ pieces, respectively. Further, we assume that every piece is the image of a polynomial with two univariate polynomials of degree at most $d$ and coefficient bit-size at most $\tau$. This is just a simplifying assumption as polynomials approximate arbitrary algebraic curves.
\\\\
Let us define the CDTW distance between $P$ and $Q$. 
\begin{dfn}
Suppose the total arc-length of $P$ is $p$ and the total arc-length of $Q$ is $q$. 
Let $\Gamma(p,q) = \Gamma(p) \times \Gamma(q)$, where $\Gamma(p)$ is the set of all continuous, differentiable, and non-decreasing functions $\alpha: [0,1] \rightarrow [0, p]$ satisfying the boundary conditions $\alpha(0) = 0$ and $\alpha(1) = p$ and $\Gamma(q)$ is defined similarly.  We define the CDTW distance between curves $P$ and $Q$ as
\[d_{CDTW}(P,Q) = \inf_{(\alpha,\beta) \in \Gamma(p,q)}  \left( \int_0^1 ||P(\alpha(z)) - Q(\beta(z))||_2^2 \  \sqrt{\alpha'(z)^2 + \beta'(z)^2} \  dz. \right)^{\frac{1}{2}} \]
\end{dfn}

Our definition of CDTW distance differs from the one of \cite{buchin2022computing}. Our rationale is as follows:
\begin{itemize}
    \item We prefer to use $\ell_2$-arc-length for the curve $(\alpha,\beta)$ simply because $\ell_2$-arc-length is the basis for the classical differential geometry of curves and surfaces.
    \item In our view, the ``right'' intuitive way to think about CDTW distance is to consider  the cost of a continuous transportation of mass from curve $P$ to curve $Q$. With this view, Wasserstein distance is the natural fit. 
    \item After deciding on $\ell_2$-arc-length for the differential geometry, it is all but  natural to use squared-$\ell_2$-distance between the curves for the cost function (height function). The analog to this is the definition of Wasserstein-2 distance where  the integral has a square root outside. The square root after the integral that is customary in Wasserstein distance makes our CDTW definition to have the same ``unit'' with the aforementioned definitions.
     \item As discussed in \cite{klaren2020continuous}, the squared $\ell_2$ distance is also natural because that is the measure used in methods for center computation such as in k-means clustering as well as in the least-squares method for regression analysis.
\end{itemize}

We will map the CDTW computation from $\Gamma(p,q)$ to a rectangle parameter space. More precisely, we construct the parameter space $R = [0,p] \times [0,q]$ as an $(n-1)\times (m-1)$ grid of cells. See Figure 1.

\begin{figure}[h]
\centering
\begin{tikzpicture}[
    scale=0.7, 
    transform shape,
    >=Stealth,
    critical point/.style={circle, fill=red, inner sep=1.2pt},
    curve label/.style={font=\huge\bfseries},
    axis label/.style={font=\large}
]

\begin{scope}[local bounding box=curveP]
    \draw[ultra thick, blue!80!black] plot [smooth, tension=0.8] coordinates {
        (0,0) (1,2) (2,1) (3,3) (4,0) (5,-1) (6,1) (7,0)
    };
    
    \node[critical point] at (0,0) {};
    \node[critical point] at (1,2) {};
    \node[critical point] at (2,1) {};
    \node[critical point] at (3,3) {};
    \node[critical point] at (4,0) {};
    \node[critical point] at (5,-1) {};
    \node[critical point] at (6,1) {};
    \node[critical point] at (7,0) {};
    
    \node[curve label, blue!80!black, left=15pt] at (0,1) {P};
    \node[below=10pt, font=\itshape] at (3.5,-1) {Curve $P$ divided into $m=7$ pieces};
\end{scope}

\begin{scope}[shift={(0,-5)}, local bounding box=curveQ]
    \draw[ultra thick, orange!80!black] plot [smooth, tension=0.7] coordinates {
        (0,1) (1.5,2.5) (3,0) (4.5,-1.5) (6,-0.5) (7.5,0.5)
    };
    
    \node[critical point] at (0,1) {};
    \node[critical point] at (1.5,2.5) {};
    \node[critical point] at (3,0) {};
    \node[critical point] at (4.5,-1.5) {};
    \node[critical point] at (6,-0.5) {};
    \node[critical point] at (7.5,0.5) {};
    
    \node[curve label, orange!80!black, left=15pt] at (0,0.5) {Q};
    \node[below=10pt, font=\itshape] at (3.75,-1.5) {Curve $Q$ divided into $n=5$ pieces};
\end{scope}

\draw[->, line width=3pt, gray!40] (8.5, -3) -- (11.5, -3) ;

\begin{scope}[shift={(14,-5)}]

    \draw[step=1cm, lightgray, thin] (0,0) grid (7,5);
    \draw[thick] (0,0) rectangle (7,5);
    
    \draw[->, thick] (0,0) -- (8,0);
    \draw[->, thick] (0,0) -- (0,6);
    
    \draw[thick] (7, 0.1) -- (7, -0.1) node[below=2pt, axis label] {$p$};
    \draw[thick] (0.1, 5) -- (-0.1, 5) node[left=2pt, axis label] {$q$};
    
    \draw[decorate, decoration={brace, amplitude=8pt, mirror}] (0,-0.3) -- (6.7,-0.3) 
        node[midway, below=12pt, font=\small] {segments of $P$};
        
    \draw[decorate, decoration={brace, amplitude=8pt}] (-0.3,0) -- (-0.3,4.7) 
        node[midway, left=12pt, rotate=90, anchor=south, font=\small] {segments of $Q$};
        
    \node[font=\Large\bfseries] at (3.5, 7.5) {Parameter Space $R$};
\end{scope}
\end{tikzpicture}
\caption{Mapping of curves to parameter space.}
\end{figure}

The curve $P$ has length $p$, and let the lengths of the pieces be $p_0,p_1,\ldots,p_{m-1}$ where $p=p_0+\cdots+p_{m-1}$ (similarly for $Q$ and $q$). Every piece of $P$ is given as an image of a polynomial map 
\[ (p_0+p_1+\cdots+p_{i-1}, p_0+p_1+\cdots+p_{i-1}+p_i )  \rightarrow \mathbb{R}^2 \] 
We name these maps as $P_0,P_1,\cdots,P_{m-1}$ and $Q_0,Q_1,\cdots,Q_{n-1}$ respectively where $P_i : \mathbb{R} \rightarrow \mathbb{R}^2$ (similarly $Q_i$) is given by a tuple of univariate polynomials. We assume those univariate polynomials have degree at most $d$ and bit-size $\tau$.  

In $R$ the cell $(i,j)$, with $i \in \{0, \ldots, n-1\}$ and $j \in \{0, \ldots, m-1\}$, corresponds to the comparison of $P_i$ to $Q_j$. For every point $(x,y) \in R$ we define a height function as follows.
\begin{dfn}
Let $(x,y) \in R$ be a point in cell $(i,j)$, then the height function $h$ is defined as follows:
\[ h(x,y) =  \norm{P_i(x)-Q_j(y)}_2^2 \]
\end{dfn}
We also define more terminology.
\begin{dfn}
    The left and bottom sides of a cell will be called its \textit{input boundaries} and the top and right sides of the cell its \textit{output boundaries}.
\end{dfn}
We can now re-formulate the CDTW distance as follows.
\begin{lemma}
\begin{equation}
    d_{CDTW}(P, Q) = \inf_{\gamma \in \Gamma(p,q)} \left( \int_0^{1} h(\gamma(z)) \ ||\gamma'(z)||_2 \  dz \right)^{\frac{1}{2}}.
\end{equation} 
\end{lemma}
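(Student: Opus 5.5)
This is essentially an unwinding of definitions, so I would prove it by a pointwise identification of the two integrands. Identify each pair $(\alpha,\beta)\in\Gamma(p,q)=\Gamma(p)\times\Gamma(q)$ with the path $\gamma=(\alpha,\beta):[0,1]\to R=[0,p]\times[0,q]$. This gives a bijection between the admissible reparametrization pairs and the monotone differentiable paths in $R$ from $(0,0)$ to $(p,q)$. Under this identification, $\|\gamma'(z)\|_2=\sqrt{\alpha'(z)^2+\beta'(z)^2}$ holds trivially. The whole statement therefore reduces to the pointwise identity
\[
h(\gamma(z))=\norm{P(\alpha(z))-Q(\beta(z))}_2^2 \quad\text{for every } z\in[0,1].
\]

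The key steps, in order, are as follows.

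First, make precise that $P$ is parametrized on $[0,p]$ by concatenating the pieces. For $x\in[p_0+\cdots+p_{i-1},\,p_0+\cdots+p_i]$ we have $P(x)=P_i(x)$, and similarly $Q(y)=Q_j(y)$.

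Second, observe that a point $(x,y)=\gamma(z)$ lies in cell $(i,j)$ exactly when $x$ lies in the $i$-th parameter interval of $P$ and $y$ lies in the $j$-th parameter interval of $Q$. By the definition of the height function, $h(x,y)=\norm{P_i(x)-Q_j(y)}_2^2=\norm{P(x)-Q(y)}_2^2$.

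Third, substitute this identity into the integral. The integrands agree for every $z$, so the two infima are taken over the same set of the same quantities. Since $t\mapsto t^{1/2}$ is monotone on $[0,\infty)$, taking square roots preserves the equality of the infima.

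The only point needing care, and the one I would flag as the main obstacle, is well-definedness of $h$ on cell boundaries, where a point belongs to two or four cells. There I would argue that the curve $P$ is continuous at breakpoints, so $P_{i-1}$ and $P_i$ agree at the shared endpoint, and likewise for $Q$. Then $h$ takes the same value whichever cell is used. Alternatively, one can note that $\gamma$ meets the boundary lines on a set whose contribution to the integral is unaffected by the choice, because $h$ is bounded there and the integral is over a continuous integrand once continuity is established.

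Finally, I would remark that the index conventions in the cell labelling ($i$ ranging over pieces of $Q$ versus $P$) should be read consistently with the pairing $P_i$ versus $Q_j$. This is a notational matter that I would fix at the start of the proof.
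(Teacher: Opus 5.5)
Your proposal is correct and is essentially the paper's argument: the paper likewise identifies $(\alpha,\beta)$ with $\gamma$, notes $\|\gamma'(z)\|_2=\sqrt{\alpha'(z)^2+\beta'(z)^2}$, and reads off the line integral, leaving implicit the pointwise identity $h(\gamma(z))=\norm{P(\alpha(z))-Q(\beta(z))}_2^2$ that you make explicit. For cell boundaries, rely on the continuity argument; the ``negligible set'' alternative does not stand on its own, because a monotone path can run along a grid line for positive length.
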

This lemma essentially follows from definition but we write a proof for completeness.
\begin{proof}
    Recall Definition  of $d_{CDTW}(P,Q)$:
\[
d_{CDTW}(P,Q) = 
        \inf_{(\alpha,\beta) \in \Gamma(p,q)} \left( \int_0^1 ||P(\alpha(z)) - Q(\beta(z))||_2^2 \cdot \sqrt{\alpha'(z)^2 + \beta'(z)^2} \cdot dz \right)^{\frac{1}{2}} . \\
\]
Let $\gamma = (\alpha,\beta) \in \Gamma(p,q)$. Then $\gamma(0) = (\alpha(0), \beta(0)) = (0,0)$ and $\gamma(1) = (\alpha(1),\beta(1)) = (p,q)$. We can see that $\gamma(z)$ is a curve starting at $(0,0)$, ending at $(p,q)$, non-decreasing in both its $x$- and $y$-coordinates. Now, consider the integral of $h(\cdot)$ along the curve $\gamma$. This line integral $\int_\gamma h(z) \cdot dz$ is:

\[
    \int_\gamma h(z) \cdot dz = \int_0^1 h(\gamma(z)) \cdot ||\gamma'(z)||_2 \cdot dz.
\]
Hence,
\[
d_{CDTW}(P, Q) =  \inf_{\gamma \in \Gamma(p,q)} \left( \int_0^{1} h(\gamma(z)) \ ||\gamma'(z)||_2 \  dz \right)^{\frac{1}{2}}.  
\]
\end{proof}

Later in the paper we will need to have a running cost function which gives us the CDTW distance for the pair $(p,q)$. We define this as follows.
\begin{dfn}
For $(x,y) \in R$ let the arc-length of $P$ between $P(0)$ to $P(x)$ be $p_x$ and let the arc-length of $Q$ from $0$ to $y$ be $q_y$. Then, we define
\[ C(x,y) = \inf_{\Gamma(p_x,q_y)} \left( \int_{0}^1 h(\gamma(t)) \; \norm{\gamma'(t)} \; dt \right)^{\frac{1}{2}} \] 
\end{dfn}

We optimize by finding the best path $\gamma$ in the parameter region. However, there may be several paths which yield the CDTW distance. So, to make the phrase ``optimal path'' meaningful we come up with an arbitrary tie breaking rule.

\begin{dfn}  \label{tiebreakingrule}
    An optimal path is a path $\gamma = (\alpha, \beta)$ in the parameter space such that the cost function evaluated along $\gamma$ is minimum. If multiple curves minimize the cost function, the optimal path $\gamma^*$ is chosen to be the one that maximizes the area under the curve in the parameter space, defined mathematically as maximizing the integral $\int \beta(z) \alpha'(z) \, dz$ (the ``highest'' one).
\end{dfn}

\begin{lemma}
    Optimal path from $(0,0)$ to $(x,y) \in R$ is unique for all $(x,y)$.
\end{lemma}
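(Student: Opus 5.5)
The plan is to prove uniqueness by showing that the family of optimal paths from $(0,0)$ to $(x,y)$ has a pointwise ``upper envelope'' which is itself optimal. The tie-breaking rule in \Cref{tiebreakingrule} then selects exactly this envelope. I would work with the line-integral formulation of the preceding lemma, writing the cost of $\gamma$ as $\int_\gamma h\,ds$. This quantity is invariant under reparametrization, so a path can be identified with its image, a monotone curve in $R$.

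\textbf{Step 1 (optimal paths exist).} Monotone curves from $(0,0)$ to $(x,y)$ have length at most $x+y$. After arc-length reparametrization they form an equicontinuous, uniformly bounded family, so by Arzel\`a--Ascoli every minimizing sequence has a uniformly convergent subsequence, and monotonicity passes to the limit. The functional $\gamma\mapsto\int_\gamma h\,ds$ with $h\ge 0$ continuous on each cell is lower semicontinuous under uniform convergence of rectifiable curves with bounded length; this is the standard lower semicontinuity of length-type functionals. Hence the infimum is attained. The limit curve may fail to be differentiable, but it is Lipschitz. I would either enlarge the admissible class to Lipschitz monotone curves, which does not change the infimum by density, or invoke the regularity result \Cref{optimalpath}.

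\textbf{Step 2 (the max of two optimal paths is optimal).} Let $\gamma_1,\gamma_2$ be optimal. Regard each as the graph of a monotone relation, or after reparametrization as a monotone staircase-type curve. Define $\gamma_{\max}$ as the upper envelope and $\gamma_{\min}$ as the lower envelope of the two curves. Both are again monotone curves from $(0,0)$ to $(x,y)$. Between consecutive crossing points, $\gamma_{\max}$ and $\gamma_{\min}$ consist of arcs of $\gamma_1$ and $\gamma_2$ swapped, so
\[ \mathrm{cost}(\gamma_{\max})+\mathrm{cost}(\gamma_{\min}) = \mathrm{cost}(\gamma_1)+\mathrm{cost}(\gamma_2). \]
Since neither envelope can beat the optimum, both must be optimal. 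For a single crossing this is just a cut-and-paste argument: if $\gamma_1$ and $\gamma_2$ meet at a point $z$, the subpaths to $z$ have equal cost, because otherwise swapping them would improve one of the paths.

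\textbf{Step 3 (pass to a supremum).} Take a sequence of optimal paths whose areas $\int\beta\,d\alpha$ approach the supremum $A^*$ over optimal paths. Replace it by the running envelopes, which are optimal by Step 2 and increasing. They converge monotonically to a limit curve $\gamma^*$, which is optimal by the compactness and lower semicontinuity of Step 1 and has area $A^*$. Now suppose $\gamma'$ is another optimal path with area $A^*$. Then $\max(\gamma^*,\gamma')$ is optimal with area at least $A^*$, and the area is strictly greater unless $\gamma'$ lies below $\gamma^*$ everywhere. By the symmetric argument, $\gamma^*$ lies below $\gamma'$. Two monotone curves with the same endpoints that each lie below the other coincide as point sets, hence as paths up to reparametrization. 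This gives uniqueness.

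The main obstacle is making Step 2 rigorous. Monotone curves can overlap along whole segments or cross infinitely often, and the ``vertical'' parts of a monotone curve are not captured by a function $\beta(\alpha)$. I would handle this by describing each monotone curve through its region below, $\{(s,t): t\le \text{curve at } s\}$, which is a closed down-right set. The envelopes then correspond to the union and the intersection of these regions. The cost identity holds because the boundary arcs of the union and of the intersection together use each boundary arc of the original two regions exactly once, and the line integral is additive over arcs.
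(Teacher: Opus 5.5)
Your argument is correct, and its core is the same as the paper's: the upper envelope of two distinct optimal paths is again optimal and encloses strictly more area, which contradicts the tie-breaking rule of Definition~\ref{tiebreakingrule}. You also supply what the paper omits. Steps 1 and 3 prove that an optimal path exists and that an area-maximizer among optimal paths exists. The identity $\mathrm{cost}(\gamma_{\max})+\mathrm{cost}(\gamma_{\min})=\mathrm{cost}(\gamma_1)+\mathrm{cost}(\gamma_2)$ justifies the paper's unproved claim that a path spliced from two optimal paths costs no more, even when the paths overlap or cross infinitely often.
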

\begin{proof}
    Suppose for the sake of contradiction that there exist two distinct optimal paths, $\gamma_1$ and $\gamma_2$, that both minimize the cost function and maximize the area integral as in \Cref{tiebreakingrule}. Since they are distinct continuous and non-decreasing paths, there must exist a region where one path is strictly above the other. Let $\gamma_{\max}$ be the path defined by the pointwise maximum (upper envelope) of $\gamma_1$ and $\gamma_2$. The path $\gamma_{\max}$ will have a cost less than or equal to the minimum cost, because it consists entirely of segments of the two optimal paths meeting at their intersection points, its accumulated cost does not exceed the minimal cost. Thus it also an optimal path. Furthermore, the area under $\gamma_{\max}$ is strictly greater than the area under $\gamma_1$ and $\gamma_2$, contradicting the assumption that $\gamma_1$ and $\gamma_2$ maximized the area integral. Thus exactly one optimal path exists, and is thus unique, in the parameter space of two curves $P$ and $Q$.
\end{proof}

Finally, we define the concepts of a ``lower envelope" and a ``cumulative minimum" to be used later.

\begin{dfn}[Lower Envelope]
    The lower envelope of a set of functions is the function that results by taking, at each point on the $x$-axis, the point on at least one of the functions that has the least $y$-coordinate. See Figure 2 for an example. Note that the functions for which we apply the lower envelope below are all always nonnegative, so we needn't deal with positive/negative $y$-coordinates.
\end{dfn}
\begin{figure}[h]
\centering
\begin{tikzpicture}
\begin{axis}[
    axis lines = middle,
    xlabel = $x$,
    ylabel = $y$,
    xmin=0, xmax=5,
    ymin=0, ymax=6,
    xtick=\empty, ytick=\empty,
    legend pos=outer north east,
    width=8cm, 
    height=6cm 
]
\addplot [
    domain=0:5, 
    samples=100, 
    color=blue,
    thick,
    opacity=0.3
] {0.5*(x-2)^2 + 1.5};
\addlegendentry{$f(x)$}

\addplot [
    domain=0:5, 
    samples=100, 
    color=green!60!black,
    thick,
    opacity=0.3
] {-0.8*x + 5};
\addlegendentry{$g(x)$}

\addplot [
    domain=0:5, 
    samples=100, 
    color=orange,
    thick,
    opacity=0.3
] {sin(deg(x*2)) + 3};
\addlegendentry{$h(x)$}

\addplot [
    domain=0:5, 
    samples=300,
    color=red,
    ultra thick,
    dashed
] {min(min(0.5*(x-2)^2 + 1.5, -0.8*x + 5), sin(deg(x*2)) + 3)};
\addlegendentry{Lower Envelope}

\end{axis}
\end{tikzpicture}
\caption{Example of a lower envelope.}
\end{figure}

\begin{dfn}[Cumulative Minimum]
    The cumulative minimum of a function $f(x)$ over an interval $[a,b]$ is the non-increasing function $g(b) = \inf\{f(x): a \le x \le b\}$, representing the lower value achieved by $f$ in the interval.
\end{dfn}

\subsection{Structure of the Valley}

The points where the height function vanishes play a special role, aptly named the ``valley''. 
\begin{dfn}
The collection of points $(x,y) \in R$ such that $h(x,y)=0$ is called the valley. 
\end{dfn}
Now suppose $(x,y)$ is in cell $(i,j)$ and $P_i(x)=(x,u(x))$ and $Q_j(y)=(y,v(y))$ where $u,v$ are univariate polynomials. We then have the following description: $(x,y)$ is in the valley if and only if 
\[ h(x,y) = (u(x)-v(y))^2 + (x-y)^2 = 0  \Leftrightarrow x=y \; \text{and} \; u(x)-v(x) = 0 \]
This means the valley can come in two shapes:
\begin{itemize}
    \item $x-y | u(x)-v(y)$ or equivalently $u(x)-v(x)$ is the zero polynomial. In this case, the valley is the $x=y$ line in the cell $(i,j)$. See Figure 3A.
    \item $x- y$ is not a common factor. In this case, the valley is depicted by solutions of $u(x)-v(x)=0$. The valley is the collection of at most $d$ points $(x,x)$ where $x$ satisfies $u(x)-v(x)=0$. See Figure 3B.
\end{itemize}
We can solve the equation in the second case in $O(d^2 \tau)$ time and it has at most $d$ many solutions (see section 1.2 of \cite{Ergur_2022} for a synopsis of current algorithms).

\begin{figure}[h]
  \centering
  \begin{subfigure}[b]{0.48\textwidth}
    \centering
    \begin{tikzpicture}[scale=0.6]
      \draw[step=1cm, gray!40, very thin] (0,0) grid (7,4);
      \draw[thick] (0,0) rectangle (7,4);
      \draw[->, thick] (0,0) -- (7.8,0);
      \draw[->, thick] (0,0) -- (0,4.8);
      \draw[thick] (7, 0.1) -- (7, -0.1) node[below=3pt] {$p$};
      \draw[thick] (0.1, 4) -- (-0.1, 4) node[left=3pt] {$q$};
      \draw[blue, ultra thick] (0,0) -- (4,4);
      \node[blue, font=\small\bfseries, align=center] at (3.5, 5.5) {The Valley\\(line $x=y$)};
      \draw[->, blue, thick] (3.5, 4.9) -- (2.5, 2.7);
    \end{tikzpicture}
    \caption{Continuous valley segment.}
  \end{subfigure}
  \hfill
  \begin{subfigure}[b]{0.48\textwidth}
    \centering
    \begin{tikzpicture}[scale=0.6]
      \draw[step=1cm, gray!40, very thin] (0,0) grid (7,4);
      \draw[thick] (0,0) rectangle (7,4);
      \draw[->, thick] (0,0) -- (7.8,0);
      \draw[->, thick] (0,0) -- (0,4.8);
      \draw[thick] (7, 0.1) -- (7, -0.1) node[below=3pt] {$p$};
      \draw[thick] (0.1, 4) -- (-0.1, 4) node[left=3pt] {$q$};
      \filldraw[red] (0.8, 0.8) circle (3pt);
      \filldraw[red] (1.7, 1.7) circle (3pt);
      \filldraw[red] (3.2, 3.2) circle (3pt);
      \draw[gray, dashed, thin] (0,0) -- (4,4);
      \node[red, font=\small\bfseries, align=center] at (3.5, 5.5) {The Valley\\(discrete points)};
      \draw[->, red, thick] (3.5, 4.9) -- (2.6, 2.8);
    \end{tikzpicture}
    \caption{Discrete valley points.}
  \end{subfigure}
  \caption{Structure of valley per-cell.}
\end{figure}

\subsection{Optimal Path Disjointness}
Here we note an observation to be used later.
\begin{lemma} \label{path_spatiality}
    Two optimal paths with different starting points or different end points do not intersect.
\end{lemma}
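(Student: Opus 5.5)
We argue by contradiction, using the uniqueness lemma for optimal paths (the tie-breaking rule of \Cref{tiebreakingrule}) together with the observation that subpaths of optimal paths are optimal. The interpretation fixed throughout is the one the paper intends for its propagation scheme. An optimal path is the unique tie-broken optimal path from its start point to its end point. Two such paths with distinct start points or distinct end points ``intersect'' if they share a point $z$ at which they meet and then separate, i.e.\ they share a point other than the forced common portion dictated by uniqueness. We aim to show that no such shared point can exist.

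\textbf{Step 1 (subpath optimality).} If $\gamma$ is the optimal path from $a$ to $b$ and $z$ lies on $\gamma$, then the restrictions $\gamma|_{a\to z}$ and $\gamma|_{z\to b}$ are the optimal paths from $a$ to $z$ and from $z$ to $b$. Cost is additive along concatenations, since the integral of $h\,\|\gamma'\|$ splits. So if a restriction were beaten in cost, the whole path would be beaten. If a restriction were merely tied but lost the tie-break, replacing it with the winner would give a path of equal cost with larger area $\int \beta\alpha'$, because the area integral is also additive over the pieces. Either case contradicts the optimality of $\gamma$.

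\textbf{Step 2 (exchange at a common point).} Let $\gamma_1$ run from $a_1$ to $b_1$ and $\gamma_2$ from $a_2$ to $b_2$, with $(a_1,b_1)\neq(a_2,b_2)$, and suppose both pass through a point $z$. By Step 1 and the uniqueness lemma applied to the pair $(z,b_i)$, the path $\gamma_i|_{z\to b_i}$ is \emph{the} optimal path from $z$ to $b_i$, and $\gamma_i|_{a_i\to z}$ is \emph{the} optimal path from $a_i$ to $z$. Now take the case $a_1\neq a_2$ and $b_1=b_2$. Uniqueness forces the two paths to coincide from $z$ onward. Therefore any common point forces the paths to merge for the rest of their course. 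The case $b_1\ne b_2$ with $a_1=a_2$ is symmetric and forces the paths to agree up to $z$. When both endpoints differ, the two paths exchange tails through $z$: the concatenations $\gamma_1|_{a_1\to z}\cdot\gamma_2|_{z\to b_2}$ and $\gamma_2|_{a_2\to z}\cdot\gamma_1|_{z\to b_1}$ are, by Step 1, the optimal paths for $(a_1,b_2)$ and $(a_2,b_1)$.

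\textbf{Step 3 (main obstacle: ruling out the intersection itself).} This is where I expect the real difficulty. Steps 1 and 2 only show that meeting paths share a common segment; they do not by themselves rule out a meeting. To finish, I would use the tie-breaking rule as the ``highest'' selection together with the upper-envelope argument from the uniqueness lemma. If $\gamma_1$ and $\gamma_2$ share a point and differ on either side of it, form the upper envelope of the two paths on the overlapping portion of the parameter domain. This envelope has no larger cost and strictly larger area, which contradicts the tie-breaking rule unless the two paths lie one above the other throughout. The delicate point is making this rigorous when the endpoints differ, since the envelope does not have the right endpoints. I would handle it by applying the envelope argument only between consecutive common points, where the endpoints agree. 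This shows that any contact between the paths is a single ordered contact, so the paths do not cross, which is what the dynamic program needs.
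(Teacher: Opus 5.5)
There is a genuine gap, and it comes from the reading you fixed at the outset. If each path is just the tie-broken optimal path between its own endpoints, with arbitrary endpoint pairs, then Step~3 cannot be finished, because the conclusion is false. In a unit cell, let $\gamma_1$ run from $(0.2,0)$ to $(0.9,1)$ and $\gamma_2$ from $(0.8,0)$ to $(0.85,1)$. Then $\gamma_2$ separates the cell, and $\gamma_1$ starts on one side of it and ends on the other, so the two monotone paths meet and cross whatever $h$ is.

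What your consecutive-common-points argument actually shows is only that the common set of the two paths is a single contiguous piece. You do not need the envelope for this: Step~1 plus uniqueness gives it directly, since two tie-broken optimal paths between the same pair of points coincide. But it says nothing about the sides from which the paths enter and leave that piece, so ``single ordered contact, hence no crossing'' does not follow. The tail exchange in Step~2 is also unjustified. Step~1 gives optimality of restrictions, not of concatenations, and the optimal path from $a_1$ to $b_2$ need not pass through $z$ at all. An exchange at $z$ only bounds the \emph{sum} of the two exchanged costs by the sum of the original ones.

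The missing idea is the common source. In the paper, and in the way the lemma is used to order candidate branches along $A_k$, the paths are pieces of optimal paths from the origin $(0,0)$ that carry accumulated cost. A path's starting point is where it enters the cell, and the uniqueness lemma is stated precisely for paths from the origin. With that reading your own tools finish in one line, and this is the paper's argument. Suppose $\gamma_1$ and $\gamma_2$ meet at $Z_0$. By your Step~1, both prefixes are optimal paths from $(0,0)$ to $Z_0$, so by the uniqueness lemma they coincide. This is the paper's ``either the path $\gamma_1$ followed to arrive to $Z_0$ is optimal or the path $\gamma_2$ followed \dots\ (and not both)''. In particular both paths enter the cell at the same point, contradicting $X_1\neq X_2$.

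This covers the case of distinct starting points, which is all the paper's proof treats. With a common start, two paths share at least that point. In the tree of optimal paths from the origin they may share a whole prefix before branching. So the ``different end points'' clause can only mean that branches of this tree never cross, not that they are disjoint.
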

\begin{proof}
Suppose $\gamma_1,\gamma_2$ are two optimal paths inside a cell. $\gamma_1$ starts from $X_1$ and ends at $Z_1$ and $\gamma_2$ starts at $X_2$ and ends at $Z_2$ where $X_1 \neq X_2$. We claim $\gamma_1$ and $\gamma_2$ do not intersect. To prove the claim suppose $\gamma_1$ and $\gamma_2$ first intersect at point $Z_0$. Then, either the path $\gamma_1$ followed to arrive to $Z_0$ is optimal or the path $\gamma_2$ followed to arrive to $Z_0$ (and not both). This gives a contradiction and establishes the claim.
\end{proof}

\subsection{Subdivision of Boundaries with Elimination Theory} \label{realag}
Our optimal cost computations will essentially be a polynomial optimization task. This follows from the definition of our height function.  In this section we will derive some general techniques for subdividing input and output boundaries under polynomial cost functions. The main purpose of this section is to introduce basic tools from real algebraic geometry that may not be familiar to some of our readers \cite{basu2006algorithms}.

Suppose for $s \in [0,1]$ and $t \in [0,1]$ the optimal cost of traveling from $s$ to $t$ is given by a polynomial $f(s,t)$ of degree $d$. For a given $t_0$ how do we find $\operatorname*{argmin}_{s}f(s,t_0)$?  We start by identifying the critical points $t$ where there is an abrupt jump in the location of minimizer $s$. These jumps happen when two branches of minimizers collide.
\begin{enumerate}
\item The algebraic expression for this is the following:
\[ \frac{\partial}{\partial s} f(s_1 ,t) = 0 \; , \; \frac{\partial}{\partial s} f(s_2 ,t) = 0 \; , \; f(s_1,t) - f(s_2,t) = 0.\]
The last equation is divisible by $s_1-s_2$ and since $s_1 \neq s_2$ we factor that part out. 
\item We now have three equations of degree $d-1$. We will eliminate $s_1$ and $s_2$ and obtain an equation in only $t$. We first compute the resultant of first and last equation to eliminate $s_1$. The resultant, denoted $R(s_2,t)$, has degree at most $(d-1)^2$. Now we compute the resultant of $R(s_2,t)$ with the second equation; this gives a polynomial $Q(t)$ in $t$ with degree at most $(d-1)^3$.
\item We solve $Q(t)$ to compute the critical points $t$.
\end{enumerate}

In the second phase of our computation we subdivide $[0,1]$ using the $(d-1)^3$ critical points that are computed. This stratum has less than $d^3$ many critical points and  less than $d^3$ many open intervals in it. Denote these intervals as $I_1,I_2,\cdots,I_k$. For every $t \in I_j$, the minimizer $s$ is unique; let's denote $\phi(t)$ to be the map that sends $t$ to minimizer $s$. Note that for any $t \in I_k$, the map $\phi(t)$ satisfies
\[  \frac{\partial }{\partial s} f(\phi(t),t) = 0.\]
Moreover, the map $\phi(t)$ is differentiable since the interval contains no critical points. On the critical points, the map $\phi$ remains continuous. This implies that the image of $I_k$ under $\phi$ is an interval.

\section{The Shape of an Optimal Path} \label{optimalpath}
In this section we'll characterize the shape of optimal paths using Pontryagin's Maximum Principle (\cite{pontryagin1962}, \cite{liberzon2011calculus}). We desire to find a curve $\gamma$ such that
\begin{itemize}
\item $\gamma$ is non-decreasing in both $x$ and $y$ coordinates
\item $\gamma(0)=(0,0)$ and $\gamma(1) = (p,q)$ where $p$ and $q$ are the lengths of the polynomial curves.
\item $\gamma$ minimizes the cost function $\int_{0}^1 h(\gamma(t)) \norm{\gamma'(t)}_2 \; dt $
\end{itemize}
Here the integral being between $0$ and $1$ will not effect any of our arguments and can be easily adjusted to other start-end points. The first condition gives us two constraints; both coordinates of $\gamma'(t)$ needs to be non-negative all the time. So, we define the following augmented Hamiltonian:
\[  H(t,z, u, p) = h(z)  \norm{u}_2  + \langle p , u \rangle - \lambda_1 u_1 - \lambda_2 u_2  \]
where $z(t)=\gamma(t)$ and $u(t) =\gamma'(t) $. 

Here $\lambda_1(t) > 0$ whenever $u_1(t) < 0$ and $\lambda_1(t) = 0$ whenever $u_1(t) \geq 0$. $\lambda_2(t)$ and $u_2(t)$ observes the same pattern. We say $\lambda_i$ is active when we have $\lambda_i \neq 0$. Now we write down Hamilton's equations:
\[  u = H_p  \;, \;   p' = - H_z = - \norm{u}_2 \nabla_z h(z),    \]
where $\nabla_z h(z)$ is defined as long as the height function is differentiable. Due to Pontryagin's Maximum Principle we must have 
\[  0 = H_u =h(z) \frac{u}{\norm{u}_2} + p - (\lambda_1 ,\lambda_2) \]
Thus
\[  p =  (\lambda_1 , \lambda_2 ) -  h(z) \frac{u}{\norm{u}_2}  \Rightarrow p' = (\lambda_1', \lambda_2') - \frac{d}{dt} \left( h(z) \frac{u}{\norm{u}_2}  \right) \]

So, we have
\begin{equation} \label{optimalityequation}
  - \norm{u}_2 \nabla_z h(z)  =  (\lambda_1', \lambda_2') - \frac{d}{dt} \left( h(z) \frac{u}{\norm{u}_2} \right)  =  (\lambda_1', \lambda_2')  -  h(z) \frac{d}{d t} \left( \frac{u}{\norm{u}_2} \right) - h'(z) \frac{u}{\norm{u}_2}
\end{equation}

We note that 
\[ h'(z) = \nabla_z h(z) u^T  \Rightarrow  h'(z) \frac{u}{\norm{u}_2} =   \nabla_z h(z) \norm{u}_2 \]
Using this inside \Cref{optimalityequation} we have
\begin{equation} \label{generalmain}
   h(z) \frac{d}{d t} \left( \frac{u}{\norm{u}_2} \right) =   (\lambda_1', \lambda_2') 
 \end{equation}
 
Now we will do case-by-case analysis: For the case of both $\lambda_1, \lambda_2$ being inactive, that is for $u_1 \geq 0$ and $u_2 \geq 0$ this gives us the following:
\begin{equation}  \label{positive}
  h(z) \frac{d}{d t} \left( \frac{u}{\norm{u}_2} \right) = (0,0)
\end{equation} 
Which implies,
\begin{equation} \label{hmm}
\forall t  \;  \text{either} \; h(\gamma(t))=0 \; \text{or} \;  \frac{d}{d t} \left( \frac{u}{\norm{u}_2} \right) = (0,0)
\end{equation}
This means either an optimal curve $\gamma$ is included in the zero set
\[  Z(h) := \{ z \in \mathbb{R}^2 : h(z) = 0   \} \]
or 
\[ \frac{d}{d t} \left( \frac{u}{\norm{u}_2} \right) = (0,0) \]
On $Z(h)$ the cost is zero by definition, so the curve included in $Z(h)$ does not impact the cost. This is the collection of points that we called ``valley''.  If the curve is not included in $Z(h)=0$, then either $\gamma_1'(t) =0$ (vertical line) or $\gamma_2'(t)=0$ (horizontal line) or from \ref{hmm} we have we have
 \[  (0,0) = \frac{d}{d t} \left( \frac{u}{\norm{u}_2} \right)  = \frac{d}{d t} \left( \frac{\gamma_1'(t)}{\norm{\gamma'(t)}_2} , \frac{\gamma_2'(t)}{\norm{\gamma'(t)}_2} \right)  \] 
  
\begin{equation*}
(0,0)  = \left( \frac{\gamma_1''(t)}{\norm{\gamma'}_2} - \frac{\gamma_1'(t) (\gamma_1'(t) \gamma_1''(t) + \gamma_2'(t) \gamma_2''(t) )}{\norm{\gamma'}_2^3 } ,  \frac{\gamma_1''(t)}{\norm{\gamma'}_2} - \frac{\gamma_2'(t) (\gamma_1'(t) \gamma_1''(t) + \gamma_2'(t) \gamma_2''(t) )}{\norm{\gamma'}_2^3 }   \right)
\end{equation*}  
Note that   $\gamma_1'(t) \gamma_1''(t) + \gamma_2'(t) \gamma_2''(t)  = \frac{d}{dt} \left( \frac{1}{2} \norm{\gamma'(t)}_2^2 \right)$. So, we have
\begin{equation*} 
  \frac{\gamma_1''(t)}{\norm{\gamma'(t)}}  =   \frac{\gamma_1'(t)}{\norm{\gamma'(t)}^{3}} \frac{d}{dt} \left( \frac{1}{2} \norm{\gamma'(t)}^2 \right) \; \text{and} \;  \frac{\gamma_2''(t)}{\norm{\gamma'(t)}}  =   \frac{\gamma_2'(t)}{\norm{\gamma'(t)}^{3}} \frac{d}{dt}  \left( \frac{1}{2} \norm{\gamma'(t)}^2 \right)
\end{equation*}
Written more concisely, we have
\begin{equation}\label{hmmm}
\frac{\gamma_1''(t)}{\gamma_1'(t)} = \frac{\gamma_2''(t)}{\gamma_2'(t)} = \frac{1}{\norm{\gamma'(t)}^2} \frac{d}{dt} \left( \frac{1}{2} \norm{\gamma'(t)}^2 \right)
\end{equation}
We denote the curvature by $\kappa(t)$, and \ref{hmmm} yield the following
\begin{equation}
\kappa(t) = \frac{|\gamma_1'(t) \gamma_2''(t) -  \gamma_2'(t) \gamma_1''(t)|}{\norm{\gamma'(t)}^{\frac{3}{2}}} = 0
\end{equation}
In this case the optimal curve has to be a line.   

For the cases where only one constraint is active, that is, in the cases where $\gamma_1'(t)=0 , \gamma_2'(t) >0$ or $\gamma_1'(t) >0, \gamma_2'(t)=0$ the curve is a vertical or horizontal line. We analyze the case $\gamma_1'(t) >0, \gamma_2'(t)=0$ and the other case is analogous: from \ref{generalmain} we have
\begin{equation}
h(z) \frac{d}{dt} \left( \frac{\gamma_1'(t)}{\gamma_1'(t)} , 0   \right) = h(z) (0,0)  = (0, \lambda_2')
\end{equation}
which is satisfied vacuously.
\\ \\
Thus an optimal curve $\gamma$ consists of pieces that are either a line segment or it is included in the valley $Z(h)$, and in both cases we have $\gamma_1(t) \geq 0$ and $\gamma_2(t) \geq 0$ for every $t$.

\begin{rem}
The proof above works directly for the height function  
\[ h(x,y)= \sqrt{(P(x)-Q(y))^2 + (x-y)^2} \] with the only difference being the loss of differentiability when path touches the valley. Since the proof allows change of direction when the path touches the valley, this does not change anything in the conclusion.  For the cases $h(x,y)= \left( |P(x)-Q(y)|^p + |x-y|^p \right)^{\frac{1}{p}}$ with odd $p$, differentiability can be lost either when $P(x)=Q(y)$ or $x=y$. The conclusion of the proof for such $\ell_p$ norms would be that the optimal path is a line segment between joins and those joins would located at the intersection of the path either with $P(x)=Q(y)$ or $x=y$.
\end{rem}

\begin{rem} \label{threeoptions}
 In summary, an optimal path inside a cell can have three shapes: 
 \begin{enumerate}
     \item A line segment that never changes its direction until touching the output boundary,
     \item a line segment that travels towards the valley, arrives to the valley only  to realize the valley is the entire diagonal $x=y$, line segment travels inside the valley and leaves it an optimal point changing its slope,
     \item a line segment that travels towards the valley where the valley is a collection of discrete points, enters the diagonal $x=y$ direction from one of those points and leaves with a different direction at another valley point.
 \end{enumerate}
\end{rem}
\begin{figure}[h] 
  \centering
  
  \begin{subfigure}[b]{0.3\textwidth}
    \centering
    \begin{tikzpicture}[scale=0.8]

      \draw[step=1cm, gray!50, very thin] (0,0) grid (4,3);
      \draw[thick] (0,0) rectangle (4,3);
      
      \draw[->, thick] (0,0) -- (4.5,0) node[right] {$x$};
      \draw[->, thick] (0,0) -- (0,3.5) node[above] {$y$};
      \node[anchor=north] at (4,0) {$p$};
      \node[anchor=east] at (0,3) {$q$};

      \draw[red, thick] (0, 1) -- (4, 2);
    \end{tikzpicture}
    \caption{Type  1}
  \end{subfigure}
  \hfill
  \begin{subfigure}[b]{0.3\textwidth}
    \centering
    \begin{tikzpicture}[scale=0.8]
      \draw[step=1cm, gray!50, very thin] (0,0) grid (4,3);
      \draw[thick] (0,0) rectangle (4,3);
      
      \draw[->, thick] (0,0) -- (4.5,0) node[right] {$x$};
      \draw[->, thick] (0,0) -- (0,3.5) node[above] {$y$};
      \node[anchor=north] at (4,0) {$p$};
      \node[anchor=east] at (0,3) {$q$};

      \draw[gray, thin] (0, 1) -- (2, 3);
      \draw[red, thick] (0.5, 0) -- (1, 2) -- (1.5, 2.5) -- (4, 2.67);
    \end{tikzpicture}
    \caption{Type 2}
  \end{subfigure}
  \hfill
  \begin{subfigure}[b]{0.3\textwidth}
    \centering
    \begin{tikzpicture}[scale=0.8]

      \draw[step=1cm, gray!50, very thin] (0,0) grid (4,3);
      \draw[thick] (0,0) rectangle (4,3);
      
      \draw[->, thick] (0,0) -- (4.5,0) node[right] {$x$};
      \draw[->, thick] (0,0) -- (0,3.5) node[above] {$y$};
      \node[anchor=north] at (4,0) {$p$};
      \node[anchor=east] at (0,3) {$q$};

      \filldraw[gray] (1, 1) circle (2pt) node[anchor=south, font=\small] {$v_1$};
      \filldraw[gray] (3, 2) circle (2pt) node[anchor=south, font=\small] {$v_2$};
      \draw[gray, dashed, thin] (0.5, 0.75) -- (3.75, 2.375);

      \draw[red, thick] (0.5, 0) -- (1, 1) -- (3, 2) -- (4, 2.3);
      \filldraw[red] (1, 1) circle (2pt);
      \filldraw[red] (3, 2) circle (2pt);
    \end{tikzpicture}
    \caption{Type 3}
  \end{subfigure}
  \caption{Anatomy of possible optimal paths.}
  \label{fig6}
\end{figure}

\section{Comparing CDTW and Fréchet Distance for Algebraic Curves} \label{lower-bound}
Previous sections established the shape of optimal path: it is a piecewise-linear function with joints either on the boundaries or on the valley. A consequence of this is that $h(\gamma(t))$ is a polynomial in $t$ on every linear piece of $\gamma(t)$. Now, we will utilize this to give a lower bound for CDTW distance in terms of Fréchet distance. Recall our assumption that we are given piecewise algebraic curves $P$ and $Q$, where $P, Q$ have $m,n$ pieces respectively, and the polynomials defining these pieces have degree at most $d$ and absolute value any coefficient at most $C$. Rote's algorithm \cite{rote2007computing} (and \cite{Conradi_Driemel_Kolbe_2025} for the decision-version of the problem) can be used to compute Fréchet distance of $P$ and $Q$ in time $O(mn \log(mn))$ time (ignoring dependence on $d$ and $C$). We will use these fast algorithms to get an upper and lower bound for $CDTW(P,Q)$.

\begin{prop}
Let $P$ and $Q$ be two curves as described above with total arc-lengths $p$ and $q$, then we have
\[   \frac{\mathrm{Frechet}(P,Q)^2}{\sqrt{2} d \sqrt{C}} \leq \mathrm{CDTW}(P,Q) \leq \mathrm{Frechet}(P,Q)\sqrt{p+q} \]
Furthermore, if the two curves $P$ and $Q$ do not intersect then we have
\[   \frac{\mathrm{Frechet}(P,Q)}{2\sqrt{3}d} \leq \mathrm{CDTW}(P,Q) \leq \mathrm{Frechet}(P,Q)\sqrt{p+q} \]
\end{prop}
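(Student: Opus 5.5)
The upper bound is the easy half, and I would do it first. Let $\delta=\mathrm{Frechet}(P,Q)$ and fix $\eta>0$. Take a monotone reparametrization pair $(\alpha,\beta)\in\Gamma(p,q)$ with $\|P(\alpha(z))-Q(\beta(z))\|_2\le\delta+\eta$ for all $z$. Such a pair exists after approximating the Fréchet-optimal coupling by a differentiable non-decreasing one, which is standard by density. Then $h(\gamma(z))\le(\delta+\eta)^2$ along $\gamma=(\alpha,\beta)$. Since $\|\gamma'\|_2\le\alpha'+\beta'$, the $\ell_2$ arc length of a monotone path from $(0,0)$ to $(p,q)$ is at most $p+q$. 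Hence
\[ \int_0^1 h(\gamma(z))\|\gamma'(z)\|_2\,dz\le(\delta+\eta)^2(p+q). \]
Taking square roots and letting $\eta\to0$ gives $\mathrm{CDTW}(P,Q)\le\delta\sqrt{p+q}$.

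For the lower bound I would argue by contradiction against the optimal path $\gamma^*$. By \Cref{optimalpath}, $\gamma^*$ is piecewise linear with joints on cell boundaries or in the valley. Suppose first that $\mathrm{CDTW}$ is small. Since $\gamma^*$ is itself a valid Fréchet coupling, some point $z_0=\gamma^*(t_0)$ satisfies $h(z_0)\ge\delta^2$. The idea is that $h$ cannot drop quickly away from $z_0$, so $\gamma^*$ must spend a definite arc length in the region where $h\ge\delta^2/4$. That forces a lower bound on the integral.

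The quantitative tool is a Lipschitz bound on $g=\sqrt h=\|P_i(x)-Q_j(y)\|$. Its gradient is controlled by the speeds of the pieces, and for polynomials of degree $\le d$ with coefficients bounded by $C$ this gives roughly $|\nabla g|\le \sqrt2\,dC$ on the normalized parameter domain, the $\sqrt2$ coming from combining the $x$- and $y$-partials. So along $\gamma^*$, starting at $z_0$, we have $g\ge\delta/2$ for $\ell_2$ arc length at least $L=\delta/(2\sqrt2\,dC)$ in at least one direction, or until the path ends. This yields
\[ \mathrm{CDTW}^2\ge(\delta/2)^2 L\sim \delta^3/(dC). \]
This must then be massaged into the stated form $\delta^2/(\sqrt2 d\sqrt C)$. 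I expect this to need either a different split, for example comparing against the stated constants using $\delta\le$ diameter bounds, or the use of $h$ rather than $g$: $|\nabla h|\le 2g|\nabla g|$ gives a window where $h\ge\delta^2/2$ of length about $\delta/(d\sqrt C)$ in suitable units. The main obstacle is exactly this bookkeeping of constants, together with handling the case where $z_0$ is near the corner $(0,0)$ or $(p,q)$, where the window is one-sided. In the latter case I would use only one direction, which costs a factor of $2$.

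For the non-intersecting case the point is that $h$ is bounded below away from zero uniformly, since the valley is empty. I would exploit the structure from \Cref{optimalpath}: $\gamma^*$ is then a single line segment in each cell, hence globally piecewise linear with joints only on cell boundaries, and $h\circ\gamma^*$ is a polynomial of degree at most $2d$ on each piece. The plan is a Markov/Remez-type inequality. On a segment where the maximum of the nonnegative polynomial $h\circ\gamma^*$ reaches $\delta^2$, its average over that segment is at least $\delta^2/(c\,d^2)$ with $c$ an absolute constant; I would check that $c=12$ is achievable, e.g. via the bound for the integral of a nonnegative polynomial versus its sup. Combining this with a lower bound on segment length yields $\mathrm{CDTW}\ge\delta/(2\sqrt3\,d)$. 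Here the hardest step is ensuring the segment carrying the maximum has length at least a constant in normalized parameters. I would try to handle it by noting that the Fréchet maximum is attained on a whole cell-crossing segment, and rescale pieces so that each cell has unit side.
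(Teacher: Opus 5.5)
Your upper bound is the paper's argument, and the $\eta$-approximation you add is harmless. Neither lower bound, however, is actually proved.

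In the general case, your Lipschitz estimate on $g=\sqrt h$ gives $\mathrm{CDTW}^2$ of order $\delta^3/(dC)$ at best. Even that holds only if $\gamma^*$ runs for arc length $L$ away from $z_0$ in some direction. Your clause ``or until the path ends'' is exactly where the argument breaks: nothing in the proposal bounds the length of $\gamma^*$ near $z_0$ from below, and switching from $g$ to $h$ does not change that. The missing idea is the one the paper uses to get the $\delta^2$ form. Suppose $\gamma^*$ meets the valley at some $t_0'$ before (or after) the maximizer $t_1$. Then $h$ must fall from $h(\gamma^*(t_1))\ge\delta^2$ to $0$ along the path. With Lipschitz constant $d^2C$ for $h$, the portion of $\gamma^*$ between them has arc length at least $\delta^2/(d^2C)$. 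So $\int h\,ds$ over that portion is at least the triangle area $\tfrac12\cdot\delta^2\cdot\delta^2/(d^2C)$, which gives exactly $\mathrm{CDTW}\ge\delta^2/(\sqrt2\,d\sqrt C)$. It is the zero of $h$ that supplies the length; a window around $z_0$ cannot. When $\gamma^*$ does not meet the valley, the paper falls back on the non-intersecting argument.

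In the non-intersecting case, your Remez-type average bound is the paper's computation. Markov's inequality bounds $h''$ along the segment, and Taylor at $t_1$ keeps $h\ge\delta^2/3$ on a window of relative length $1/(4d^2)$, so the average is at least $\delta^2/(12d^2)$. But the step you flag as hardest is left open, and your proposed fix is not legitimate. $\mathrm{CDTW}$ is not invariant under rescaling the parameter cells, because the integrand carries $\|\gamma'\|_2$. Scaling both parameters by $\lambda$ multiplies $\mathrm{CDTW}^2$ by $\lambda$, and a cell-by-cell rescaling is not even a uniform factor.

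In fact the stated hypotheses cannot supply this step at all. Take $P(x)=(x,0)$ and $Q(y)=(y,1)$ with $x,y\in[0,\varepsilon]$, so $d=C=1$. The curves are disjoint with Fr\'echet distance $1$, yet $\mathrm{CDTW}(P,Q)^2=\sqrt2\,\varepsilon\to0$, which violates both claimed lower bounds. The paper's proof tacitly imports the missing normalization when it asserts that the arc length of $\gamma$ inside the rectangle exceeds one (whence $\|\gamma'\|\ge1/(t_2-t_0)$). A correct argument has to state such a length assumption explicitly and carry it through, rather than try to derive it. The same assumption is what makes your ``normalized parameter domain'' Lipschitz constants meaningful.
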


\begin{proof}
The upper bound follows directly from the definition: taking the path $\gamma$ that realizes the Fréchet distance, the height function is bounded by $\text{Frechet}(P,Q)^2$, and integrating this constant over the path length (which is bounded by $p+q$) yields $\text{CDTW}(P,Q) \le \text{Frechet}(P,Q)\sqrt{p+q}$. 

Thus, we focus on the lower bounds. We focus on the case $\text{Frechet}(P, Q) \le 1$ as the other case is easier to prove. Let $\gamma$ be the optimal path that yields $\text{CDTW}(P, Q)$, and let $t_1 = \text{argmax}_t h(\gamma(t))$ be the time where the height is maximized on the path $\gamma$. We have $h(\gamma(t_1)) \ge \text{Frechet}(P, Q)^2$ and $h'(\gamma(t_1)) = 0$. 

Our parameter regime is divided into rectangles where the definition of $h$ changes in every rectangle, and the curve $\gamma$ travels from the bottom-left to the top-right. Suppose the curve $\gamma$ is in the same rectangle as $\gamma(t_1)$ for $t \in [t_0, t_2]$. Using Markov's inequality for polynomials, we have
$$ \max_{t \in [t_0, t_2]} h''(\gamma(t)) \le \frac{4(2d)^4}{3(t_2-t_0)^2} \max_{t \in [t_0, t_2]} h(\gamma(t)) = \frac{4(2d)^4}{3(t_2-t_0)^2} h(\gamma(t_1)).$$ 
We also have a universal upper bound $\max h''(\gamma(t)) \le d^2C$: we are interested in the case where $h(\gamma(t_1)) \le 1$. Using Taylor's theorem, for every $t \in [t_0, t_2]$ wherever $|t - t_1| \le s$ we have
$$ h(\gamma(t)) \ge h(\gamma(t_1)) - \frac{s^2}{2} \max h''(\gamma(t)) \ge h(\gamma(t_1)) - \frac{s^2 4(2d)^4 h(\gamma(t_1))}{6(t_2-t_0)^2}.$$ 
Thus, for every $t$ such that $|t - t_1| \le s$, we have 
$$ h(\gamma(t)) \ge \text{Frechet}(P, Q)^2 \left( 1 - \frac{32d^4}{3(t_2-t_0)^2} s^2 \right).$$ 
Specifically, for $s \le \frac{t_2-t_0}{4d^2}$ and $|t - t_1| \le s$ we have $h(\gamma(t)) \ge \frac{1}{3}\text{Frechet}(P, Q)^2$.
\\
The shape of $\gamma$ for $t \in [t_0, t_2]$ can take one of the following forms:
\begin{enumerate}
    \item $\gamma$ never touches the valley and thus $\|\gamma'(t)\|$ is constant for all $t \in [t_0, t_2]$,
    \item $\gamma$'s last touch to the valley, $t'_0$, is before $t_1$ and $\|\gamma'(t)\|$ is a constant afterwards,
    \item $\gamma$'s first touch to the valley, $t'_1$, is after $t_1$ and $\|\gamma'(t)\|$ is a constant before $t'_1$.
\end{enumerate}

In the first case, $\|\gamma'(t)\| \ge \frac{1}{t_2-t_0} \ge 1$ since the arc-length of $\gamma$ inside the rectangle is more than one. Integrating over the interval where the Taylor bound holds:
$$ \int_{t_0}^{t_2} h(\gamma(t))\|\gamma'(t)\| dt \ge \text{Frechet}(P, Q)^2 \int_{\max\{t_0, t_1 - \frac{t_2-t_0}{4d^2}\}}^{\min\{t_2, t_1 + \frac{t_2-t_0}{4d^2}\}} \left( 1 - \frac{32d^4}{3(t_2-t_0)^2} (t - t_1)^2 \right) \frac{1}{t_2 - t_0} dt $$

$$ \Longleftrightarrow \int_{t_0}^{t_2} h(\gamma(t))\|\gamma'(t)\| dt \ge \text{Frechet}(P, Q)^2 \frac{\min\{t_2, t_1 + \frac{t_2-t_0}{4d^2}\} - \max\{t_0, t_1 - \frac{t_2-t_0}{4d^2}\}}{3(t_2 - t_0)},$$
which evaluates to at least $\frac{\text{Frechet}(P, Q)^2}{12d^2}$. Taking the square root yields:
$$ \text{CDTW}(P, Q) \ge \left( \int_{t_0}^{t_2} h(\gamma(t))\|\gamma'(t)\| dt \right)^{1/2} \ge \frac{\text{Frechet}(P, Q)}{\sqrt{12} d} = \frac{\text{Frechet}(P, Q)}{2\sqrt{3}d}.$$
We note that if the two curves $P$ and $Q$ do not intersect, there is no valley and we are always in this first case. 

In the second case, $t_2 - t_0 \ge t_1 - t'_0 \ge \frac{t_2-t_0}{4d^2}$, since $h(\gamma(t'_0)) = 0$. Note that the Lipschitz constant of $h(x, y)$ on the rectangle is at most $d^2 C$, yielding a minimal spatial distance:
$$ \|\gamma(t'_0) - \gamma(t_1)\| \ge \frac{\text{Frechet}(P, Q)^2}{d^2C} $$ 
Because $h$ drops from $\text{Frechet}(P,Q)^2$ to $0$ with a maximum slope of $d^2C$, the integral of $h$ over this segment is bounded below by the area of a triangle with height $\text{Frechet}(P,Q)^2$ and base $\frac{\text{Frechet}(P,Q)^2}{d^2C}$:
$$ \int_{t'_0}^{t_1} h(\gamma(t))\|\gamma'(t)\| dt \ge \frac{1}{2} \left(\text{Frechet}(P,Q)^2\right) \left( \frac{\text{Frechet}(P,Q)^2}{d^2C} \right) = \frac{\text{Frechet}(P, Q)^4}{2d^2C}.$$
Taking the square root yields the general lower bound:
$$ \text{CDTW}(P,Q) \ge \frac{\text{Frechet}(P,Q)^2}{\sqrt{2} d \sqrt{C}} $$
The proof for the third case is identical.
\end{proof}

\section{Piecewise Quadratic Approximation via Turning Angles} \label{turningangles}

Rote, who computes Fréchet distance between smooth curves, used curvature to create local approximation with circles and off-set curves \cite{rote2007computing}. CDTW distance is in some sense an integral Fréchet distance. Thus, Rote's work suggests the integral of curvature is a good local approximation measure for our purposes. Luckily, integral of curvature is nothing but the total turning angle. We construct piecewise quadratic approximation using Bezier curves and total turning angles as follows.

\begin{enumerate}
    \item For a given algebraic curve $(x, p(x))$ with $x \in I$: Compute critical points of $p(x)$, $p'(x)$, $p''(x)$ on $I$, and subdivide $I$ into $I_1,I_2,\ldots,I_k$ using these critical points.
    \item For each $I_i=[a_i,b_i]$, construct the quadratic Bézier curve $\tilde{p}_k$ with endpoints $C_0 = (a_i,p(a_i))$ and $C_2 = (b_i,p(b_i))$.  $C_1$ is the unique intersection of the tangent lines at $C_0$ and $C_2$.
\end{enumerate}

Note that in every sub-interval the polynomial $p(x)$ is strictly convex or concave, and the curvature does not change sign. This allows us to derive the following distance bound.

\begin{lemma} \label{lem:spatial_bound}
Let $\tilde{p}_k$ and $I_k$ be as above, and let  $\ell_k = \norm{C_2 - C_0}_2$  be the chord-length and  $\Theta_k = \int_{I_k} |\kappa(s)| \, ds < \frac{\pi}{2}$ total turning angle. Then, we have
\[ \sup_{x \in I_k} \norm{\tilde{p}_k(x) - p(x)}_2^2 \le \frac{1}{4} \ell_k^2 \tan\left(\frac{\Theta_k}{2}\right)^2. \]
\end{lemma}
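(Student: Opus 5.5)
We plan to bound both curves by the triangle formed by the chord $C_0C_2$ and the control point $C_1$, and then show that this triangle is thin. Since $p$ is strictly convex or concave on $I_k$ and its total turning is $\Theta_k<\pi/2$, the arc $\{(x,p(x)) : x\in I_k\}$ lies inside the triangle $T=\triangle C_0C_1C_2$. The reason is that a convex arc lies between its chord and its two endpoint tangent lines. The quadratic Bézier curve $\tilde p_k$ is a convex combination of $C_0,C_1,C_2$, so it also lies in $T$. Because the Bézier curve is parametrized over the same interval $I_k$, we compare points with the same parameter $x$. Both points lie in $T$ and are on the same side of the chord, so the quantity to control is the height of $T$ above the chord together with the horizontal misalignment.

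The first key step is to bound the height of $T$. Let $\theta_0,\theta_2$ be the angles the tangents at $C_0$ and $C_2$ make with the chord. For a convex arc, the exterior angle at $C_1$ equals $\theta_0+\theta_2=\Theta_k$. The height of $T$ over the chord is
\[
H=\ell_k\frac{\tan\theta_0\tan\theta_2}{\tan\theta_0+\tan\theta_2}.
\]
For a fixed sum $\theta_0+\theta_2=\Theta_k$, this is maximized at $\theta_0=\theta_2=\Theta_k/2$, which gives $H\le \tfrac12\ell_k\tan(\Theta_k/2)$. We will verify the maximization by concavity of $\log\tan$ on $(0,\pi/2)$, or more directly from the identity $\tan a\tan b/(\tan a+\tan b)=\sin a\sin b/\sin(a+b)$, where $\sin a\sin b$ is maximized at $a=b$ for fixed $a+b$.

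The second step gives the pointwise bound. Both points lie in $T$, so their components perpendicular to the chord differ by at most $H$. What remains is to compare positions along the chord direction. Here we want to show that, for the same parameter $x$, the points $\tilde p_k(x)$ and $p(x)$ have equal (or nearly equal) projection onto the chord. If that holds, then $\norm{\tilde p_k(x)-p(x)}\le H$, and squaring gives the claim $\tfrac14\ell_k^2\tan(\Theta_k/2)^2$.

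I expect this alignment step to be the main obstacle. The Bézier curve's own parametrization is not the graph parametrization. One choice is to reparametrize $\tilde p_k$ as a graph over $x$, which is possible because $\Theta_k<\pi/2$ together with convexity guarantees monotonicity in $x$ on the subinterval. Then the comparison is vertical. I would then argue that the vertical gap is at most the vertical extent of $T$ over the chord, and convert it to the perpendicular height using the chord slope. If that conversion loses a factor, the fallback is to measure distance by the Hausdorff-type quantity $\sup_x \operatorname{dist}(p(x),\tilde p_k)$, which is bounded by $H$ directly because both arcs are convex curves inside $T$ with common endpoints and common endpoint tangents. We would then interpret the statement in that sense.
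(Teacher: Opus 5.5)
Your containment step and your bound on the height $H$ of $T=\triangle C_0C_1C_2$ are exactly the paper's proof. The paper writes the altitude as $\ell_k\sin\alpha\sin\beta/\sin(\alpha+\beta)$ and uses $\cos(\alpha-\beta)\le 1$, which is your maximization at $\theta_0=\theta_2$. It then simply asserts that the distance between $\tilde p_k$ and $(x,p(x))$ is at most this altitude. So the alignment step you flag is skipped in the paper as well.

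It also cannot be supplied, because under either natural reading of $\tilde p_k(x)$ the same-$x$ inequality is false. Take $p(x)=Kx+x^3$ on $I_k=[1,2]$. There $p',p'',p'''$ do not vanish, and $\Theta_k=\arctan(K+12)-\arctan(K+3)<\pi/2$. The tangent lines meet at $x=14/9$ for every $K$. The whole construction commutes with the shear $(x,y)\mapsto(x,y-Kx)$, which preserves $x$-coordinates and vertical gaps. For instance, the B\'ezier point at parameter $\tfrac12$ is $(\tfrac{55}{36},\tfrac{43}{12}+\tfrac{55}{36}K)$, which lies about $0.017$ above $p(\tfrac{55}{36})$ for every $K$. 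On the other hand $\ell_k\approx K$ and $\Theta_k\approx 9/K^2$, so $\tfrac12\ell_k\tan(\Theta_k/2)\approx 9/(4K)$; at $K=200$ this is already about $0.011$.

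Reading $\tilde p_k(x)$ through the linear reparametrization $u=(x-a_k)/(b_k-a_k)$ is worse still. The horizontal offset $\tfrac{55}{36}-\tfrac32$ does not depend on $K$ and is amplified by the slope $K$. So your first option fails precisely where you feared it would. The vertical section of $T$ can have length up to $H\sec\phi=H\ell_k/(b_k-a_k)$, where $\phi$ is the chord's inclination, and that factor (about $K$ here) is genuinely lost.

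Your fallback is the correct statement, and it is what the paper's triangle argument actually proves. To make it rigorous, use $\theta_0+\theta_2=\Theta_k<\pi/2$. Both base angles of $T$ are then acute, and both arcs have tangent directions within $\pi/2$ of the chord direction, so both are graphs over the chord. The line through $(x,p(x))$ perpendicular to the chord therefore meets $\tilde p_k$. Both points lie on the part of that line inside $T$, which has length at most $H$.

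If you need a same-parameter bound downstream, as \Cref{lemma:spacdtwerror} uses with $\sup_x\norm{P(x)-\tilde P(x)}_2$, pair points of $\tilde p_k$ with points of $p$ by this perpendicular matching rather than by equal $x$. With that pairing the stated inequality holds pointwise.
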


We first give a proof by picture, followed by a more formal proof.

\begin{tikzpicture}[scale=1.2, >=latex, font=\sffamily]

    \coordinate (C0) at (0,0);
    \coordinate (C2) at (8,0);
    \coordinate (C1) at (4.7, 3.3); 
    \coordinate (H)  at (4.7, 0); 

    \fill[gray!10] (C0) -- (C1) -- (C2) -- cycle;

    \draw[thick, darkgray] (C0) -- (C1) -- (C2);
    \draw[thick, darkgray] (C0) -- (C2);

    \draw[dashed, thick, black] (C1) -- (H) node[midway, right] {$h_k$};
    
    \draw[darkgray] (4.7, 0.2) -- (4.5, 0.2) -- (4.5, 0);

    \draw[thick, blue] (C0) .. controls (C1) .. (C2) 
        node[pos=0.55, above=8pt, right=2pt, blue] {$\tilde{p}_k(x)$};
        
    \draw[thick, red, dashed] (C0) .. controls (4.7, 1.5) .. (C2) 
        node[pos=0.45, below=6pt, red] {$p(x)$};

    \draw[thick, darkgray] (1.2,0) arc (0:35:1.2) 
        node[midway, right=2pt, above] {$\alpha$};
        
    \draw[thick, darkgray] (6.8,0) arc (180:135:1.2) 
        node[midway, left=2pt, above] {$\beta$};

    \node[below left, darkgray] at (C0) {$C_0$};
    \node[above, darkgray] at (C1) {$C_1$};
    \node[below right, darkgray] at (C2) {$C_2$};

    \draw[<->] (0, -0.6) -- (8, -0.6) node[midway, below] {$\ell_k$};
    \draw[dashed, darkgray] (C0) -- (0, -0.6);
    \draw[dashed, darkgray] (C2) -- (8, -0.6);

    \node[anchor=north west, text width=6cm, align=left] at (0, 4.2) {
    };

\end{tikzpicture}

\begin{proof}
Both the polynomial and the quadratic Bezier approximation  are strictly convex (concave) on $I_k$ and lie entirely within the tangent triangle $\triangle C_0 C_1 C_2$. Let $\alpha = \angle C_1 C_0 C_2$ and $\beta = \angle C_1 C_2 C_0$, note that $\alpha + \beta \le \Theta_k$. The distance between $\tilde{p}_k$ and $(x,p(x))$ is bounded by the altitude $h_k$ of vertex $C_1$ over $\overline{C_0 C_2}$. Note that
\[ \norm{C_1 - C_0}_2 = \ell_k \frac{\sin \beta}{\sin(\alpha + \beta)} \] 
Thus, the altitude is
\[
    h_k = \norm{C_1 - C_0}_2 \sin \alpha = \ell_k \frac{\sin \alpha \sin \beta}{\sin(\alpha + \beta)}.
\]
Applying the identity $\sin \alpha \sin \beta = \frac{1}{2}[\cos(\alpha - \beta) - \cos(\alpha + \beta)]$ and noting that $\cos(\alpha - \beta) \le 1$, we obtain
\[
    h_k \le \ell_k \frac{1 - \cos(\alpha + \beta)}{2 \sin(\alpha + \beta)} = \frac{1}{2} \ell_k \tan\left(\frac{\alpha + \beta}{2}\right).
\]
Finally, note that $\tan(x/2)$ is strictly increasing on $\left(0, \frac{\pi}{2}\right)$ and $\alpha + \beta \le \Theta_k$, giving the claimed bound.
\end{proof}

We will subdivide further (if needed) to make sure the error bound in Lemma \ref{lem:spatial_bound}  is always less than $\varepsilon^2$ for a target $\varepsilon$ on every quadratic piece. Let us first write an easy bound on the turning angle cuts.
\begin{prop}
Let $(x,p(x))$ be a curve where $x \in [0,1]$ and $p$ is a degree $d$ curve. $p$ can be subdivided into at most $\frac{(d-1)\pi}{\arctan(\varepsilon)}$ pieces where every piece has total turning angle at most $\min \{ \frac{\pi}{2} , \arctan(\varepsilon) \}$.
\end{prop}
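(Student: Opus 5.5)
We plan to bound the total turning angle of the graph $x \mapsto (x,p(x))$ on $[0,1]$ and then cut greedily. The tangent direction of the graph at $x$ makes the angle $\theta(x) = \arctan(p'(x))$ with the horizontal axis, and this angle always lies in $(-\pi/2,\pi/2)$. The total turning angle on an interval is the total variation of $\theta$:
\[
\int |\kappa(s)|\,ds = \int |\theta'(x)|\,dx, \qquad \theta'(x) = \frac{p''(x)}{1+p'(x)^2}.
\]
The first step is to split $[0,1]$ at the real roots of $p''$, of which there are at most $d-2$, so we get at most $d-1$ subintervals. On each subinterval $p''$ has constant sign, so $\theta$ is monotone there. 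The turning angle of the subinterval is therefore $|\theta(b)-\theta(a)|$, which is strictly less than $\pi$ because $\theta$ takes values in an open interval of length $\pi$. Summing over the pieces, the total turning angle of the whole curve is less than $(d-1)\pi$. This is also consistent with the first preprocessing step, which cuts at the critical points of $p'$.

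The second step is the cutting. On each monotone subinterval $J$, with turning angle $\Theta_J < \pi$, we use the continuity and monotonicity of $x \mapsto \int_a^x |\theta'|$. Walking along $J$, we cut every time the accumulated angle reaches $\delta := \min\{\pi/2, \arctan(\varepsilon)\}$. Each resulting piece then has turning angle at most $\delta$, and $J$ contributes at most $\lceil \Theta_J/\delta \rceil$ pieces.

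The last step is counting, and this is where we expect the only real care to be needed, because of the ceilings. Since $\arctan(\varepsilon) < \pi/2$ for every finite $\varepsilon$, we have $\delta = \arctan(\varepsilon)$. The strict inequality $\Theta_J < \pi$ gives $\lceil \Theta_J/\delta\rceil \le \lceil \pi/\delta \rceil$. To match the stated bound exactly, we will instead use the strict inequality $\Theta_J < \pi$ to run the greedy cutting with the slightly smaller step $\Theta_J / \lceil \pi/\arctan(\varepsilon)\rceil$. Alternatively, we can observe that $\pi/\arctan(\varepsilon)$ can be taken as $\lceil\cdot\rceil$ in the statement's intended meaning, since the statement is a crude upper bound. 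Either way, each $J$ yields at most $\pi/\arctan(\varepsilon)$ pieces up to rounding. Summing over the at most $d-1$ intervals gives at most $\frac{(d-1)\pi}{\arctan(\varepsilon)}$ pieces, each with turning angle at most $\min\{\pi/2,\arctan(\varepsilon)\}$.
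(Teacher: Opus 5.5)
Your first two steps are sound. They are presumably the ``standard'' argument the paper has in mind, since the paper omits the proof:
the turning angle is the total variation of $\theta=\arctan p'$;
each of the at most $d-1$ intervals $J$ between consecutive roots of $p''$ has $\Theta_J<\pi$;
and greedy cutting with step $\delta=\arctan\varepsilon$ gives $\lceil\Theta_J/\delta\rceil$ pieces on $J$.

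The gap is in the counting step. Your repair, cutting $J$ into pieces of turning angle $\Theta_J/\lceil\pi/\delta\rceil$, produces exactly $\lceil\pi/\delta\rceil$ pieces on $J$. That is precisely the quantity you were trying to avoid, and shrinking the step can never decrease the number of pieces.

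More to the point, no repair is possible, because the unrounded bound is false. Total turning angle is additive over a subdivision, so a curve with total turning angle $\Theta$ needs at least $\lceil\Theta/\delta\rceil$ pieces. A concrete counterexample: take $d=2$, $p(x)=100(x-\tfrac12)^2$ and $\varepsilon=0.1$.
Then $\Theta=2\arctan 100\approx 3.1216>31\arctan(0.1)\approx 3.0897$, so at least $32$ pieces are required.
But $\pi/\arctan(0.1)\approx 31.52$.
More generally, take $p'=A\,T_{d-1}(2x-1)$, with $T_{d-1}$ the Chebyshev polynomial, and let $A\to\infty$. This makes $\Theta$ arbitrarily close to $(d-1)\pi$ for every $d$. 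So whenever $(d-1)\pi/\arctan\varepsilon$ is not an integer, some degree-$d$ polynomial violates the stated bound.

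Your second option, reading the bound with a ceiling, is therefore the right one. You should present it as a correction of the statement, not as matching it ``up to rounding''. There are two ways to state the corrected bound:
\begin{itemize}
\item If you keep the cuts at the roots of $p''$, your argument proves at most $(d-1)\lceil\pi/\arctan\varepsilon\rceil\le\frac{(d-1)\pi}{\arctan\varepsilon}+d-1$ pieces. This is what the paper actually needs, since Lemma~\ref{lem:spatial_bound} uses convexity or concavity of each piece.
\item If you drop that requirement and cut greedily along all of $[0,1]$ (turning angle is additive across inflection points), you get at most $\lceil\Theta/\delta\rceil\le\lceil(d-1)\pi/\arctan\varepsilon\rceil$ pieces. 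This is optimal by the example above. By the strict inequality $\Theta<(d-1)\pi$, it equals the stated bound whenever $(d-1)\pi/\arctan\varepsilon$ is an integer.
\end{itemize}
Either version suffices for the $O(\cdot)$ piece count used later in the paper.

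You also need $d\ge 2$. For $d\le 1$ we have $p''\equiv 0$, so your root count breaks down, and the stated bound is $0$.
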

We skip the proof of this claim as it is standard.  We focus on algorithmic aspects: Our first step is to create an exact computation predicate that ensures the expression $\frac{1}{4} \ell_k^2 \tan\left(\frac{\Theta_k}{2}\right)^2$ does not go above $\varepsilon^2$ threshold. For a given interval $I_k = [a_k,b_k]$ we do this as follows: 
\begin{enumerate}
    \item Compute $m_0:=p'(a_k)$, $m_1:=p'(b_k)$, $\ell_k^2$, $A=1+m_0m_1$, $B=(1+m_0^2)(1+m_1^2)$, $D=(m_0-m_1)^2$.
    \item If $A \leq 0$ this means turning angle is more than $\frac{\pi}{2}$, $I_k$ needs to be recomputed. Assume $A >0$, then 
    \[  \tan\left(\frac{\Theta_k}{2}\right)^2 = \frac{1-\cos(\theta_k)}{1+ \cos(\theta_k)} = \frac{\sqrt{B}-A}{\sqrt{B}+A} = \frac{B+A^2 - 2A\sqrt{B}}{D} \]
    \item Check if  $\ell_k^2(B+A^2) - 4 \varepsilon^2 D > 0$. If not, certification completed.
    \item Assuming $\ell_k^2(B+A^2) - 4 \varepsilon^2 D > 0$, check if
    \[  \left(\ell_k^2(B+A^2) - 4 \varepsilon^2 D\right)^2 <  4A^2B\ell_k^4 \]
\end{enumerate}

Note that for a fixed $a_k$ the error bound is a strictly increasing function of $b_k$. Therefore, we can easily turn this predicate into a greedy bisection algorithm with depth $O(\log(\frac{1}{\arctan(\varepsilon)}))$. 
Suppose we use this piecewise quadratic approximation for computing the CDTW distance, what is the magnitude of error in our computation?

\begin{lemma} \label{lemma:spacdtwerror}
Let $d_{CDTW}(P,Q)$ be the exact continuous dynamic time warping distance between curves $P$ and $Q$, and let $\tilde{d}_{CDTW}(P,Q)$ be the approximated distance computed using the piecewise quadratic curves $\tilde{P}$ and $\tilde{Q}$. If the approximation guarantees maximum spatial deviations of $\sup_x \norm{P(x) - \tilde{P}(x)}_2 \le \varepsilon$ and $\sup_y \norm{Q(y) - \tilde{Q}(y)}_2 \le \varepsilon$, and the total arc-lengths of the curves are $p_0$ and $q_0$, then we have:
\[ | \tilde{d}_{CDTW}(P,Q) - d_{CDTW}(P,Q) | \le 2\varepsilon \sqrt{p_0 + q_0}. \]
\end{lemma}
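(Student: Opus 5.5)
The plan is to compare the two optimization problems path by path on a common parameter space and then use the triangle inequality for a weighted $L^2$ norm. Parametrize $\tilde{P},\tilde{Q}$ over the same domains $[0,p_0]$ and $[0,q_0]$ as $P,Q$, as is implicit in the hypothesis $\sup_x\norm{P(x)-\tilde P(x)}_2\le\eps$. Then both distances are infima over the same set $\Gamma(p_0,q_0)$ of monotone paths $\gamma=(\alpha,\beta)$. For a fixed $\gamma$, write
\[
J_\gamma(P,Q)=\left(\int_0^1 \norm{P(\alpha(z))-Q(\beta(z))}_2^2\,\norm{\gamma'(z)}_2\,dz\right)^{1/2}.
\]
This is the $L^2$ norm of the vector-valued function $F(z)=P(\alpha(z))-Q(\beta(z))$ with respect to the measure $d\mu_\gamma=\norm{\gamma'(z)}_2\,dz$. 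By Minkowski's inequality in $L^2(\mu_\gamma;\mathbb{R}^2)$,
\[
|J_\gamma(\tilde P,\tilde Q)-J_\gamma(P,Q)|\le \norm{\tilde F-F}_{L^2(\mu_\gamma)} .
\]

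Next I bound the right-hand side pointwise. We have $\norm{\tilde F(z)-F(z)}_2\le \norm{\tilde P(\alpha(z))-P(\alpha(z))}_2+\norm{\tilde Q(\beta(z))-Q(\beta(z))}_2\le 2\eps$. Hence
\[
\norm{\tilde F-F}_{L^2(\mu_\gamma)}\le 2\eps\,\mu_\gamma([0,1])^{1/2}=2\eps\,\mathrm{len}(\gamma)^{1/2}.
\]
Since $\gamma$ is monotone in both coordinates, $\norm{\gamma'}_2\le \alpha'+\beta'$, so $\mathrm{len}(\gamma)\le p_0+q_0$. This gives $|J_\gamma(\tilde P,\tilde Q)-J_\gamma(P,Q)|\le 2\eps\sqrt{p_0+q_0}$ uniformly in $\gamma$.

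Finally I pass to the infimum. Because the bound is uniform over $\Gamma(p_0,q_0)$, we get $\inf_\gamma J_\gamma(\tilde P,\tilde Q)\le \inf_\gamma J_\gamma(P,Q)+2\eps\sqrt{p_0+q_0}$. Taking near-optimal paths for either side gives the symmetric inequality, and together these yield the claim.

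The main obstacle is the bookkeeping of parametrizations. The CDTW definition uses arc length, but the arc lengths of $\tilde P$ and $\tilde Q$ differ from $p_0$ and $q_0$, so the two problems need not literally live on the same rectangle $R$. I would resolve this by interpreting $\tilde d_{CDTW}$, as the lemma implicitly does, as computed on the parameter rectangle $[0,p_0]\times[0,q_0]$ of the original curves, with $\tilde P,\tilde Q$ evaluated at the same parameters. This is exactly how the Bézier pieces are matched to the intervals $I_k$ in the turning-angle construction.

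If one insists on intrinsic arc-length reparametrization, it suffices to note that the bounds above depend only on the path measure. The extra discrepancy would then have to be absorbed by a reparametrization argument, which I would avoid by adopting the common-parameter convention explicitly.
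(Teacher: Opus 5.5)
Your proposal is correct and is essentially the paper's proof: Minkowski's inequality in $L^2$ with respect to the path measure $\|\gamma'(z)\|_2\,dz$, the pointwise $2\varepsilon$ bound from the triangle inequality, the bound $\int_0^1 \|\gamma'(z)\|_2\,dz \le p_0+q_0$ from monotonicity, and then passing to the infimum and symmetrizing. Your remark on parametrizations makes explicit a convention the paper uses without stating it: both distances are infima over the same class $\Gamma(p_0,q_0)$, with $\tilde P,\tilde Q$ evaluated at the original parameters.
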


\begin{proof}
Let $\gamma = (\alpha, \beta) \in \Gamma(p_0, q_0)$ be any valid path. The exact cost evaluated along $\gamma$ is effectively the $L_2$ norm of the spatial difference vector with respect to the path-length measure $d\mu(z) = \norm{\gamma'(z)}_2 dz$:
\[ C(\gamma) = \left( \int_0^1 \norm{P(\alpha(z)) - Q(\beta(z))}_2^2 \norm{\gamma'(z)}_2 \, dz \right)^{\frac{1}{2}}. \]
Similarly, the approximated cost evaluated along the exact same path $\gamma$ is:
\[ \tilde{C}(\gamma) = \left( \int_0^1 \norm{\tilde{P}(\alpha(z)) - \tilde{Q}(\beta(z))}_2^2 \norm{\gamma'(z)}_2 \, dz \right)^{\frac{1}{2}}. \]

By Minkowski's inequality for $L_p$ spaces (specifically $L_2$), we can bound the exact cost by the approximated cost plus the norm of their difference:
\[ C(\gamma) \le \tilde{C}(\gamma) + \left( \int_0^1 \norm{ (P(\alpha(z)) - Q(\beta(z))) - (\tilde{P}(\alpha(z)) - \tilde{Q}(\beta(z))) }_2^2 \norm{\gamma'(z)}_2 \, dz \right)^{\frac{1}{2}}.\]

Applying the standard triangle inequality to the spatial deviation at any point $z$, we bound the error by the maximum deviations of our quadratic approximations:
\begin{align*}
\norm{ (P(\alpha(z)) - \tilde{P}(\alpha(z))) - (Q(\beta(z)) - \tilde{Q}(\beta(z))) }_2 &\le \norm{P(\alpha(z)) - \tilde{P}(\alpha(z))}_2 + \norm{Q(\beta(z)) - \tilde{Q}(\beta(z))}_2 \\
&\le 2\varepsilon.
\end{align*}

Substituting this constant upper bound into our integral gives:
\[ C(\gamma) \le \tilde{C}(\gamma) + 2\varepsilon \left( \int_0^1 \norm{\gamma'(z)}_2 \, dz \right)^{\frac{1}{2}}.\]

Because both $\alpha(z)$ and $\beta(z)$ are non-decreasing functions, we can bound the total $\ell_2$-arc-length of the path $\gamma$ in the parameter space by its $\ell_1$-arc-length:
\[ \int_0^1 \norm{\gamma'(z)}_2 \, dz = \int_0^1 \sqrt{\alpha'(z)^2 + \beta'(z)^2} \, dz \le \int_0^1 (\alpha'(z) + \beta'(z)) \, dz = p _0+ q_0. \]

Therefore, for any valid path $\gamma$, we establish the bound:
\[ C(\gamma) \le \tilde{C}(\gamma) + 2\varepsilon \sqrt{p_0 + q_0}. \]

Taking the infimum over all valid paths $\gamma \in \Gamma(p_0, q_0)$ on both sides yields the relation for the minimum costs:
\[ d_{CDTW}(P,Q) \le \tilde{d}_{CDTW}(P,Q) + 2\varepsilon \sqrt{p_0 + q_0}.\]

By applying the exact same argument symmetrically—reversing the roles of the exact curves $(P, Q)$ and the approximated curves $(\tilde{P}, \tilde{Q})$—we obtain the complementary bound:
\[ \tilde{d}_{CDTW}(P,Q) \le d_{CDTW}(P,Q) + 2\varepsilon \sqrt{p_0 + q_0}.\]

Combining these two inequalities yields the absolute value bound and completes the proof.
\end{proof}

We collect our results in this section into a single statement. Note that bit-size $\tau$ means the coefficients of the polynomials are of the order $2^{\tau}$.
\begin{prop} \label{lem:curve_approximation_error}
Suppose two algebraic curves of degree $d$ polynomials have bit-size $\tau$ and arc lengths $p_0$ and $q_0$. For a fixed $\varepsilon_0 >0$, we can subdivide them into $O(\frac{2^{\tau}d^2 \sqrt{p_0+q_0}}{\varepsilon_0})$  quadratic pieces using  $O(\tau + \log(\frac{d^2\sqrt{p_0+q_0}}{\varepsilon_0}) )$ bisections such that the resulting piecewise quadratic approximations  $\tilde{P}$ and $\tilde{Q}$ satisfies $| \tilde{d}_{CDTW}(P,Q) - d_{CDTW}(P,Q) | \le \varepsilon_0$. 
\end{prop}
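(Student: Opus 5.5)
The plan is to combine Lemma~\ref{lemma:spacdtwerror} with Lemma~\ref{lem:spatial_bound} and the turning-angle subdivision proposition, working backwards from the target error $\varepsilon_0$. By Lemma~\ref{lemma:spacdtwerror}, it suffices to produce piecewise quadratic approximations with uniform spatial deviation at most
\[ \varepsilon := \frac{\varepsilon_0}{2\sqrt{p_0+q_0}}, \]
since then $|\tilde{d}_{CDTW}(P,Q) - d_{CDTW}(P,Q)| \le 2\varepsilon\sqrt{p_0+q_0} = \varepsilon_0$. By Lemma~\ref{lem:spatial_bound}, on each piece $I_k$ (convex or concave after cutting at critical points of $p,p',p''$) the deviation is at most $\frac{1}{2}\ell_k \tan(\Theta_k/2)$. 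So it is enough to subdivide until $\ell_k \tan(\Theta_k/2) \le 2\varepsilon$ on every piece; this is exactly the exact predicate certified in the steps following Lemma~\ref{lem:spatial_bound}.

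To count pieces, I would bound the chord length by the coefficient size. On a unit parameter interval, $|p'| \le d\,2^{\tau}$, so $\ell_k \le |I_k|\sqrt{1+d^2 2^{2\tau}} = O(d\,2^{\tau}|I_k|)$. Using $\tan(\Theta_k/2) \le \Theta_k$ for $\Theta_k < \pi/2$, a sufficient condition on each piece is $d\,2^{\tau}|I_k|\,\Theta_k \lesssim \varepsilon$. Since $|I_k| \le 1$, I would impose two requirements. First, turning angle at most $\arctan(\varepsilon/(d2^{\tau}))$; by the turning-angle subdivision proposition this costs about $(d-1)\pi\, d 2^{\tau}/\varepsilon$ pieces. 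Second, add the $O(d)$ critical-point cuts. Substituting $\varepsilon = \varepsilon_0/(2\sqrt{p_0+q_0})$ gives $O(2^{\tau} d^2\sqrt{p_0+q_0}/\varepsilon_0)$ pieces, matching the claim. The same construction applies to both $P$ and $Q$, so the constant only doubles.

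For the bisection count, I would use the monotonicity remark: for fixed $a_k$, the error bound is increasing in $b_k$. The greedy procedure therefore finds each right endpoint by binary search. The certified resolution needed is on the order of the minimal piece length, which is $\Omega(\varepsilon/(d2^{\tau}))$. This gives depth $O(\log(d2^{\tau}\sqrt{p_0+q_0}/\varepsilon_0)) = O(\tau + \log(d^2\sqrt{p_0+q_0}/\varepsilon_0))$.

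The main obstacle is making the piece count rigorous. A greedy bisection produces pieces that are maximal but not necessarily matched to the idealized turning-angle partition. I would handle this with a charging argument: every greedy piece except the last in each convex stratum, when extended slightly, violates the predicate. Hence either its turning angle or its length exceeds the threshold. Summing turning angles, whose total is at most $(d-1)\pi$ per curve, then bounds the number of such pieces. Pieces that are long but turn little have to be handled separately through the derivative bound on $\ell_k$. This bookkeeping is where I expect the constants to need care.
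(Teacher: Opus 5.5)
Your proposal is correct and is essentially the paper's argument (chord length $O(d2^{\tau})$, turning-angle threshold of order $\varepsilon_0/(d2^{\tau}\sqrt{p_0+q_0})$, piece count from the turning-angle subdivision proposition and $\arctan x > x/2$, then Lemma~\ref{lemma:spacdtwerror}), and your charging argument for the greedy bisection adds rigor the paper skips; since $\ell_k = O(d2^{\tau})$ uniformly, any violated extension already forces turning angle $\gtrsim \varepsilon/(d2^{\tau})$, so long low-turning pieces need no separate treatment. One slip: $|p'| \le d2^{\tau}$ is false in general (for $p = 2^{\tau}(x+\cdots+x^d)$ one has $p'(1) = \binom{d+1}{2}2^{\tau}$, which would cost you an extra factor of $d$), so instead bound the chord directly by $\ell_k \le |I_k| + |p(b_k)-p(a_k)| \le 1 + 2(d+1)2^{\tau}$, which is all you actually use since $|I_k| \le 1$.
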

\begin{proof}
The chord-length $\ell_k$ in Lemma \ref{lem:spatial_bound} are at most $d 2^{\tau}$ so picking $\varepsilon = \frac{\varepsilon_0}{2^{\tau} d \sqrt{p_0+q_0}}$ gives us a spatial distance bound of at most $\frac{\varepsilon_0}{\sqrt{p_0+q_0}}$. Using this inside \Cref{lemma:spacdtwerror} completes the proof of guarantee.

Note that for small $x > 0$, $\frac{x}{2} < \arctan(x) < x $. Using this for $x = \frac{\varepsilon_0}{2^{\tau} d \sqrt{p_0+q_0}}$ yields the estimates. 
\end{proof}

\begin{rem}
This result is stated and proved for two algebraic curves for clarity; however, it applies to piecewise algebraic case without any change.
\end{rem}
\section{Boundary Cost Functions and Cost Propagation} \label{propagation}
Our algorithm will be based on propagating a cost function from the input to output boundaries. The authors in \cite{buchin2022computing} were able to perform exact function propagation exploiting the fact that their focus is on piecewise linear curves. Our approach is inspired by their work; however, due to the non-linear nature of our setting, we cannot perform exact propagation. Ultimately, we subdivide the output boundary with control on approximation error and propagate a precise cost function on every subinterval in the output boundary. 

The theorem statement below concerns the case of an input cost function, a  polynomial of arbitrary degree $\delta_{\mathrm{in}}$, being propagated to the output boundary. It is stated and proved for cost propagation of arbitrary degree $d$ curves $P$ and $Q$. The purpose is to understand the structure of cost propagation in each cell. For our algorithms, we will first compute the piecewise quadratic approximation that was worked out in \Cref{turningangles}, and do the cost propagation on piecewise quadratics ($d=2$).

\begin{thm} \label{10}
    Let $f_{\mathrm{in}}(s)$ be a single cost subsegment of degree $\delta_{\mathrm{in}}$ and coefficient bit-size $\tau_{\mathrm{in}}$, defined over an input boundary segment of a cell. $f_{\mathrm{in}}(s)$ propagates to $O(1)$ continuous candidate cost branches along the output boundary, and all critical transition points partitioning these branches can be computed using exact algebraic real root isolation in $O(1)$ time.
\end{thm}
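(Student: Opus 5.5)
We use the structural description from \Cref{optimalpath}: by \Cref{threeoptions}, an optimal path inside a cell that starts at an input point $s$ and ends at an output point $t$ has one of three shapes. It is either a single segment, or a segment to the valley followed by travel inside the valley and a segment out, or a segment to a discrete valley point $v_a$, the diagonal to a point $v_b$, and a segment out. The plan is to enumerate these shapes and build the propagated cost from each. For each shape $k$ we write the total cost of reaching $t$ as
\[
g_k(s,t)=f_{\mathrm{in}}(s)+\int_{\text{path}} h\,\|\gamma'\|,
\]
minimise over $s$ (and over any free intermediate parameters), and then take the lower envelope of the resulting functions of $t$.

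\emph{Shape 1.} Parametrise the segment from $(s,0)$ (or $(0,s)$) to $t$ on the output side linearly. Then $h(\gamma(\lambda))$ is a polynomial in $(\lambda,s,t)$ of degree at most $2d$. The factor $\|\gamma'\|$ is the Euclidean distance between the endpoints, so it pulls out as a square root $\sqrt{(t-s)^2+c^2}$. After integrating in $\lambda$, the cost is a polynomial in $(s,t)$ times this square root.

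\emph{Shape 2 (valley is the whole diagonal).} The path runs along the diagonal at zero cost. The cost therefore splits as (entry cost from $s$ to the valley) plus (exit cost from the valley to $t$). The entry point and exit point are each optimised independently, and both subproblems reduce to Shape 1 pieces.

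\emph{Shape 3 (discrete valley).} The valley has at most $d$ points, computed as in the valley section. Each choice of entry point $v_a$ and exit point $v_b$ gives a fixed intermediate cost plus a Shape 1 exit from $v_b$ to $t$. The number of such choices is bounded by $d^2$.

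Since $d=2$ in our application and $\delta_{\mathrm{in}},\tau_{\mathrm{in}}$ are treated as fixed for this statement, the number of candidate branches is $O(1)$. This gives one candidate family per input side, per shape, and per pair of valley points.

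To get explicit branches we minimise over $s$ using the elimination scheme of \Cref{realag}. The square-root factor is handled by squaring: the stationarity condition $\partial_s g=0$ becomes $A(s,t)+B(s,t)\sqrt{\cdot}=0$, which we rewrite as $A^2-B^2(\cdot)=0$. This is a polynomial system whose degree is bounded in terms of $d$ and $\delta_{\mathrm{in}}$. The collision condition for two minimisers $s_1\neq s_2$ is the same three-equation system as in \Cref{realag}. Eliminating $s_1$ and $s_2$ by two resultants yields a univariate polynomial $Q(t)$ of degree polynomial in $d$ and $\delta_{\mathrm{in}}$, hence $O(1)$.

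The critical transition points are the roots of $Q$, together with the following additional roots:
\begin{itemize}
\item crossings where two branches coming from different shapes exchange in the lower envelope, obtained from the resultant of their stationarity systems after equating costs;
\item points where the optimal $s$ hits an endpoint of the input segment;
\item points where the monotonicity constraint becomes active, giving horizontal or vertical segments.
\end{itemize}
Each of these is the root of a univariate polynomial of bounded degree with bit-size polynomial in $\tau_{\mathrm{in}}$ and $\tau$. Exact real root isolation therefore runs in time bounded by a function of these fixed parameters, which is $O(1)$. Between consecutive critical points, \Cref{realag} says the minimiser $\phi(t)$ is a unique, continuous, differentiable branch, so each candidate cost is continuous there.

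The main obstacle is the non-polynomial factor $\|\gamma'\|$ together with the cross-shape comparisons. Squaring can introduce spurious roots, and comparing branches requires equating costs that contain square roots. My first attempt is to treat $w=\sqrt{(t-s)^2+c^2}$ as an extra algebraic variable satisfying $w^2=(t-s)^2+c^2$ and to eliminate it with one more resultant. This keeps all degrees bounded (merely larger constants), and spurious roots are then filtered by sign checks during root isolation.
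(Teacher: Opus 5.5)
Your overall route is the paper's. You split by the three path types of \Cref{threeoptions}, write the straight-segment cost as $f_{\mathrm{in}}(s)+\sqrt{D(s,t)}\,H(s,t)$, and square the stationarity condition to get a bounded-degree polynomial. You then locate transition points with the resultant elimination of \Cref{realag}, and bounded degree and bit-size give $O(1)$ root isolation. The paper does not spell out your extras: the auxiliary variable $w$ for the radical in the collision system, and the endpoint and monotonicity-active transition points. Shape 3 is also fine, since each admissible pair $v_a\le v_b$ is a separate candidate and so monotonicity is respected there.

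The step that fails is Shape 2. You cannot optimise the entry point and the exit point on the diagonal independently. The path is monotone, so it moves along the valley only upward, which forces $v_{\mathrm{out}}\ge v_{\mathrm{in}}$. Minimising entry cost and exit cost separately permits $v_{\mathrm{out}}<v_{\mathrm{in}}$, i.e.\ backtracking along the diagonal. This happens, for example, when the cheapest entry lies high on the diagonal but the cheapest exit toward $t$ lies low. The resulting candidate branch can then lie strictly below the true propagated cost.

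The fix is the coupling the paper uses, the cumulative minimum. Define $g_Z(v)=\inf_{v_{\mathrm{in}}\le v} g_{\mathrm{entry}}(v_{\mathrm{in}})$ along the diagonal. Then propagate $g_Z$, rather than the constant $\min g_{\mathrm{entry}}$, from the diagonal to the output boundary by the Shape 1 procedure, as the paper does. You must also check that this keeps the branch count at $O(1)$. The function $g_Z$ agrees with $g_{\mathrm{entry}}$ except on flat pieces. Each flat piece starts at a local minimum of $g_{\mathrm{entry}}$ (or at a branch junction) and ends where $g_{\mathrm{entry}}$ returns to that value. Both kinds of points are roots of bounded-degree systems (e.g.\ $\partial_s G=\partial_v G=0$ after squaring). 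Hence only $O(1)$ extra pieces and transition points arise.
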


\begin{proof}
The proof will be divided into three pieces where every piece corresponds to one path type depicted in \Cref{threeoptions}. We start with the first case, the straight line segment.
\\\\
\textbf{Case 1: Type 1 Path}
Let the input boundary segment be linearly parameterized by $s \in [0, L_{\mathrm{in}}]$ with position $X(s) = X_0 + s \mathbf{e}_{\mathrm{in}}$, and the output boundary segment by $t \in [0, L_{\mathrm{out}}]$ with position $Y(t) = Y_0 + t \mathbf{e}_{\mathrm{out}}$, where $\|\mathbf{e}_{\mathrm{in}}\|_2 = \|\mathbf{e}_{\mathrm{out}}\|_2 = 1$. 
An optimal path $\gamma_{s,t}$ connecting input point $X(s)$ to output point $Y(t)$ is a straight line segment in parameter space $R$ described as \[ \gamma_{s,t}(z) = (1-z)X(s) + z Y(t), \] where $z \in [0,1]$.  Let $ \|\gamma_{s,t}'(z)\|_2 = \|Y(t) - X(s)\|_2 = \sqrt{D(s,t)}$,  where $D(s,t) := \|(Y_0 - X_0) + t \mathbf{e}_{\mathrm{out}} - s \mathbf{e}_{\mathrm{in}}\|_2^2$ is a bivariate quadratic polynomial in $s$ and $t$ with degree $\deg(D) = 2$. The integrated height along the normalized straight segment is given by:
\[
H(s,t) := \int_0^1 h\left((1-z)X(s) + z Y(t)\right) dz.
\]
Since $h(x,y)$ has degree $2d$ and coefficient bit-size $O(\tau)$, $H(s,t)$ is a bivariate polynomial of degree at most $2d$ in $s$ and degree at most $2d$ in $t$.

Given an input boundary cost function $f_{\mathrm{in}}(s)$ on $I = [0, L_{\mathrm{in}}]$ of degree $\delta_{\mathrm{in}}$ and coefficient bit-size $\tau_{\mathrm{in}}$, the propagated cost function at output parameter $t$ is:
\[
f_{\mathrm{out}}(t) = \inf_{s \in I} \left( f_{\mathrm{in}}(s) +\int_{0}^{1} h(\gamma_{s,t}(z))||\gamma'_{s,t}(z)||_2 \  dz \right)  = \inf_{s \in I} \left( f_{\mathrm{in}}(s) + \sqrt{D(s,t)} \cdot H(s,t) \right).
\]

For a fixed output parameter $t$, an optimal interior entry point $s^*(t)$ minimizes $G(s,t) := f_{\mathrm{in}}(s) + \sqrt{D(s,t)} H(s,t)$, satisfying $\frac{\partial G}{\partial s}(s,t) = 0$, or:
\[
f_{\mathrm{in}}'(s) + \frac{D_s(s,t) H(s,t) + 2 D(s,t) H_s(s,t)}{2 \sqrt{D(s,t)}} = 0,
\]
where $D_s = \frac{\partial D}{\partial s}$ and $H_s = \frac{\partial H}{\partial s}$. Multiplying by $2\sqrt{D(s,t)}$ and squaring both sides eliminates the radical, yielding the polynomial equation:
\begin{equation}
P(s,t) := \left[ D_s(s,t) H(s,t) + 2 D(s,t) H_s(s,t) \right]^2 - 4 \left( f_{\mathrm{in}}'(s) \right)^2 D(s,t) = 0.
\end{equation}
For fixed $t$, to find the roots $s$ of $P(s,t)=0$, we must run real root isolation algorithms. Note that the last squaring step does introduce extraneous roots; however, we assume these are checked and removed when roots are isolated.

To establish the time complexity of these root isolation algorithms, we bound both the degree $d_s := \deg_s(P)$ and the maximum coefficient bit-size $\tau_P$ of $P(s,t)$ as a polynomial in $s$. First we need two facts about bit-size bounds:
\begin{itemize}
    \item[(i)] \emph{(Product)} If $A(s), B(s)$ have bit-sizes $\tau_A, \tau_B$, then the bit-size of $A \cdot B$ is at most $\tau_A + \tau_B + \log_2(\min(d_A,d_B)+1)$ because the $k$-th coefficient of $A\cdot B$ is $\sum_{i+j=k} a_ib_j$, a sum of at most $\min(d_A,d_B)+1$ products of coefficients.
    \item[(ii)] \emph{(Sum)} If summed, $A + B$ has bit-size at most $\max(\tau_A,\tau_B) + 1$.
\end{itemize}
Now we proceed step-by-step through the functions, bounding degrees and bit-sizes:
\begin{enumerate}
    \item We show $\tau_H = O(\tau + d \log d)$ for $H(s,t)$. The height function $h(x,y) = \|P_i(x) - Q_j(y)\|_2^2$ essentially squares polynomials of degree $d$ and max coefficient bit-size $\tau$. This gives a bivariate polynomial of degree $2d$ with max coefficient bit-size $\tau_h = 2\tau + O(\log d)$.
    \\\\
    Next, along the straight path $\gamma_{s,t}(z) = ((1-z) \cdot x(s) + z\cdot x(t), \, (1-z)\cdot y(s) + z\cdot y(t))$, substituting the coordinates into $h(x,y)$ yields terms of the form $c_{a,b} \cdot ((1-z)s + z t_0)^a \cdot ((1-z)s_0 + z t)^b$ for $a+b \le 2d$. The  binomial expansion adds at most $2d = O(d)$ bits to the numerator of each coefficient. Integrating each expanded monomial $s^m t^n z^k (1-z)^{2d-k}$ with respect to $z$ over $[0,1]$ requires evaluating the Beta integral: \[ \int_0^1 z^k (1-z)^{2d-k} dz = B(k+1, 2d-k+1) = \frac{k! (2d-k)!}{(2d+1)!} = \frac{1}{(2d+1) \binom{2d}{k}}.\] Expressing the final polynomial $H(s,t)$ with exact integer coefficients requires clearing the denominators of all such Beta integrals across all terms of degree up to $2d$. The least common multiple (LCM) of these denominators is bounded by $(2d+1)!$. By Stirling's approximation, this requires $O(d \log d)$ bits to clear.  Summing the initial bit-size $\tau_h$, the $O(d)$ bits from binomial expansion, and the $O(d \log d)$ bits required to clear the denominators from the Beta integrals, the max coefficient bit-size of $H(s,t)$ is bounded by  $\tau_H = O(\tau + d \log d)$.
    \item We bound the bit-size of the partial derivatives $D_s$ and $H_s$:
    \begin{itemize}
        \item $D(s,t)$ is a bivariate quadratic polynomial ($\deg_s(D)=2$) with bit-size $\tau_D = O(\tau)$. The partial derivative $D_s(s,t) = \frac{\partial D}{\partial s}$ is linear in $s$. Differentiating multiplies coefficients by at most $2$, so $\tau_{D_s} = \tau_D + \log_2(2) = O(\tau)$.
        \item $H(s,t)$ has degree $2d$ in $s$ and bit-size $\tau_H = O(\tau + d \log d)$. The partial derivative $H_s(s,t) = \frac{\partial H}{\partial s}$ has degree $2d-1$ in $s$. Differentiating multiplies coefficients by at most $2d$, adding $\log_2(2d) = O(\log d)$ bits. Thus, $\tau_{H_s} = \tau_H + O(\log d) = O(\tau + d \log d)$.
    \end{itemize}

    \item We bound the bit-size of the products $D_s H$ and $H_sD$:
    \begin{itemize}
        \item $D_s H$ is the product of $D_s$ ($\deg_{s}=1, \tau_{D_s}=O(\tau)$) and $H$ ($\deg_{s}=2d, \tau_{H}=O(\tau + d \log d)$). Its degree in $s$ is $2d+1$, and its bit-size is bounded by $$\tau_{D_s} + \tau_H + \log_2(2) = O(\tau + d \log d).$$
        \item $D H_s$ is the product of $D$ ($\deg_{s} =2, \tau_{D} =O(\tau)$) and $H_s$ ($\deg_{s} =2d-1, \tau_{H_s} =O(\tau + d \log d)$). Its degree in $s$ is $2d+1$, and its bit-size is $$\tau_D + \tau_{H_s} + \log_2(3) = O(\tau + d \log d).$$
    \end{itemize}

    \item We bound the bit-size of $U(s,t) := D_s H + 2 D H_s$ and $U(s,t)^2$:
    \begin{itemize}
        \item The sum $U(s,t)$ has degree $2d+1$ in $s$. Adding two polynomials of bit-size $O(\tau + d \log d)$ adds at most $1$ bit, so $\tau_U = O(\tau + d \log d)$.
        \item Squaring $U(s,t)$ yields a polynomial of degree $2(2d+1) = 4d+2$ in $s$. The coefficients of $U^2$ contain at most $2d+2$ terms from $U$, so the maximum coefficient value is bounded by $(2d+2) \cdot (2^{\tau_U})^2$, giving a bit-size of \[\tau_{U^2} = 2\tau_U + \log_2(2d+2) = 2 \cdot O(\tau + d \log d) + O(\log d) = O(\tau + d \log d).\]
    \end{itemize}

    \item We bound the bit-size of $4 (f_{\mathrm{in}}'(s))^2 D(s,t)$. The piece $f_{\mathrm{in}}(s)$ has degree $\delta_{\mathrm{in}}$ and bit-size $\tau_{\mathrm{in}}$. Its derivative $f_{\mathrm{in}}'(s)$ has degree $\delta_{\mathrm{in}}-1$ and bit-size $\tau_{\mathrm{in}} + \log_2 \delta_{\mathrm{in}}$. Squaring $f_{\mathrm{in}}'(s)$ and multiplying it by $4$ gives degree $2\delta_{\mathrm{in}}-2$ and a bit-size $2\tau_{\mathrm{in}} + O(\log \delta_{\mathrm{in}})+\log_{2}{4}$. Finally, multiplying $4 (f_{\mathrm{in}}'(s))^2$ by $D(s,t)$ ($\deg_{s} =2, \tau_{D} =O(\tau)$) gives degree $2\delta_{\mathrm{in}}$ and bit-size $$\tau_{2} = 2\tau_{\mathrm{in}} + O(\tau + \log \delta_{\mathrm{in}}).$$

    \item We finally bound the degree and bit-size of $P(s,t)$. By subtracting the two terms $U(s,t)^2$ and $4 (f_{\mathrm{in}}'(s))^2 D(s,t)$, we obtain \[ d_s := \deg_s(P) = \max\left( 4d + 2, \, 2\delta_{\mathrm{in}} \right).\] The maximum coefficient bit-size $\tau_P$ of $P(s,t)$ is \[\tau_P = \max\left( \tau_{U^2}, \, \tau_{2} \right) + 1 = O\left( \tau + d \log d + \tau_{\mathrm{in}} + \log \delta_{\mathrm{in}} \right).\]
\end{enumerate}

So we have established that both the degree and the max coefficient bit-size of $P(s,t)$ are bounded by quantities depending only on $\tau,d,M, N$. By well-known real root isolation algorithms we can compute the real roots of $P(s,t)$, for fixed $t$, in $\tilde{O}_B\big(d_s^2\tau_P\big)$ time (see, for example, Section 1.2 of \cite{Ergur_2022} for a review of existing algorithms).

Now we would like to  use the techniques in \Cref{realag} to subdivide the output boundary to be able to propagate a single cost function in every sub-interval. We basically use the resultant based method of \Cref{realag} directly to compute the critical points; by construction on every sub-interval we have a fixed differentiable cost function. Moreover, finding these critical points takes constant time, as we have established for the real root isolation algorithms previously (because the max coefficient bit size of $P$ is controlled).\\
\textbf{Case 2: Type 2 Path}
Here the valley $Z(h)$ is the entire diagonal $x=y$. In this setup, an optimal path enters the diagonal valley $Z(h)$ at $v_{\mathrm{in}}$, traverses $Z(h)$ at zero cost, and exits at $v_{\mathrm{out}}$ toward $t$.

First, the entry cost $g_{\mathrm{entry}}(v_{\mathrm{in}})$ from input boundary $I$ to $v_{\mathrm{in}}$ is propagated via Case 1 straight line propagation in $O(1)$ time, yielding $O(1)$ polynomial pieces. 
Then, the accumulated cost along the diagonal is $g_Z(v) = \inf_{v_{\mathrm{in}} \le v} g_{\mathrm{entry}}(v_{\mathrm{in}})$, forming a cumulative minimum consisting of $g_{\mathrm{entry}}$ pieces and constant horizontal segments. Computing $g_Z(v)$ takes $O(1)$ time and produces $O(1)$ pieces.
Finally, propagation from $v_{\mathrm{out}}$ on the diagonal to output parameter $t$ is computed via Case 1 straight line propagation in $O(1)$ time, generating $O(1)$ output polynomial pieces.\\
\textbf{Case 3: Type 3 Path}
Here the valley $Z(h)$ consists of $k \le d$ discrete points $\{v_1, \dots, v_k\}$. For each $v_x$, the entry cost $g_{\mathrm{entry}}(v_x)$ from input boundary $I$ to $v_x$ is propagated via a simplified version of Case 1 straight line propagation with only one $t$ to consider. This takes $O(1)$ time. Then, minimum costs across discrete valley points are updated as $g_Z(v_1) = g_{\mathrm{entry}}(v_1)$ and $g_Z(v_x) = \min\left(g_{\mathrm{entry}}(v_x), \, g_Z(v_{x-1}) + C(v_{x-1}, v_x)\right)$ sequentially, from left to right. This takes $O(k) = O(d) = O(1)$ time. Finally, each $v_x$ projects to the output boundary via Case 1 straight line propagation, generating $O(1)$ polynomial pieces on the output boundary in $O(1)$  time.
\end{proof}

\section{Computing CDTW Distance of Piecewise Quadratic Curves} \label{errorandcost}
We will construct the main algorithmic engine in this section. The algorithm gives an approximation to CDTW distance of piecewise quadratic curves. Pseudo-code of the main algorithm and all sub-routines are in \Cref{pseudocode}. Recall that $P,Q$ have $m,n$ pieces respectively. We now use $M$ and $N$ to refer to the number of pieces of $\tilde{P}$ and $\tilde{Q}$, as these are the curves upon which we run our algorithm. Finally, we remind the reader that $\tau$ is the \textit{bit-size} of the coefficients of polynomials, and therefore the actual polynomial coefficients are on the order of $2^\tau$.

\subsection{Error and Coefficient Bit-Size Bounds}
\label{subsec:quintic}
 
Recall that \textproc{ApproxCDTW} (Algorithm~1, \Cref{pseudocode}) sweeps the parameter space $R$ cell by cell in order of increasing diagonal index $k=i+j$. At the start of diagonal step $k$, the algorithm holds a piecewise cost function on the diagonal boundary $A_{k-1}$ --- the union of the input boundaries of all cells $(i,j)$ with $i+j=k-1$ --- representing the (approximate) accumulated cost of an optimal path from the origin to each point of $A_{k-1}$. It propagates this cost function, one cell at a time, into the next diagonal boundary $A_k$ via \Cref{10}, and merges the resulting candidate branches on $A_k$ into a single piecewise cost function using a global lower envelope. Iterating this for $k = 2,\dots,M+N-2$ eventually reaches the top-rightmost corner $(p,q)$ of $R$ and returns an approximation of $\mathrm{CDTW}(P,Q)$.
 
What stops us from running this loop exactly is the mismatch between what \Cref{10} takes and what it produces. It requires the input piece $f_{\mathrm{in}}(s)$ to be a \textit{polynomial} of bounded degree $\delta_{\mathrm{in}}$ and bit-size $\tau_{\mathrm{in}}$; however, as the proof of \Cref{10} shows, propagating $f_{\mathrm{in}}$ through a cell produces $O(1)$ candidate branches of the form
\[
g(t) := G(s^*(t),t) = f_{\mathrm{in}}(s^*(t)) + \sqrt{D(s^*(t),t)}\, H(s^*(t),t),
\]
where $s^*(t)$ is a real root branch of the polynomial equation $P(s,t)=0$. These are \textit{algebraic}, or generally non-polynomial. For degree $\delta_{\mathrm{in}}=5$, the proof of \Cref{10} gives $\deg_s P = \max(4d+2,\,2\delta_{\mathrm{in}}) = 10$ for our quadratic curves ($d=2$). For a polynomial equation of degree $10$ there is in general no closed form solution, by Abel--Ruffini. Even besides a closed form, feeding $g(t)$ back into \Cref{10} as the next cell's $f_{\mathrm{in}}$ would compound the degree and bit-size of $P$ with every diagonal step, since \Cref{10}'s bounds on $\deg_s P$ and $\tau_P$ scale with the $\delta_{\mathrm{in}}$ and $\tau_{\mathrm{in}}$ of the input piece.
 
We resolve this by resetting every output branch $g(t)$ to a bounded-degree \textit{polynomial approximant} before it is used as an $f_{\mathrm{in}}$ at the next cell. Specifically, we replace $g$ by its degree-$5$ (quintic) Lagrange interpolant $p_{\mathrm{out}}$, at the cost of a small, controllable error $\sigma$ per piece. We choose degree $5$ because that is the degree produced by the base case of the recursion: $A_1.\mathrm{bottom}$ and $A_1.\mathrm{left}$ (Algorithm~1, lines 7--8) are antiderivatives of the quartic height function $h$ (degree $2d=4$ for our quadratic pieces $d=2$), and integration increases degree by one. Resetting every later branch to degree $5$ also keeps every input piece for \Cref{10} in the same bounded class at every cell of every diagonal, so that the propagation theorem holds throughout the whole sweep.
 
The main result of this section is the following:
\begin{thm} \label{thm:main-quintic}
Fix a global additive accuracy target $\eps_0 > 0$. Then, at every diagonal step $k = 2,\dots,M+N-2$ and for every cell on that diagonal, the propagation of \Cref{10} together with quintic approximation:
\begin{enumerate}
    \item produces $O\!\left(\left(\tfrac{M+N}{\eps_0}\right)^{1/6}\right)$ quintic output pieces per input piece per cell;
    \item introduces pointwise approximation error per piece, so that the total error accumulated along any propagated path over $O(M+N)$ diagonal steps is at most $\eps_0$;
    \item produces output coefficients of bit-size $\tau_{\mathrm{out}} = O\big(\tau+\log\tfrac{M+N}{\eps_0}\big)$ (crucially independent of $k$, avoiding an explosion);
    \item runs in time $\tilde{O}\big(\log\tfrac{M+N}{\eps_0}\big)$ per piece.
\end{enumerate}
\end{thm}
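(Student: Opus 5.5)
The plan is to fix a per-piece error budget, bound the number of quintic pieces needed to meet it using the standard remainder formula for degree-$5$ interpolation at Chebyshev nodes, and then read off the bit-size and running time. Since the sweep has $O(M+N)$ diagonal steps, any propagated path passes through $O(M+N)$ cells. Each reset of an output branch $g$ to its quintic interpolant $p_{\mathrm{out}}$ adds at most a pointwise error $\sigma$. Errors also accumulate through propagation: $f_{\mathrm{out}}(t)=\inf_s(f_{\mathrm{in}}(s)+\sqrt{D}H)$ is $1$-Lipschitz in $f_{\mathrm{in}}$ in the sup norm, because an inf of functions each shifted by at most $\sigma$ shifts by at most $\sigma$. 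Taking lower envelopes is also nonexpansive. Hence errors add up linearly along the sweep, and choosing $\sigma=\eps_0/(c(M+N))$ gives item 2 directly.

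For item 1, I will use the remainder bound for degree-$5$ interpolation at Chebyshev nodes on an interval of length $\ell$:
\[ \|g-p_{\mathrm{out}}\|_\infty \le \frac{2}{6!}\left(\frac{\ell}{4}\right)^{6}\|g^{(6)}\|_\infty . \]
Requiring this to be at most $\sigma$ gives $\ell = O\bigl((\sigma/\|g^{(6)}\|_\infty)^{1/6}\bigr)$. The number of pieces on a boundary of length $L_{\mathrm{out}}$ is then
\[ L_{\mathrm{out}}/\ell = O\bigl((\|g^{(6)}\|_\infty/\sigma)^{1/6}\bigr) = O\bigl(((M+N)/\eps_0)^{1/6}\bigr), \]
provided $\|g^{(6)}\|_\infty=O(1)$ on each branch interval.

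The main obstacle is exactly this bound on $g^{(6)}$. On each open subinterval produced by \Cref{10}, the minimizer $s^*(t)$ is a smooth root branch of $P(s,t)=0$, and $g(t)=G(s^*(t),t)$. By the implicit function theorem, every derivative of $s^*$ is a rational expression in the partial derivatives of $P$, divided by powers of $\partial_s P$. For the derivatives of $g$ I will use the envelope identity $g'(t)=G_t(s^*(t),t)$, which removes the first-order dependence on $s^{*\prime}$. I will then bound the partials of $P$ and $G$ by $2^{O(\tau)}$ using the bit-size bounds from \Cref{10}, with $d=2$ and $\delta_{\mathrm{in}}=5$, together with inductively bounded $\tau_{\mathrm{in}}$.

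The danger is that $\partial_s P$ vanishes near the critical transition points, and that $\sqrt{D}$ is singular where $D=0$, i.e.\ at coincident input and output points. My first attempt will be to shrink each branch interval by a margin near its endpoints, handle the margins with separate short quintic pieces, and use root-separation bounds to get a lower bound on $|\partial_s P|$ away from the roots of the discriminant. Such a bound has the form $2^{-O(\tau_P)}$ with $\tau_P=O(\tau)$. This is constant in $M,N,\eps_0$, so the piece count keeps the claimed exponent $1/6$.

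For item 3, I will round the coefficients of $p_{\mathrm{out}}$, written in a rescaled monomial or Chebyshev basis on each piece, to precision $\sigma/2$. This adds at most $\sigma/2$ further error, absorbed by halving the budget. Rounding needs $O(\log(1/\sigma))=O(\log\frac{M+N}{\eps_0})$ fractional bits on top of the $O(\tau)$ bits of integer part that come from the magnitude of the cost values. This gives $\tau_{\mathrm{out}}=O(\tau+\log\frac{M+N}{\eps_0})$. Because each reset rounds afresh, $\tau_{\mathrm{in}}$ stays bounded by $\tau_{\mathrm{out}}$ at every diagonal step and cannot grow with $k$.

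For item 4, each piece requires evaluating $g$ at six Chebyshev nodes. Each evaluation isolates the relevant root of $P(\cdot,t_i)$, which has degree $\le 10$ and bit-size $O(\tau+\log\frac{M+N}{\eps_0})$, and refines it to precision $\sigma$ by bisection or Newton. This costs $\tilde O(\log\frac{M+N}{\eps_0})$ bit operations for fixed $d$ and $\tau$. Solving the $6\times6$ interpolation system costs $O(1)$ arithmetic operations on numbers of this bit-size.
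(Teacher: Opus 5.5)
Your route differs from the paper's in how item~1 is proved. You use the real-variable remainder $\frac{2}{6!}(\ell/4)^6\|g^{(6)}\|_\infty$ together with implicit differentiation of $G_s(s^*(t),t)=0$. The paper instead uses Hermite's contour formula, with a holomorphy radius $r=\rho/2$ obtained from root separation of $\mathcal{C}=\mathrm{Res}_s(P,P_s)$ and a magnitude bound $M_g$. Your nonexpansiveness argument for item~2 is more explicit than anything the paper writes.

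The gap is at the critical transition points. At a root $t_a$ of $\mathcal{C}$ where the branch $s^*(t)$ folds (two real roots of $P(\cdot,t)$ collide), $s^*$ has a Puiseux expansion in $\sqrt{t-t_a}$. So $g$ is not $C^6$ up to $t_a$, and $|g^{(6)}(t)|$ grows like a negative power of $|t-t_a|$. Your lower bound $|\partial_sP|\ge 2^{-O(\tau_P)}$ can only hold at distance of order at least $2^{-O(\tau_P)}$ from $t_a$. Your margins therefore have that (constant) length, and on them you only say ``separate short quintic pieces''. That is exactly where accuracy $\sigma$ and the exponent $1/6$ are decided. No quintic can follow a fractional-power term to accuracy $\sigma$ on an interval touching $t_a$ unless that interval has length $\sigma^{\Theta(1)}$. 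Tiling a constant-length margin at that scale costs $\sigma^{-\Theta(1)}$ pieces, with a worse exponent.

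The missing idea is the paper's remedy: a local Puiseux model of $g$ at $t_a$, and an excised buffer of half-width $\ell_0=O(\sigma^2/|c'|^2)$ on which $g$ is replaced by the constant $g(t_a)$. In your derivative-based setting you must also cover the zone between that buffer and your constant margin, using a geometric mesh. On the shell $|t-t_a|\in[\delta,2\delta]$, where $|g^{(6)}|\lesssim\delta^{\alpha-6}$ for some $\alpha>0$, take pieces of length $(\sigma/\sup|g^{(6)}|)^{1/6}$. This gives $O(\delta^{\alpha/6}\sigma^{-1/6})$ pieces per shell, which sums geometrically to $O(\sigma^{-1/6})$. The paper's assertion that the holomorphy radius is $\rho/2$ on every remaining subinterval also only holds away from the buffer, so this summation is worth writing out in either route.

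Second, your claim that $2^{-O(\tau_P)}$ is constant in $M,N,\eps_0$ contradicts your own item~3. By Theorem~\ref{10}, $\tau_P=O(\tau+\tau_{\mathrm{in}}+\log\delta_{\mathrm{in}})$, and your reset makes $\tau_{\mathrm{in}}=O(\tau+\log\frac{M+N}{\eps_0})$. So every bound of the form $2^{-O(\tau_P)}$ is a negative power of $(M+N)/\eps_0$ with an unspecified exponent. Once it enters $\|g^{(6)}\|_\infty$, it would destroy the exponent $1/6$.

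The same applies to the magnitudes of $f_{\mathrm{in}}^{(j)}$ for $j\ge 2$, which enter $s^{*(k)}$ through $G_{ss},G_{sss},\dots$. Coefficients with $O(\tau)$ integer bits, in a basis rescaled to a piece of length $\ell$, give only $2^{O(\tau)}\ell^{-j}$ bounds in the original boundary parameter.

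The paper suppresses the same dependence: it counts $O(1/L)=O(\sigma^{-1/6})$ pieces with $L=O(r\sigma^{1/6}/M_g^{1/6})$, treating $r=2^{-O(\tau_P)}$ and $M_g$ as constants. So on this point you are not behind it. But to make your derivative route actually yield item~1, you need derivative and nondegeneracy bounds expressed through magnitudes of the true cost function, not through coefficient bit-sizes that grow with $\log(1/\sigma)$.
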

As a corollary, we obtain an additive  error bound as follows.
\begin{cor}[Additive Error Accumulation] \label{cor:accum}
Let $E_k$ denote the sup-norm error, relative to the true optimal cost, of the piecewise cost function on diagonal $A_k$. Then we have $E_{M+N-2} \le \eps_0/2$.
\end{cor}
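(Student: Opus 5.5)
We prove \Cref{cor:accum} by induction on the diagonal index $k$. The recurrence we aim for is
\[
E_k \le E_{k-1} + \sigma,
\]
where $\sigma$ is the per-piece sup-norm quintic interpolation error guaranteed by \Cref{thm:main-quintic}(2). The budget is fixed by choosing $\sigma = \eps_0/(2(M+N))$. Since there are at most $M+N-3$ diagonal steps, the recurrence then gives $E_{M+N-2} \le E_1 + (M+N)\sigma \le \eps_0/2$.

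For the base case we take $E_1 = 0$, because $A_1.\mathrm{bottom}$ and $A_1.\mathrm{left}$ are exact antiderivatives of the quartic height function. Along the bottom and left edges of $R$ the path is forced to be axis-parallel, so these quintics are the true optimal costs.

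The inductive step has three parts, and we carry them out in this order.

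\textbf{(a) Propagation is non-expansive in sup-norm.} Suppose $\tilde f_{\mathrm{in}}$ is the stored cost on an input boundary, $f_{\mathrm{in}}$ is the true cost there, and $\|\tilde f_{\mathrm{in}} - f_{\mathrm{in}}\|_\infty \le E_{k-1}$. The exact propagation operator of \Cref{10} has the form $f_{\mathrm{out}}(t) = \inf_s \big(f_{\mathrm{in}}(s) + c(s,t)\big)$, where $c(s,t)$ is the cell transport cost. The valley cases (Types 2 and 3) are also infima of $f_{\mathrm{in}}$ plus nonnegative costs, with the cumulative minimum inserted in between. An infimum of $f + c$ changes by at most $\|\tilde f - f\|_\infty$ when $f$ is perturbed. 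Therefore the exactly propagated branches differ from the true ones by at most $E_{k-1}$.

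\textbf{(b) The lower envelope on $A_k$ is 1-Lipschitz in sup-norm.} The true cost on $A_k$ is the lower envelope over all incoming cells and branches, by \Cref{path_spatiality} and the uniqueness lemma: every optimal path crosses $A_{k-1}$ once and lies in one cell in between. A minimum of functions each perturbed by at most $e$ is itself perturbed by at most $e$.

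\textbf{(c) Add the fresh quintic reset.} Each output branch $g$ is replaced by $p_{\mathrm{out}}$ with $\|g - p_{\mathrm{out}}\|_\infty \le \sigma$, and the envelope is taken after this reset. Applying (b) once more yields $E_k \le E_{k-1} + \sigma$.

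The main obstacle is that the paper's cost is the integral inside the square root, while the accumulated quantity is the unsquared integral. We will therefore run the whole argument on the additive cost $\int h\|\gamma'\|$, which is what Algorithm~1 propagates. The corollary then holds for that quantity, and it is converted to the square-root distance only at the end.

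A second subtlety is that interpolation error must not increase the number of pieces in the envelope in a way that breaks the bound. This is harmless here, because sup-norm bounds do not depend on the number of pieces.

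Finally, we must check that the error is measured against the true optimum rather than a pathwise quantity, since the interpolated costs need not be realized by any path. Step (a) handles this directly: it compares infima only, and never compares paths.
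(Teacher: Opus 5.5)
Your proof is correct and follows the paper's own (implicit) argument: each diagonal step adds one quintic reset of sup-error at most $\sigma$, and these errors are summed over fewer than $M+N$ steps using the budget $\sigma=\eps_0/(2(M+N))$ from Algorithm~1 (the value $\sigma=\eps_0/(M+N)$ used in the proof of \Cref{thm:quinticerror} would only give $\eps_0$). Your observation that the min-plus propagation, the cumulative minimum and the lower envelope are all $1$-Lipschitz in sup-norm is exactly the justification the paper leaves implicit in its phrase ``error accumulated along any propagated path''; the bound holds for the infimum over all candidate branches rather than branch by branch, as step (a) literally says, but the infimum is all your chain of inequalities uses.
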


The remainder of this section is carried out in three steps: (i) we bound the error incurred by replacing $g$ with $p_{\mathrm{out}}$ on a single sub-interval of the output boundary (\Cref{sssec:regime1error}); (ii) we bound the coefficient bit-size of $p_{\mathrm{out}}$ by $O(1)$ \Cref{thm:coeffbitsize}; and (iii) we show this bit-size bound is preserved from one diagonal step to the next, so it does not compound (\Cref{sssec:nocompound}, Proposition \ref{prop:nocompound}).

\subsubsection{Error Bounds for The Quintic Approximation}
\label{sssec:regime1error}

Fix a sub-interval $[t_a,t_b]$ of the output boundary produced by \Cref{10} (Case 1, or Cases 2--3 via the reduction to Case 1 noted above), on which a single branch $s^*(t)$ of $P(s,t)=0$ is the optimal minimizer. Let $L := t_b-t_a$, $t_0 := \tfrac{t_a+t_b}{2}$, and
\[
g(t)  :=  f_{\mathrm{in}}(s^*(t)) + \sqrt{D(s^*(t),t)}\,H(s^*(t),t).
\]
 
\begin{dfn} [Critical Points] \label{dfn:regime}
We define the discriminant like polynomial using resultants as follows:
\[
\mathcal{C}(t)  :=  \mathrm{Res}_s\!\left(P(s,t),\, \frac{\partial P}{\partial s}(s,t)\right).
\]
\end{dfn}

By well-known degree and height bounds for resultants of bivariate polynomials, $\mathcal{C}(t)$ is a univariate polynomial of degree $d_C = O(1)$ and bit-size $\tau_C = O(\tau_P)$. 
Consider the roots of $\mathcal{C}(t)$ in two groups: double-roots (singularity) and isolated roots. If there are any double-roots, we process them as follows:  The two colliding branches have a local Puiseux expansion:
\[
s^*(t)  =  s_0 \pm c\sqrt{t-t_a} + O(t-t_a) \Rightarrow  g(t)  =  g(t_a) + c'\sqrt{t - t_a} + O(t-t_a),
\]
with leading coefficient $|c'| \le 2^{O(\tau_P)}$, computable with methods such as Newton polygons \cite{PuiseuxExpansionComputation}. Since a degree-$5$ polynomial cannot approximate a square-root function on an interval $[0,T]$ to better than $ O(\sqrt{T})$ uniformly, regardless of node placement, we excise an interval of half-width
\[ \ell_0  :=   O\!\left(\sigma^2 / |c'|^2\right)  =  2^{-O(\tau_P)} \sigma^2\]
adjacent to each singularity, created via $O(\tau_P + \log(1/\sigma))$ further bisections, and replace $g$ on the buffer by the constant $g(t_a)$, which is accurate to $O(\sigma)$. This adds $O(1)$ extra pieces per singularity (still $O(1)$ total per sub-interval, since $\mathcal{C}$ has $O(1)$ roots by \Cref{dfn:regime}) and does not change the bisection depth of Proposition \ref{prop:regularerror}.

Now assume the double-roots are processed as described, or there are no double-roots. Then, for the remaining sub-intervals $s^*(t)$ is single-valued and real-analytic, thus extending to a holomorphic function of $t$ on a complex disk around $t_0$ whose radius is determined by the nearest complex root of $\mathcal{C}$ (by the holomorphic implicit function theorem). Thus, in Regime 1, classical polynomial interpolation converges. The error and bit-size bounds can be calculated as shown in \Cref{sssec:regime1error}. In particular, we rely on the classical Davenport-Mahler-Mignotte root separation bound.

By the classical Davenport-Mahler-Mignotte Root Separation Bound \cite{Sharma_2020}, any two distinct roots of $\mathcal{C}$ (which recall has constant degree and bit-size $\tau_C=O(\tau_P)$) are separated by at least $\rho := 2^{- O(\tau_P)}$. Thus, there is a certain radius around any root where no other roots are present. This, called the holomorphy radius, is given as
\[
r := \tfrac{1}{2}\rho = 2^{- O(\tau_P)},
\]
which clearly only depends on the bit-size $\tau_P$ of the cell's $P(s,t)$ polynomial.
 
We now bound the maximum magnitude of $g$ as follows.
\begin{lemma}[Magnitude Bound via Accumulated Cost] \label{lem:magnitude}
For all $t$ in the domain, $0 \le g(t) \le M_g$, where
\[
M_g  :=  2^{O(\tau)} \cdot (M+N).
\]
\end{lemma}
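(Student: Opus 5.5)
The plan is to bound $g$ by comparing it with the cost of an explicit path, and to get the lower bound from the definition. Nonnegativity is immediate. The height function $h$ is a squared Euclidean norm, so $h\ge 0$. The factor $\sqrt{D(s,t)}$ is a length, so it is nonnegative too. Hence $H(s,t)\ge 0$, and $\sqrt{D}\,H\ge 0$ along any straight segment. The input cost $f_{\mathrm{in}}$ is an accumulated cost from the origin, so it is nonnegative by induction over the diagonal sweep. Here I would note that the base case $A_1$ consists of antiderivatives of $h\ge 0$, and that the quintic resets only perturb values by $\sigma$. If necessary I would clamp at zero, which only decreases the error. Therefore $g(t)=f_{\mathrm{in}}(s^*(t))+\sqrt{D}\,H\ge 0$.

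For the upper bound I would use that $g(t)$ is the optimal (or near-optimal, up to the accumulated error $\le \eps_0$) cost of reaching the output point $Y(t)$ from $(0,0)$. It is therefore bounded by the cost of any specific admissible path to $Y(t)$. I would take the staircase path that runs along the bottom boundary of $R$ to the $x$-coordinate of $Y(t)$ and then vertically up, or any monotone path through the cells. Its cost is a sum of at most $M+N$ cell contributions. In each cell the contribution is at most $\max_{\text{cell}} h \cdot (\text{path length in cell})$. Since the path is monotone, its $\ell_2$ length in a cell is at most the cell's perimeter, which equals $L_i+L_j$.

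The key estimate is therefore a per-cell bound $\max h\cdot(L_i+L_j)\le 2^{O(\tau)}$. The pieces are quadratics with coefficient bit-size $\tau$, parameterized on intervals of length $2^{O(\tau)}$: the chord lengths are at most $d2^\tau$ as used in \Cref{lem:curve_approximation_error}, and for $d=2$ the arc lengths are comparable. Hence $\|P_i(x)\|,\|Q_j(y)\|\le 2^{O(\tau)}$ on their domains, and so $h\le 2^{O(\tau)}$. Summing over at most $M+N$ cells gives $2^{O(\tau)}(M+N)$. Adding the accumulated approximation error $\eps_0\le 1$, and the $O(\sigma)$ buffer replacements, is absorbed into the constant.

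The main obstacle is making the per-cell size bound honest. The arc length of a piece is not obviously $2^{O(\tau)}$ unless the parameter domains are bounded. I would first argue from the setup that domains and coefficients are both $2^{O(\tau)}$ after the subdivision of \Cref{turningangles}: subpieces inherit the coefficients and shrink the domains. If that fails, I would state $M_g$ with an explicit factor $(p+q)$ absorbed into $2^{O(\tau)}$. A second subtlety is that $g$ on a non-optimal branch could exceed the optimal cost. Since $g$ along the selected branch $s^*(t)$ is the minimizer on that sub-interval, however, it is still at most the cost of the staircase-through-$X(s)$ path, so the same bound applies.
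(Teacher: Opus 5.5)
This is correct and essentially the paper's argument: $g$ is, up to accumulated approximation error, the cost of a monotone path crossing at most $M+N$ cells, each contributing at most $\max h$ times the cell's half-perimeter $L_i+L_j$ (not the full perimeter, as you wrote), hence $2^{O(\tau)}$. Comparing against an explicit staircase path, rather than bounding the optimal path's own per-cell contributions directly, is an inessential variant.

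Your extra care about nonnegativity after quintic resets, non-minimizing branches, and the assumption that piece domains are $2^{O(\tau)}$-bounded goes beyond the paper, which simply asserts these points. One thing to make explicit: invoking the $\eps_0$ accumulated-error bound is non-circular only as an induction over diagonals, since that error bound is itself derived using $M_g$.
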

\begin{proof}
The value $g(t)$ is (an approximation to) the accumulated cost of an optimal path from the origin $(0,0)$ of the parameter space to a point on the current cell boundary. It is a sum of at most $M+N$ per-cell integrals $\int h(\gamma(z)) \|\gamma'(z)\|_2\,dz$, each taken over a curve piece of coefficient bit-size $O(\tau)$ and bounded parameter length. Since $h$ is quartic and has bit-size $O(\tau)$ (see proof of \Cref{10}), and each piece's arc length is $2^{O(\tau)}$-bounded, each cell's contribution to the accumulated cost is bounded by $2^{O(\tau)}$. 
\end{proof}
 
Now we use the standard Chebyshev interpolation formulation to bound the error between $g(t)$ and its quintic interpolant approximation $p_{\mathrm{out}}$. For completeness, we give a proof.
\begin{lemma}[Error from Degree-5 Chebyshev Interpolation] \label{lem:cauchyremainder}
Let $g$ be holomorphic and bounded by $M_g$ on the closed disk $\overline{\Delta_r(t_0)}$, with $r \ge L$, and let $p_{\mathrm{out}}$ be the degree-$5$ interpolant of $g$ at the $6$ Chebyshev nodes of $[t_a,t_b]$. Then we claim:
\[
\sup_{t \in [t_a,t_b]} |g(t) - p_{\mathrm{out}}(t)|  \le  4 M_g \left(\frac{L}{2r}\right)^6.
\]
\end{lemma}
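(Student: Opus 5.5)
We use the classical Hermite (contour-integral) representation of the interpolation error and bound each factor separately. Let $x_0,\dots,x_5$ be the six Chebyshev nodes of $[t_a,t_b]$ and $\omega(t) := \prod_{k=0}^{5}(t-x_k)$ the node polynomial.

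Since $g$ is holomorphic on $\overline{\Delta_r(t_0)}$ and $[t_a,t_b]\subset \Delta_r(t_0)$ (because $L/2 < L \le r$), Hermite's formula gives, for every $t\in[t_a,t_b]$,
\[
g(t)-p_{\mathrm{out}}(t) = \frac{1}{2\pi i}\oint_{|\zeta-t_0|=r} \frac{\omega(t)\, g(\zeta)}{\omega(\zeta)(\zeta - t)}\,d\zeta .
\]
This follows from applying the residue theorem to $\frac{g(\zeta)}{\omega(\zeta)(\zeta-t)}$. The residues at the nodes reproduce $-p_{\mathrm{out}}(t)/\omega(t)$ through the Lagrange form, and the residue at $\zeta=t$ gives $g(t)/\omega(t)$.

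The proof then reduces to three elementary estimates on the circle $|\zeta - t_0| = r$:
\begin{itemize}
\item[(i)] By the Chebyshev choice of nodes, $\omega$ is the monic rescaled Chebyshev polynomial $2\,(L/4)^6\,T_6\!\big(\tfrac{2(t-t_0)}{L}\big)$. Hence $\sup_{[t_a,t_b]}|\omega| = 2(L/4)^6$.
\item[(ii)] Each node satisfies $|x_k - t_0|\le L/2 \le r/2$, so $|\zeta - x_k| \ge r - L/2 \ge r/2$, and therefore $|\omega(\zeta)| \ge (r/2)^6$.
\item[(iii)] Similarly $|\zeta - t| \ge r/2$. The contour has length $2\pi r$, and $|g|\le M_g$ there.
\end{itemize}
Combining these, the error is at most
\[
\frac{1}{2\pi}\cdot 2\pi r\cdot \frac{2 (L/4)^6 M_g}{(r/2)^6 (r/2)} = 4 M_g \frac{(L/4)^6}{(r/2)^6} = 4M_g\left(\frac{L}{2r}\right)^6,
\]
which is exactly the claimed bound. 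Note that the hypothesis $r\ge L$ is used only to guarantee $r - L/2 \ge r/2$ in (ii) and (iii).

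The main care point is the constant bookkeeping in (ii). If one instead used the cruder bound $|\zeta - x_k| \ge r - L/2$ without invoking $r\ge L$, the factor $(L/2r)^6$ would degrade to $(L/(2r-L))^6$. One also has to check that $|\omega|$ takes its Chebyshev minimax value $2(L/4)^6$, which follows from the leading coefficient $2^{5}$ of $T_6$ and the affine rescaling.

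Finally, applying the estimate requires that $g$ actually be holomorphic and bounded by $M_g$ on the whole disk, not just on the real interval. We take holomorphy from the preceding discussion: $r$ is half the Davenport–Mahler–Mignotte separation, so no root of $\mathcal{C}$ lies in the disk and the branch $s^*(t)$ continues analytically. The bound $M_g$ is justified by \Cref{lem:magnitude}. If the complex extension could exceed the real magnitude bound, we would shrink $r$ by a constant factor and absorb the change into the $2^{-O(\tau_P)}$ of $r$.
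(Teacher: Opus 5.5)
Your argument is correct and is the same as the paper's: Hermite's contour-integral remainder, the Chebyshev nodal bound $\max_{[t_a,t_b]}|\omega| = 2(L/4)^6$, and the estimates $|\omega(\zeta)| \ge (r/2)^6$ and $|\zeta - t| \ge r/2$ obtained from $r \ge L$. You also justify Hermite's formula via residues and check the monic normalization of the rescaled $T_6$, which the paper only cites. Your last paragraph is unnecessary, since holomorphy and the bound $M_g$ on the disk are hypotheses of the lemma. Its sketch would not establish them anyway: root separation of $\mathcal{C}$ does not keep roots out of a disk centered at an arbitrary $t_0$, and \Cref{lem:magnitude} only bounds $g$ on the real axis.
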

\begin{proof}
Writing $\omega(u) = \prod_{k=0}^5(u-t_k)$ for the nodal polynomial, Hermite's contour-integral remainder formula gives, for $t \in [t_a,t_b]$ inside $\Delta_r(t_0)$, that:
\[
g(t) - p_{\mathrm{out}}(t)  =  \frac{1}{2\pi i} \oint_{|z-t_0|=r} \frac{\omega(t)}{\omega(z)} \cdot \frac{g(z)}{z-t}\,dz.
\]
For the Chebyshev nodes rescaled to $[t_a,t_b]$, it is known that $\max_{[t_a,t_b]} |\omega| = 2(L/4)^6$. For $|z-t_0| = r \ge L$, we have $|\omega(z)| \ge (r-L/2)^6 \ge (r/2)^6$ and $|z-t| \ge r - L/2 \ge r/2$. Substituting into the contour integral and bounding $|g(z)| \le M_g$ gives
\[
|g(t) - p_{\mathrm{out}}(t)|  \le  \frac{2(L/4)^6}{(r/2)^6} \cdot \frac{r}{r/2} \cdot M_g  =  4 M_g \left(\frac{L}{2r}\right)^6.
\]
\end{proof}

\begin{prop} \label{prop:regularerror}
We can subdivide an interval further into sub-intervals of length $L =  O\big(r\,\sigma^{1/6}/M_g^{1/6}\big)$, via $\beta = O\big(\tau_P + \log(1/\sigma) + \log(M+N)\big)$ bisections. This guarantees
\[
\sup_{[t_a,t_b]} |g - p_{\mathrm{out}}|  \le  \sigma.
\]
\end{prop}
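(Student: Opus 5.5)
The approach is to read the proposition as a direct consequence of \Cref{lem:cauchyremainder}, combined with the holomorphy radius $r = 2^{-O(\tau_P)}$ from the Davenport--Mahler--Mignotte bound and the magnitude bound $M_g = 2^{O(\tau)}(M+N)$ of \Cref{lem:magnitude}.

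\textbf{Step 1: holomorphy on the disk.} Fix a sub-interval $[t_a,t_b]$ obtained after the double roots of $\mathcal{C}$ have been excised. On it, $s^*(t)$ is a single real-analytic branch, and $g$ extends holomorphically to a disk $\Delta_r(t_0)$ of radius $r = \tfrac12\rho$ around the midpoint. The only obstructions to holomorphy are the roots of $\mathcal{C}$ and the vanishing of $D$ inside $\sqrt{D}$. For the latter, $D(s,t)$ is the squared length of a segment between an input and an output point. It vanishes only at a shared corner, so I treat that case as a further $O(1)$ excision of the same kind as the Puiseux buffer.

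To apply \Cref{lem:cauchyremainder}, I need $|g| \le M_g$ on the complex disk, not only on the real interval. I would get this by shrinking $r$ by a constant factor. On $\Delta_{r/2}(t_0)$, the branch $s^*$ stays within a $2^{O(\tau_P)}$-bounded region, by Cauchy bounds on the roots of $P(\cdot,t)$. So $|g|$ is at most $2^{O(\tau)}(M+N)\cdot 2^{O(\tau_P)}$, and this extra factor is absorbed into the constants.

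\textbf{Step 2: choice of $L$.} Require $L \le r$ and $4M_g\,(L/(2r))^6 \le \sigma$. Both hold with
\[
L = 2r\left(\frac{\sigma}{4M_g}\right)^{1/6} = O\!\left(r\,\sigma^{1/6}/M_g^{1/6}\right),
\]
which is at most $r$ once $\sigma \le 4M_g$. This is the only non-trivial case, because otherwise the interpolant of $g$ is trivially within $\sigma$. \Cref{lem:cauchyremainder} then gives $\sup_{[t_a,t_b]}|g - p_{\mathrm{out}}| \le \sigma$ on every piece of length at most $L$.

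\textbf{Step 3: counting bisections.} The starting interval is a cell boundary segment of length $2^{O(\tau)}$. Halving repeatedly until the length is at most $L$ takes $\log_2(2^{O(\tau)}/L)$ levels, which equals
\[
O(\tau) + \log(1/r) + \tfrac16\log(M_g/\sigma) = O\big(\tau_P + \log(1/\sigma) + \log(M+N)\big),
\]
using $\tau \le \tau_P$. Bisecting uniformly, without any adaptive test, means the depth is exactly this quantity.

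The main obstacle is Step 1: justifying the complex-disk bound on $g$, and making sure $g$ does not leave the holomorphic region through $\sqrt{D}$ or through $s^*$ colliding with another branch. The root-separation radius only controls $\mathcal{C}$. If controlling $s^*$ directly proves awkward, I would fall back on the resultant of $P$ with $D$ to add those finitely many points to the excision set, at an extra cost of $O(\tau_P)$ bits.
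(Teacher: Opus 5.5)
\textbf{Relation to the paper.} Your Steps 2 and 3 are the paper's proof essentially verbatim. The paper substitutes $r=2^{-O(\tau_P)}$ and $M_g$ from \Cref{lem:magnitude} into \Cref{lem:cauchyremainder}, solves $4M_g(L/2r)^6\le\sigma$, and counts $\log_2(2^{O(\tau)}/L)$ halvings. One small slip: $2r(\sigma/4M_g)^{1/6}\le r$ needs $\sigma\le M_g/16$, not $\sigma\le 4M_g$, and the ``trivial'' case is not trivial in between. Taking $L=\min\{r,\,2r(\sigma/4M_g)^{1/6}\}$ removes the case split.

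Your Step 1 goes further than the paper, and rightly so. \Cref{lem:magnitude} bounds $g$ only for real $t$, whereas \Cref{lem:cauchyremainder} needs $|g|\le M_g$ on the complex disk, and the paper passes over this. Your Cauchy-bound repair is the right idea, but it also needs $|\mathrm{lc}_s P(t)|$ bounded below on the disk. The zeros of $\mathrm{lc}_s P$ are roots of $\mathcal{C}$, since $\mathrm{Res}_s(P,P_s)=\pm\,\mathrm{lc}_s(P)\,\mathrm{Disc}_s(P)$. So this step, too, hinges on the disk staying away from the roots of $\mathcal{C}$.

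\textbf{The gap.} The step that fails, in your Step 1 and equally in the paper, is the claim that $g$ is holomorphic on a disk of radius $r$ about the midpoint of \emph{every} length-$L$ piece. Root separation for $\mathcal{C}$ bounds distances between roots, not the distance from a piece to a root.
\begin{itemize}
\item Non-real roots are harmless. They come in conjugate pairs, so $2|\operatorname{Im} z|\ge\rho$, and a disk of radius $<r$ centred on the real axis misses them. The same argument, applied to $\mathrm{Res}_s(P,D)$, handles the complex zeros of $D(s^*(t),t)$ you worry about.
\item A real branch point $t_a$ of $s^*$ at an end of the interval is not harmless, and this is exactly what the Puiseux excision leaves behind. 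The buffer has half-width $\ell_0\approx\sigma^2/|c'|^2$, and nothing forces $\ell_0\gtrsim r$ (typically $\ell_0\ll r$). So $t_a$ lies inside the radius-$r$ disk of the first few pieces, and \Cref{lem:cauchyremainder} cannot be invoked there.
\item The conclusion itself also fails there. If $c'\neq0$, a quintic interpolating $g(t_a)+c'\sqrt{t-t_a}+O(t-t_a)$ on $[t_a+\ell_0,\,t_a+\ell_0+L]$ has error $\gtrsim|c'|\sqrt{L}$ once $L\gg\ell_0$. The bound $\sigma$ then requires $L\lesssim\sigma^2/|c'|^2$, and nothing in the choice $L\asymp r\sigma^{1/6}/M_g^{1/6}$ guarantees this.
\item A buffer at a corner where $D$ vanishes, as you propose, would have the same defect.
\end{itemize}

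\textbf{Two ways to close it.}
\begin{itemize}
\item Grade the mesh near $t_a$. At distance $\delta$ from $t_a$, use pieces of length $\asymp\delta\,(\sigma/M_\delta)^{1/6}$, where $M_\delta\lesssim|c'|\sqrt{\delta}$ bounds $|g-g(t_a)|$ on a disk of radius $\asymp\delta$. Using $g-g(t_a)$ is legitimate because interpolation reproduces constants. This keeps the depth $O(\tau_P+\log(1/\sigma)+\log(M+N))$ and the piece count $O(\sigma^{-1/6})$, up to $2^{O(\tau_P)}$ factors.
\item Alternatively, show this configuration cannot occur for an \emph{optimal} branch. At a simple fold, the merging critical point is an inflection of $G(\cdot,t)$, so the global minimizer switches strictly before reaching it. Root separation applied to $\mathcal{C}$ times the polynomial cutting out the switching points then keeps all real branch points of $s^*$ at distance $2^{-O(\tau_P)}$ from $[t_a,t_b]$.
\end{itemize}
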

\begin{proof}
Substituting $r = 2^{- O(\tau_P)}$ and $M_g = 2^{O(\tau+ \log(M+N))}$ (by Lemma \ref{lem:magnitude}) into Lemma \ref{lem:cauchyremainder} and solving $4M_g(L/2r)^6 \le \sigma$ for $L$ gives the claimed length. Bisection of a dyadic interval is exact and halves its length at each step. Since the original Regime 1 intervals are of length bounded by $2^{O(\tau)}$, the depth required to further subdivide them into intervals of length $L$ is $$\beta = \log_2(\text{initial length}/L) = O\big(\tau_P + \log(1/\sigma) + \log(M+N)\big).$$
\end{proof}

\begin{thm} \label{thm:quinticerror}
At every diagonal step and every candidate output boundary branch, there is a quintic polynomial $p_{\mathrm{out}}$, computable via $O\big(\tau_P + \log(1/\sigma) + \log(M+N)\big)$ bisections and $O(1)$ node evaluations, such that
\[
\sup_t |g(t) - p_{\mathrm{out}}(t)|  \le  \sigma
\]
on every resulting sub-interval, with $O(\sigma^{-1/6})=O\!\left(\left(\tfrac{M+N}{\eps_0}\right)^{1/6}\right)$ sub-intervals produced per output branch.
\end{thm}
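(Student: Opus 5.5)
The plan is to assemble the statement from Lemma~\ref{lem:cauchyremainder} and Proposition~\ref{prop:regularerror}, after first reducing to sub-intervals on which the branch $g$ is holomorphic.

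\textbf{Setup on a fixed output branch.} Fix a candidate branch $g(t)=G(s^*(t),t)$ produced by \Cref{10}. Compute the critical points of $[t_a,t_b]$: the real roots of $\mathcal{C}(t)$ from \Cref{dfn:regime}, together with the transition points of \Cref{10}. There are $O(1)$ of them, since $\deg \mathcal{C}=O(1)$. At each double root, excise a buffer of half-width $\ell_0=2^{-O(\tau_P)}\sigma^2$ as described before Lemma~\ref{lem:magnitude}, and replace $g$ there by the constant $g(t_a)$. The Puiseux expansion $g(t)=g(t_a)+c'\sqrt{t-t_a}+O(t-t_a)$ with $|c'|\le 2^{O(\tau_P)}$ gives error at most $|c'|\sqrt{\ell_0}+O(\ell_0)\le\sigma$ on the buffer, for a suitable constant in $\ell_0$. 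Locating the buffer endpoints costs $O(\tau_P+\log(1/\sigma))$ bisections, and each buffer adds $O(1)$ pieces.

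\textbf{Interpolation on the regular pieces.} On the remaining pieces, $s^*$ is single-valued. By the holomorphic implicit function theorem, $s^*$ and hence $g$ (note $D>0$ away from the degenerate path, so $\sqrt{D}$ is holomorphic nearby) extend holomorphically to the disk of radius $r=\tfrac12\rho=2^{-O(\tau_P)}$ around each interior point $t_0$. Here $\rho$ comes from the Davenport--Mahler--Mignotte bound applied to $\mathcal{C}$.

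\textbf{Main obstacle: the sup bound on the complex disk.} Lemma~\ref{lem:magnitude} bounds $g$ only on the real segment. On the complex disk I would bound $|g|$ through the coefficient bounds instead. Each of $f_{\mathrm{in}}$, $D$ and $H$ has $2^{O(\tau+\tau_{\mathrm{in}})}$-bounded coefficients, and the root $s^*(z)$ is bounded by Cauchy's root bound $2^{O(\tau_P)}$. This gives $|g|\le 2^{O(\tau_P)}(M+N)=:M_g$ up to adjusting constants, which only changes the $O(\tau_P)$ terms.

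\textbf{Choosing the length and counting.} Now apply Lemma~\ref{lem:cauchyremainder} on each sub-interval of length $L\le r$ with the six Chebyshev nodes. Proposition~\ref{prop:regularerror} then says that taking $L=\Theta(r(\sigma/M_g)^{1/6})$ yields error at most $\sigma$. Reaching this length by dyadic bisection from an initial interval of length $2^{O(\tau)}$ takes $\beta=O(\tau_P+\log(1/\sigma)+\log(M+N))$ halvings. The number of sub-intervals per branch is then
\[
\frac{2^{O(\tau)}}{L}=2^{O(\tau_P)}\left(\frac{M_g}{\sigma}\right)^{1/6}=O(\sigma^{-1/6}),
\]
treating $\tau,\tau_P$ as constants (in the quadratic setting they are fixed by Theorem~\ref{thm:main-quintic}'s bit-size control). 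On each sub-interval, $p_{\mathrm{out}}$ is obtained from $O(1)$ node evaluations of $g$: at each node, isolate the root $s^*$ of $P(\cdot,t_k)$ as in \Cref{10}, then evaluate $G$.

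\textbf{Choice of $\sigma$.} Finally, set $\sigma=\Theta(\eps_0/(M+N))$. Then the $O(M+N)$ diagonal steps accumulate total error at most $\eps_0$, and the piece count becomes $O\!\left(\left(\frac{M+N}{\eps_0}\right)^{1/6}\right)$, as claimed.
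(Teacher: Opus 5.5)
Your route is the paper's: excise buffers at the branch points, apply Lemma~\ref{lem:cauchyremainder} through Proposition~\ref{prop:regularerror} elsewhere, count the pieces, and set $\sigma=\Theta(\eps_0/(M+N))$. You are right that Lemma~\ref{lem:magnitude} bounds $g$ only on the real segment, while Lemma~\ref{lem:cauchyremainder} needs $|g|\le M_g$ on the complex disk; the paper passes over this. However, two steps do not go through as written. The paper's short proof hides the same problems in the line $O(1/L)=O(\sigma^{-1/6})$.

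\textbf{The piece count.} Your equality $2^{O(\tau)}/L=2^{O(\tau_P)}(M_g/\sigma)^{1/6}=O(\sigma^{-1/6})$ needs $r$ and $M_g$ to be independent of $\sigma$ and $M+N$, and they are not.
\begin{itemize}
\item Once the input piece is itself an interpolant from an earlier step, Theorem~\ref{thm:coeffbitsize} and Proposition~\ref{prop:nocompound} only bound $\tau_{\mathrm{in}}$ by a quantity of order $\tau+\log(1/\sigma)+\log(M+N)$. Since $\tau_P=O(\tau+\tau_{\mathrm{in}})$, $\tau_P$ inherits this growth.
\item The only lower bound you then have on the radius is $r=2^{-O(\tau_P)}\ge 2^{-O(\tau)}\bigl(\sigma/(M+N)\bigr)^{K}$. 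Here $K>0$ is a constant coming from the Davenport--Mahler--Mignotte bound for $\mathcal{C}$, and any $K>0$ already destroys $O(\sigma^{-1/6})$.
\item Your disk bound $M_g=2^{O(\tau_P)}(M+N)$ has the same defect. Its factor $M+N$ alone gives at least $(M+N)^{1/3}\eps_0^{-1/6}$ pieces after substituting $\sigma$, not $((M+N)/\eps_0)^{1/6}$.
\item Appealing to Theorem~\ref{thm:main-quintic} does not make $\tau_P$ constant. Its item 3 gives $\tau_{\mathrm{out}}=O(\tau+\log\frac{M+N}{\eps_0})$, and $2^{O(\log(1/\sigma))}$ is a power of $1/\sigma$.
\end{itemize}
To reach the stated count you would need a holomorphy radius and a disk bound for $g$ that depend on none of $\tau_{\mathrm{in}}$, $\sigma$, $M+N$. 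Without them the argument only yields a count polynomial in $(M+N)/\eps_0$ with an unspecified exponent.

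\textbf{The radius near branch points.} Root separation keeps distinct roots of $\mathcal{C}$ apart, so non-real roots lie at distance at least $\rho/2$ from $\mathbb{R}$. It says nothing about the distance from a real center $t_0$ to a real branch point of $s^*$.
\begin{itemize}
\item A regular piece ends at such a point, or at its buffer edge at distance $\ell_0=2^{-O(\tau_P)}\sigma^2$ from it. With your parameters $\ell_0\ll L$, so the sub-interval adjacent to the buffer has its center at distance $\ell_0+L/2<L$ from the singularity.
\item No admissible radius $r'\ge L$ avoids that point, so Lemma~\ref{lem:cauchyremainder} does not apply there.
\item With the $c'\sqrt{t-t_a}$ term you assume ($c'\neq 0$), no quintic approximates $g$ on that sub-interval better than order $|c'|\sqrt{L}$, which is far above $\sigma$.
\item Complex zeros of $D(s^*(z),z)$ near the corner shared by the input and output edges shrink the radius in the same way. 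So ``$\sqrt{D}$ is holomorphic nearby'' also needs a quantitative radius.
\end{itemize}
You need a mesh graded toward each such point, with lengths proportional to the distance from it, and its extra pieces must be counted separately.
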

\begin{proof}
This immediately follows from Proposition \ref{prop:regularerror} and the discussion of processing double-roots of the resultant polynomial. While $\mathcal{C}$ has $O(1)$ roots, the bisection into intervals of length $L = O(r\sigma^{1/6}/M_g^{1/6})$ partitions $[0,L_{\mathrm{out}}]$ into $O(1/L) = O(\sigma^{-1/6})$ pieces. Then, setting $\sigma = \frac{\eps_0}{M+N}$ gives $O(\sigma^{-1/6}) = O\!\left(\left(\tfrac{M+N}{\eps_0}\right)^{1/6}\right)$ pieces, as claimed.
\end{proof}

\subsubsection{Coefficient Bit-Size Bound for the Quintic}
\label{sssec:bitsize}
Now we bound the coefficient bit-size of the interpolant $p_{\mathrm{out}}$, so that it is a valid input piece $f_{\mathrm{in}}$ the next time \Cref{10} is applied, or in the cells of the following diagonal. In particular, we show that the interpolant retains a coefficient bit-size bounded by a constant.

First we consider the bit-size of the Chebyshev nodes used in the interpolation as follows.
\begin{lemma}[Node Bit-Size] \label{lem:node-bitsize}
Every dyadic node $t_k$ produced by the bisection of Proposition \ref{prop:regularerror}  satisfies $t_k = m_k/2^{B_t}$ with
\[
B_t  =  O\big(\tau + \tau_P + \log(1/\sigma) + \log(M+N)\big).
\]
\end{lemma}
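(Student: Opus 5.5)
The plan is to track the dyadic denominator of each node through the bisection process and then add the extra bits needed to place the six Chebyshev nodes on a dyadic grid. We do this in three steps.

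\textbf{Endpoints of the initial intervals.} Before any bisection, the output boundary segment $[0,L_{\mathrm{out}}]$ has endpoints that are cell corners. These come from the arc-length parametrization of the quadratic pieces. We will round them to dyadic rationals with $O(\tau)$ bits after the binary point; an error of $2^{-\Omega(\tau)}$ here is absorbed into $\sigma$. The critical points of \Cref{dfn:regime} are roots of $\mathcal{C}(t)$, which has constant degree and bit-size $O(\tau_P)$. We will not use them exactly. Instead we represent each one by a dyadic isolating-interval endpoint, refined to precision $2^{-O(\tau_P)}$. This is enough because the Davenport--Mahler--Mignotte separation bound gives $\rho = 2^{-O(\tau_P)}$, so such a precision already separates distinct roots. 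After this step, every breakpoint has the form $m/2^{B_0}$ with $B_0 = O(\tau+\tau_P)$.

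\textbf{Bisection.} Each bisection of a dyadic interval $[m/2^{B}, m'/2^{B}]$ produces the midpoint $(m+m')/2^{B+1}$, so the exponent grows by exactly one per step. By Proposition~\ref{prop:regularerror} the depth is $\beta = O(\tau_P+\log(1/\sigma)+\log(M+N))$. The excision buffers around singular roots use $O(\tau_P+\log(1/\sigma))$ further bisections, which is within the same bound. All sub-interval endpoints therefore have exponent $B_0+\beta$.

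\textbf{Chebyshev nodes.} The true nodes $t_0 + \tfrac{L}{2}\cos\!\big(\tfrac{(2k+1)\pi}{12}\big)$ are irrational, so we round them to dyadics. We truncate the cosines to $O(\log(1/\sigma)+\tau+\log(M+N))$ bits. The key check is that perturbing the nodes by $\eta$ changes the interpolant by at most $\Lambda\cdot \mathrm{Lip}(g)\cdot\eta$ plus a change in the nodal bound that is relative in $\eta/L$. Here $\Lambda=O(1)$ is the Lebesgue constant for well-spaced nodes, and $\mathrm{Lip}(g)\le M_g/r$ by the Cauchy estimate. Choosing $\eta \le \sigma L\, r/M_g$ keeps the error $O(\sigma)$, and since $L$, $r$ and $1/M_g$ all have logarithms of size $O(\tau+\tau_P+\log(1/\sigma)+\log(M+N))$, so does $\log(1/\eta)$. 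Summing the three contributions gives $B_t = O(\tau+\tau_P+\log(1/\sigma)+\log(M+N))$.

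The main obstacle is the last step: we must justify that rounding the Chebyshev nodes does not destroy the error bound of Lemma~\ref{lem:cauchyremainder}. I would argue this via the Lebesgue-constant stability estimate rather than by reworking the contour formula. A secondary point to state explicitly is that the initial endpoints, which come from algebraic roots, are replaced by dyadic approximations.
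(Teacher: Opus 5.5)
Your proof is correct. Its middle step is the paper's entire proof: the paper starts from an $O(\tau)$-bit dyadic cell boundary, adds one bit per bisection over the $\beta=O(\tau_P+\log(1/\sigma)+\log(M+N))$ steps of Proposition~\ref{prop:regularerror}, and stops there.

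Your other two steps supply what the paper takes for granted. First, the intervals being bisected are delimited by roots of $\mathcal{C}$, which are algebraic, not dyadic. Second, and more importantly, the six interpolation nodes are not bisection points at all. Theorem~\ref{thm:coeffbitsize} treats them as having bit-size $B_t$, yet $\cos\bigl((2k+1)\pi/12\bigr)$ is irrational. So the lemma as the paper proves it does not literally cover the nodes used downstream; your version does, at the price of a node-perturbation estimate.

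That estimate is easier than your Lebesgue-constant argument. Hermite's formula in Lemma~\ref{lem:cauchyremainder} holds verbatim for any distinct nodes inside the disk. Evaluate $g$ at the rounded nodes (as the algorithm does) and round each node to within $L/C$ for a large absolute constant $C$. Then $\max_{[t_a,t_b]}|\omega|$ and $\min_{|z-t_0|=r}|\omega(z)|$ change only by constant factors. No $\mathrm{Lip}(g)$ term appears, and you get the sharper $B_t=B_0+\beta+O(1)$.

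Three small slips, none affecting the final bound:
\begin{enumerate}
\item Rounding cell corners to $O(\tau)$ bits gives an error $2^{-\Omega(\tau)}$ that need not be below $\sigma=\eps_0/(2(M+N))$. You need $O(\tau+\log(1/\sigma)+\log(M+N))$ bits there.
\item Precision $2^{-O(\tau_P)}$ separates the roots of $\mathcal{C}$. It does not locate a singular root to within the buffer width $\ell_0=2^{-O(\tau_P)}\sigma^2$, which needs $O(\tau_P+\log(1/\sigma))$ bits.
\item With your choice $\eta=\sigma L r/M_g$, the cosine truncation must also include the $\log(1/r)=O(\tau_P)$ bits you omitted from the stated precision.
\end{enumerate}
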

\begin{proof}
Bisection of a dyadic interval adds exactly one bit to the denominator exponent per step, so, starting from an $O(\tau)$-bit-length cell boundary, performing $\beta = O\big(\tau_P+\log(1/\sigma)+\log(M+N)\big)$ steps (see Proposition \ref{prop:regularerror}) gives the claimed bound.
\end{proof}

We must then refine the isolating intervals using quadratic interval refinement (see \cite{kerber2015QIR}), so that the evaluation of the interpolant stays within the required accuracy target. The complexity of this refinement is bounded as follows.

\begin{thm}[Refinement Complexity] \label{thm:refinement}
Given an isolating interval for $s^*(t_k)$, refining it (and consequently the evaluation of $y_k := g(t_k)$) to $b$ bits costs $\tilde{O}_B(d_s^3 \tau_P^2 + d_s b)$ bit operations, using quadratic interval refinement.
\end{thm}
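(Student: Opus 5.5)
The plan is to treat this as a direct application of the known bit-complexity of quadratic interval refinement (QIR) from \cite{kerber2015QIR}. It then remains to check that evaluating $g$ at the refined root adds nothing beyond what the refinement already costs. The univariate polynomial of interest is $P_k(s) := P(s,t_k)$, obtained by substituting the dyadic node $t_k = m_k/2^{B_t}$ of Lemma~\ref{lem:node-bitsize} into $P(s,t)$. Clearing the common denominator $2^{B_t \deg_t P}$ makes $P_k$ an integer polynomial. Its degree in $s$ is $d_s$, and its coefficient bit-size is $\tau_P + O(B_t)$. Since $\deg_t P = O(1)$ in our quadratic setting, I absorb this into $\tau_P$ in the $\tilde{O}$ notation. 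The isolating interval for $s^*(t_k)$ is supplied by the root isolation step in the proof of \Cref{10}.

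First, I invoke the QIR analysis. Refining an isolating interval of a square-free degree-$d_s$ integer polynomial with bit-size $\tau_P$ down to width $2^{-b}$ costs $\tilde{O}_B(d_s^3\tau_P^2 + d_s b)$ bit operations. This splits into two phases. In the initial phase, QIR behaves like bisection until the interval is small relative to the root separation; this is $2^{-O(d_s\tau_P)}$ by the Davenport–Mahler–Mignotte bound already used above. This phase accounts for the $d_s^3\tau_P^2$ term. In the second phase QIR converges quadratically, so the number of bits doubles each step. Each step evaluates $P_k$ at a point of precision $O(b)$, and summed over the geometric sequence this gives $\tilde{O}(d_s b)$. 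If $P_k$ is not square-free (for example because of extraneous roots introduced by squaring), I first pass to its square-free part. By Mignotte's bound this changes the bit-size only by $O(d_s+\tau_P)$ and does not affect the stated bound.

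Second, I transfer the accuracy from $s^*(t_k)$ to $y_k = g(t_k) = f_{\mathrm{in}}(s^*) + \sqrt{D(s^*,t_k)}\,H(s^*,t_k)$. On the isolating interval the function $s \mapsto G(s,t_k)$ is Lipschitz with constant $2^{O(\tau_P)}$, using the bit-size bounds on $f_{\mathrm{in}}$, $D$ and $H$ from the proof of \Cref{10} together with $D$ bounded away from $0$ on a Type~1 segment. Hence refining $s^*$ to $b + O(\tau_P)$ bits yields $y_k$ to $b$ bits. The extra $O(\tau_P)$ contributes only $O(d_s\tau_P)$, which is dominated by the first term. Evaluating the constant-degree polynomials and one square root (computed by Newton's method) to $b$ bits costs $\tilde{O}(b)$, which is again dominated.

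The main obstacle is the degenerate case $D(s^*,t_k)\approx 0$, where the square root is not Lipschitz, and the general worry that the error amplification $2^{O(\tau_P)}$ might be larger than claimed. I would handle this by noting that $D(s,t)=\|Y(t)-X(s)\|^2$ vanishes only when the path degenerates to a point, which happens at a shared corner. Such corners are exactly the cell corners, and there the cost is known directly. Away from them, $D \ge 2^{-O(\tau_P)}$ holds by the same root-separation reasoning.
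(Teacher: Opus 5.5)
Your route is the paper's. The paper gives no argument beyond attributing the bound to quadratic interval refinement \cite{kerber2015QIR}, and your first two paragraphs are that same invocation, with a reasonable heuristic account of where the $d_s^3\tau_P^2$ and $d_s b$ terms come from. Your absorption of the extra $O(B_t)$ bits of $P(s,t_k)$ into $\tau_P$ is at the paper's own level of rigor, since it never distinguishes $P(s,t)$ from its specialization. The paper only asserts the transfer from $s^*(t_k)$ to $y_k$ with the word ``consequently'', so your Lipschitz step is a useful supplement.

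One correction to that supplement: the ``main obstacle'' you identify is not real, and your way of disposing of it does not hold up. Since $X(s)=X_0+s\mathbf{e}_{\mathrm{in}}$ with $\|\mathbf{e}_{\mathrm{in}}\|_2=1$, the triangle inequality shows that $s\mapsto\sqrt{D(s,t)}=\|Y(t)-X(s)\|_2$ is $1$-Lipschitz. Equivalently, $|D_s|\le 2\sqrt{D}$, so the term $D_sH/(2\sqrt{D})$ in $\partial_s G$ is bounded by $|H|$. Hence $G(\cdot,t_k)$ is Lipschitz on the whole cell with constant at most $\max|f_{\mathrm{in}}'|+\max|H|+\max\sqrt{D}\,\max|H_s|=2^{O(\tau_P)}$, and no lower bound on $D$ is needed. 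Your claim that $D(s^*,t_k)\ge 2^{-O(\tau_P)}$ ``by the same root-separation reasoning'' is not justified. $D$ vanishes only at the shared corner, and how close $Y(t_k)$ can come to that corner depends on the dyadic bit-size $B_t$ of the node, not on any root-separation bound. That argument would give only $D\ge 2^{-O(B_t)}$. Replace that paragraph with the $1$-Lipschitz observation and the proposal is fine.
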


Moreover, by standard results \cite{Ibrahimoglu2016Lebesgue}, we have that rounding sample values to an absolute error $\eta_{\mathrm{err}}$ perturbs the quintic Chebyshev interpolant by at most $3\,\eta_{\mathrm{err}}$ (in supremum norm). Thus, so that the rounded value $\tilde{y}_k$ has absolute error $\le \sigma/3$, we take $b = O\big(\tau+\log(1/\sigma)+\log(M+N)\big)$. This gives us a sample bit-size
\[
B_y  :=  O\big(\tau+\log(1/\sigma)+\log(M+N)\big).
\]
Furthermore, let $B^* := B_t + B_y = O\big(\tau+\tau_P+\log(1/\sigma)+\log(M+N)\big)$.
 
We establish the bound for the coefficient bit-size of the interpolant as follows.
\begin{thm}[Coefficient Bit-Size Bound of Quintic Interpolating Polynomial] \label{thm:coeffbitsize}
The coefficients of $p_{\mathrm{out}}$, expressed as dyadic rationals, have bit-size
\[
\tau_{\mathrm{out}}  =  O(B^*) = O\big(\tau+\tau_P+\log(1/\sigma)+\log(M+N)\big).
\]
\end{thm}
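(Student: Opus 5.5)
We will write $p_{\mathrm{out}}$ in an explicit form whose coefficients are rational functions of the six nodes $t_k$ and the six rounded samples $\tilde{y}_k$. Then we bound numerators and denominators using Lemma \ref{lem:node-bitsize} and the definition of $B_y$. To keep constants uniform, we first apply an affine change of variable. Set $u = (t-t_0)/(L/2)$, so that $[t_a,t_b]$ maps to $[-1,1]$. In the $u$-variable the nodes are the standard Chebyshev nodes $\cos\big((2k+1)\pi/12\big)$. These are irrational, but after bisection we replace them by their dyadic roundings $t_k = m_k/2^{B_t}$. Rounding only perturbs the Lebesgue constant by $O(1)$, so the $3\eta_{\mathrm{err}}$ perturbation bound quoted above still applies. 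Since $t_0$ and $L$ are themselves dyadic with $O(B_t)$ bits, everything stays in dyadic rationals of controlled size.

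The key step is to express the coefficients through Lagrange's formula,
\[
p_{\mathrm{out}}(t) = \sum_{k=0}^{5} \tilde{y}_k \prod_{j\neq k}\frac{t-t_j}{t_k-t_j}.
\]
Expanding the numerator $\prod_{j\ne k}(t-t_j)$ gives elementary symmetric functions of five nodes. Each is a sum of at most $\binom{5}{i}$ products of at most five numbers of bit-size $B_t$, so it has bit-size at most $5B_t+O(1)$. The denominator $\prod_{j\ne k}(t_k-t_j)$ is a product of five differences of dyadic numbers with common denominator $2^{B_t}$. In absolute value it is bounded below by the minimal node gap to the fifth power. The Chebyshev gaps are $\Theta(L)$, and $L \ge 2^{-\beta-O(\tau)}$ with $\beta$ as in Proposition \ref{prop:regularerror}. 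Hence $1/\prod_{j\ne k}(t_k-t_j)$ has magnitude $2^{O(B_t)}$, and it is a dyadic rational with $O(B_t)$ bits in both numerator and denominator; that the difference of two distinct dyadic nodes has a numerator of at least one is immediate. Multiplying by $\tilde{y}_k$ (bit-size $B_y$) and summing six terms contributes only $O(1)$ further bits. Therefore each coefficient has bit-size $O(B_t+B_y) = O(B^*)$. For an exact dyadic representation we clear the common denominator, at a cost of $O(B_t)$ bits since the products are of constant length. Substituting the bounds on $B_t$ and $B_y$ then gives $\tau_{\mathrm{out}} = O(\tau+\tau_P+\log(1/\sigma)+\log(M+N))$.

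The main obstacle is the denominator bound. The factor $1/(t_k-t_j)$ could blow up, and dividing by a dyadic difference is not generally dyadic. We handle this in two ways. First, we choose the rounded nodes in the $u$-coordinate on a fixed grid, for example $2^{-10}$-dyadic approximations of the Chebyshev nodes in $[-1,1]$. Then the Vandermonde inverse in $u$ is a fixed constant matrix with $O(1)$ bit-size entries, and all dependence on the instance enters through the affine map $t\mapsto u$. Second, the re-expansion of $p(u)$ into powers of $t$ multiplies coefficients by powers of $2/L$ and $t_0$, each with $O(B_t)$ bits. This raises the exponent by at most a factor of $5$, which is still $O(B^*)$. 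Working in the normalized variable thus keeps the constant-matrix part trivial, and the only growth comes from the rescaling, which is linear in $B_t$.
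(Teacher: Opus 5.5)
\paragraph{Verdict.}
The bound you derive is right, and its core, your second paragraph, is the paper's proof. Both use the Lagrange form, apply a constant number of arithmetic operations to $O(B_t)$-bit nodes, use node gaps $\Theta(L)$ with $L\ge 2^{-O(B_t)}$ to control $1/\prod_{j\ne k}(t_k-t_j)$, and finish by multiplying by $\tilde y_k$ and summing six terms.

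\paragraph{What your normalization adds.}
Your first and third paragraphs supply something the paper actually needs. The Chebyshev nodes of a dyadic interval are irrational; they involve $\cos(\pi/12)=(\sqrt{6}+\sqrt{2})/4$. So Lemma~\ref{lem:node-bitsize} silently treats bisection endpoints as if they were the interpolation nodes. Rounding the normalized nodes to a fixed, fine dyadic grid repairs this:
\begin{itemize}
\item the nodes become dyadic with $O(B_t)$ bits;
\item the Lebesgue constant stays near its Chebyshev value of about $2.1$, so the paper's $3\eta_{\mathrm{err}}$ bound survives (state it this way rather than ``perturbed by $O(1)$'');
\item the constant in Lemma~\ref{lem:cauchyremainder} changes only by an absolute factor;
\item the inverse Vandermonde matrix becomes instance-independent, so all bit growth comes from re-expanding in powers of $2/L$ and $t_0$.
\end{itemize}

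\paragraph{The step that fails: dyadicity.}
This step fails both in your second paragraph and in your proposed fix, and the paper's proof has the same problem. Write $t_k=m_k/2^{B_t}$. Then $1/\prod_{j\ne k}(t_k-t_j)=2^{5B_t}/\prod_{j\ne k}(m_k-m_j)$, which is dyadic only if every $m_k-m_j$ is $\pm$ a power of two. Four distinct numbers cannot all have pairwise differences of that form: equal consecutive gaps $2^i$ force an outer gap of $3\cdot 2^i$. So with six nodes some basis coefficient has an odd denominator.

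Your fixed grid does not change this. The constant inverse Vandermonde matrix still has odd denominators; for example, nodes $0,1,\dots,5$ with data $0,0,0,0,0,1$ give $t(t-1)(t-2)(t-3)(t-4)/120$. Also, $2/L$ need not be dyadic.

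What your argument and the paper's actually prove is that the coefficients of $p_{\mathrm{out}}$ are rationals whose numerators and denominators have $O(B^*)$ bits. After clearing denominators, this is all that the root isolation in Theorem~\ref{10} requires.

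\paragraph{How to get literally dyadic coefficients.}
Add one final step: round the six monomial coefficients to absolute precision $\sigma\,2^{-O(\tau)}$. Since $t$ ranges over a boundary segment of length $2^{O(\tau)}$, this moves $p_{\mathrm{out}}$ by $O(\sigma)$ in sup norm. That error is absorbed by shrinking $\sigma$ by a constant factor, and the bit-size remains $O(B^*)$.
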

\begin{proof}
Each Lagrange basis polynomial $\ell_k(t) = \prod_{j \ne k} (t-t_j)/(t_k-t_j)$ comes from the $6$ fixed nodes $t_j$ (with bit-size $B_t$) combined via a constant number of arithmetic operations: $5$ subtractions and products forming the numerator, and $4$ products forming the denominator constant. Since pairwise node separation is bounded by $\Theta(L)$ (where $L = 2^{- O(B_t)}$ by construction), the denominator satisfies $$2^{-O(B_t)} \le |\prod_{j\ne k}(t_k-t_j)| \le 2^{O(B_t)},$$ so its reciprocal has bit-size $O(B_t)$. Combining a fixed number of sums and products of $O(B_t)$-bit dyadic rationals yields the coefficient size bound on $\ell_k$ as $O(B_t)$.
 
Finally, multiplying by the rounded sample $\tilde{y}_k$ (with bit-size $B_y$) and summing the $6$ terms adds the bit-sizes: $O(B_t) + O(B_y) + O(1) = O(B^*)$, as claimed.
\end{proof}
 
$\tau_{\mathrm{out}}$ depends on characteristics of the current cell and problem setup, and so does not grow as propagation progresses; thus, $\tau_{\mathrm{out}} = O(1)$, allowing the real root isolation algorithms to run in constant time in \Cref{10}. Note that $\tau_{\mathrm{in}}$ only shows up in the bound for $\tau_P$ (see proof of \Cref{10}), which only affects the number of bisections performed (affecting the running time) and not $\tau_{\mathrm{out}}$ because we always reconstruct $p_{\mathrm{out}}$ from rounded interpolating data.
 
\subsubsection{No Bit-Size Explosion Across Diagonal Steps}
\label{sssec:nocompound}
The bound of \Cref{thm:coeffbitsize} is stated for a single cell, in terms of the bit-size $\tau_{\mathrm{in}}$ of that cell's own input piece. However, $\tau_{\mathrm{in}}$ is itself the output bit-size $\tau_{\mathrm{out}}$ of a piece produced one diagonal earlier. For the bound to be of any use across the whole sweep, it must therefore hold with the same constant at every diagonal $k$, rather than growing with $k$. We show precisely this, that the constant bound on coefficient bit-size is retained at each inductive step, from diagonal to diagonal.

\begin{prop}[Bit-Size Bound Across Diagonal Steps] \label{prop:nocompound}
There is a constant $C$, independent of $M, N, \eps_0$, and the diagonal index $k$, such that if every quintic output piece produced at diagonal step $k-1$ has bit-size at most
\[
B^*(\sigma)  :=  C\Big(\tau+\log(1/\sigma)+\log(M+N)\Big),
\]
then every quintic output piece produced at diagonal step $k$ also has bit-size at most $B^*(\sigma)$.
\end{prop}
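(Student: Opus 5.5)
The proof is an induction on the diagonal index $k$. The key observation is that the output bit-size in \Cref{thm:coeffbitsize} is assembled entirely from two quantities: the node bit-size $B_t$ of \Cref{lem:node-bitsize} and the sample bit-size $B_y$ from the rounding step. The input bit-size $\tau_{\mathrm{in}}$ enters only through $\tau_P$. The goal is therefore to show that $\tau_P$ influences $B_t$ and $B_y$ only additively, and in a way that can be absorbed into $B^*(\sigma)$ without feeding back on itself.

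\textbf{Base case.} At $k=2$, the pieces $A_1.\mathrm{bottom}$ and $A_1.\mathrm{left}$ are exact antiderivatives of the quartic $h$. As in the computation of $\tau_H$ in the proof of \Cref{10}, their bit-size is $O(\tau)$, which is below $B^*(\sigma)$ once $C$ is large enough.

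\textbf{Inductive step, part one: bound $\tau_P$.} Assume every input piece at step $k$ has $\tau_{\mathrm{in}} \le B^*(\sigma)$ and degree $\delta_{\mathrm{in}}=5$. The proof of \Cref{10} gives
\[
\tau_P = O(\tau + \tau_{\mathrm{in}}) \le C_1\bigl(\tau + B^*(\sigma)\bigr),
\]
where $d=2$ and $\delta_{\mathrm{in}}=5$ are constants and $C_1$ is an absolute constant.

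\textbf{Inductive step, part two: decouple the output from $\tau_P$.} This is the main obstacle. Substituting the bound on $\tau_P$ naively into \Cref{thm:coeffbitsize} gives $B^*_{\mathrm{new}} \le C_2\bigl(\tau+\tau_P+\log(1/\sigma)+\log(M+N)\bigr)$. That yields $C \ge C_2(1+C_1)C$, which is impossible, so the recursion would blow up. I would break this loop by reorganising the interpolation so that the output bit-size no longer depends on $\tau_P$:

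\begin{itemize}
\item \emph{Nodes.} The subdivision points are dyadic, and their depth is only used to guarantee the error $\sigma$. Rather than bisecting to absolute depth $O(\tau_P)$, I would represent $p_{\mathrm{out}}$ in the local, normalised variable $u=(t-t_0)/L\in[-1,1]$. The Chebyshev nodes in $u$ are fixed constants, so the Lagrange basis in $u$ has $O(1)$-bit coefficients regardless of $L$. The dependence on $L$, and hence on $\tau_P$, is then confined to the interval endpoints $t_a,t_b$. These are stored separately as domain data and are never fed back into the coefficient bit-size that determines $P(s,t)$ in the next cell.
\item \emph{Samples.} By \Cref{lem:magnitude}, the values satisfy $|g|\le M_g = 2^{O(\tau)}(M+N)$. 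Rounding to absolute error $\sigma/3$ therefore needs only $O(\tau+\log(M+N)+\log(1/\sigma))$ bits, independent of $\tau_P$. The $\tau_P$-dependence appears only in the cost of the refinement of \Cref{thm:refinement}, not in the stored sample bit-size.
\end{itemize}

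With both changes, the coefficients of $p_{\mathrm{out}}$ in the local variable have bit-size at most $C_3\bigl(\tau+\log(1/\sigma)+\log(M+N)\bigr)$, where $C_3$ depends only on $d=2$ and the degree $5$. Choosing $C=\max(C_3,C_{\mathrm{base}})$ closes the induction.

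\textbf{Remaining verification.} One must check that re-expressing $f_{\mathrm{in}}$ in the next cell's parametrisation does not reintroduce $\tau_P$. This amounts to a composition with an affine map whose coefficients come from the cell geometry, which has bit-size $O(\tau)$, and from the stored endpoints. I would handle this by carrying out the root isolation of \Cref{10} in the local variable as well. The endpoint bit-size then affects only running time, consistent with the remark that $\tau_{\mathrm{in}}$ affects only the number of bisections.
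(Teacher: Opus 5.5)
Your objection to the naive substitution is correct, and it applies verbatim to the paper's own proof. The paper does exactly that: it plugs $\tau_P=O(\tau+\tau_{\mathrm{in}})=O(B^*(\sigma))$ into Theorem~\ref{thm:coeffbitsize} and says that choosing $C$ large enough absorbs the constants. The proof of Theorem~\ref{10} has $\tau_P\ge 2\tau_{\mathrm{in}}$. The node bit-size $B_t$ contains a bisection depth that is linear in $\tau_P$, through $r=2^{-O(\tau_P)}$. So the right-hand side is a factor larger than one times $C$, and no choice of $C$ works. The paper's earlier remark, that $\tau_{\mathrm{in}}$ ``only affects the number of bisections'', has the same defect, because bisection depth is exactly what $B_t$ counts. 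You are also right that $B_y$ is harmless and $B_t$ is the only channel. So you are not missing a trick the paper has. The question is whether your repair closes the loop, and as written it does not.

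The gap is in your ``remaining verification'', which is where the recursion you diagnosed comes back.

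\begin{itemize}
\item \emph{Your invariant is the wrong quantity.} Bounding the coefficients of $p_{\mathrm{out}}$ in the normalised variable $u$ is easy, but that bound never uses the inductive hypothesis. This signals that it is not the quantity that must stay invariant. Theorem~\ref{10} consumes $f_{\mathrm{in}}$ as a function of the cell's boundary parameter $s$. The relevant size of an input piece is therefore that of the pair (local coefficients, endpoint data $s_0,L$).
\item \emph{The endpoints feed back into $\tau_P$.} You can expand $u=(s-s_0)/L$ inside $f_{\mathrm{in}}$, or substitute $s=s_0+Lu$ into $D$ and $H$ and isolate roots in $u$. Either way, $P$ at step $k$ has coefficients involving powers of $s_0$ and $L$. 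Since $D$ and $H$ have degree at least two in $s$, $\tau_P^{(k)}$ is in general at least twice the endpoint bit-size $B_t^{(k-1)}$.
\item \emph{The growth is geometric.} That $\tau_P^{(k)}$ determines $\mathcal{C}$, the radius $r=2^{-O(\tau_P^{(k)})}$, the prescribed bisection depth $\beta^{(k)}$, and hence the new endpoints. So $B_t^{(k)}\ge \beta^{(k)}\ge c\,B_t^{(k-1)}$ with $c>1$. Endpoint sizes, $\tau_P$, and per-cell cost all grow geometrically in $k$.
\end{itemize}

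Saying the endpoints ``only affect running time'' does not rescue this. Geometric growth is exactly what the proposition is meant to exclude, and the constant-time root isolation claimed in Theorem~\ref{10} depends on it not happening.

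To close the induction you need an ingredient that is absent from both your argument and the paper. You need a lower bound on the interpolation length $L$, equivalently on the holomorphy radius $r$, that does not come from the bit-size of the current $P$. That would let the endpoints live on a dyadic grid whose resolution is independent of $k$. Without it, moving the $\tau_P$-dependence from the coefficients into the domain data only relocates the blow-up.
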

\begin{proof}
We proceed with induction on $k$. The base case $k=1$ holds because $A_1.\mathrm{bottom}$ and $A_1.\mathrm{left}$ are exact integrals of $h$ along the parametrized boundary, resulting in bit-size $O(\tau) \le B^*(\sigma)$ for suitable $C$.
 
For the inductive step, suppose $\tau_{\mathrm{in}} \le B^*(\sigma)$ for every piece on $A_{k-1}$. Then we can substitute $\tau_P = O(\tau+\tau_{\mathrm{in}}+\log\delta_{\mathrm{in}}) = O\big(B^*(\sigma)\big)$ (see proof of \Cref{10}) into \Cref{thm:coeffbitsize} gives
\[
\tau_{\mathrm{out}}  =  O\Big(\tau+O(B^*(\sigma))+\log(1/\sigma)+\log(M+N)\Big),
\]
again \textit{independent of} $k$. Choosing $C$ large enough to absorb the fixed chain of constants finishes the induction.
\end{proof}

\subsection{Bounding the Complexity}
We now bound the total computational complexity of our algorithm \textproc{ApproxCDTW}. Let $T$ be the total number of boundary cost function pieces over all boundaries of all cells. Let $T_{\mathrm{env}}$ denote the time spent calculating the global lower envelopes. It follows that the overall complexity of this algorithm is $O(T + T_{\mathrm{env}})$. To bound this complexity further, we first provide a few definitions.

Define $A_k$ to be the union of the input boundaries of the cells $(i,j)$ such that $i+j=k$. Alternatively, $A_k$ is the union of the output boundaries of the cells $(i,j)$ such that $i+j=k-1$. Next, construct the partition 
\[
A_k \coloneqq \{ A_{k,1}, A_{k, 2}, \dots, A_{k, |A_k|} \}
\]
of $A_k$ into subsegments, where the subsegment $A_{k,\ell}$ is the segment between the $\ell^{\mathrm{th}}$ and $(\ell+1)^{\mathrm{th}}$ critical point along $A_k$. We define a critical point along $A_k$ as any of the following:
\begin{enumerate}
    \item A cell corner along $A_k$,
    \item Where the valley  meets the boundary $A_k$, or
    \item Where an optimal path to $A_k$ switches between two subintervals of $A_{k-1}$.
\end{enumerate}
Define $|A_{k,\ell}|$ to be the number of polynomial pieces in the piecewise cost function along the subsegment $A_{k,\ell}$. We express the total number of boundary polynomial pieces $T$ as:
\[
T = \sum_{k=2}^{M+N-2} \sum_{\ell=1}^{|A_k|} |A_{k,\ell}|.
\]
We first show, for all $k$, that the number of subsegments $|A_k|$ is bounded by $O((M+N)^2)$; then, we show that $|A_{k,l}|$ is bounded by $O\!\left(\left(\tfrac{M+N}{\eps_0}\right)^{1/6}\right)$ for any $k,l$.

Consider the propagation of cost functions from $A_{k-1}$ to $A_k$. We want to bound how the number of subintervals $|A_k|$ grows relative to $|A_{k-1}|$. Subintervals are separated by critical points, which can be cell corners, valley intersections with the output boundary, or points where the optimal path switches between source subintervals of $A_{k-1}$. Firstly, by \Cref{path_spatiality}, optimal paths with distinct starting points do not cross. Then, along $A_k$, the mapping back to optimal input locations on $A_{k-1}$ must be monotonic. An optimal path to $A_k$ can switch between two given subintervals of $A_{k-1}$ at most once, because alternating back and forth would require the corresponding optimal paths to intersect, contradicting \Cref{path_spatiality}. This contributes $|A_{k-1}|$ subintervals at most to $|A_k|$. Secondly, the total $2k$ cell corners and valley intersections with the output boundary give $2k$ subintervals.  So, we have $|A_k| \leq |A_{k-1}| + 4k$.  This immediately yields $|A_{k}| \leq 3k^2 = 3(k-1)^2 + 6k - 3$ with a straight-forward induction. Thus, we have  $A_{M+N}=O((M+N)^2)$. 

Each candidate branch $G_i(t)$ on $A_k$ satisfies a minimizer equation $P(s,t) = 0$ of degree $d_s, d_t = O(1)$ (see \Cref{10}). Then, for any two branches $G_a(t)$ and $G_b(t)$, there are at most $\kappa = O(d_s) = O(1)$ real roots on $A_k$. Thus every pair of candidate branches intersects at most $\kappa = O(1)$ times.

To prevent exponential path growth, we construct the global lower envelope $\mathcal{E}_k(t) := \min_{1 \le i \le m_k} G_i(t)$ along $A_k$. Recall that an $(n,s)$ Davenport--Schinzel sequence is a sequence over $n$ symbols with no two adjacent symbols equal and no alternating subsequence $a \cdots b \cdots a \cdots b \cdots$ of length $s+2$ between any two distinct symbols $a,b$. Let $\lambda_s(n)$ denote the maximum length of such a sequence. If $\mathcal{F}$ is a family of $n$ continuous, partially defined functions such that every pair intersects in at most $s$ points, then the left-to-right sequence recording which function of $\mathcal{F}$ attains the minimum is itself an order-$(s+2)$ Davenport--Schinzel sequence, so the lower envelope of $\mathcal{F}$ has at most $\lambda_{s+2}(n)$ pieces \cite[Ch.~1]{sharir1995davenport}. Applying this with $\mathcal{F}$ as the $m_k$ candidate branches on $A_k$ and $s = \kappa = O(1)$, we get $|A_k| \le \lambda_{\kappa+2}(m_k)$.

It remains to bound $\lambda_{\kappa+2}(m_k)$. Known bounds give $\lambda_s(n) = O\left(n \cdot 2^{\alpha(n)^{c_s}}\right)$ for a constant $c_s$ depending only on $s$ \cite[Ch.~1--3]{sharir1995davenport}, where $\alpha(\cdot)$ is the inverse Ackermann function. Since $\alpha(\cdot)$ is at most $5$ for any input of conceivable practical size, we treat $2^{\alpha(m_k)^{c_s}}$ as constant for our fixed $s = \kappa+2 = O(1)$, so $\lambda_{\kappa+2}(m_k) = O(m_k)$. Finally, plugging in $m_k = O\left((M+N)^2\right)$, we obtain
\begin{equation}
|A_k| \le \lambda_{\kappa+2}(m_k) = O(m_k) = O\left( (M+N)^{2} \right). \label{eq:ds_bound}
\end{equation}

We now bound $|A_{k,\ell}|$ for any subsegment $A_{k,\ell}$ of the lower envelope $\mathcal{E}_k(t)$.

\begin{claim} \label{7.11}
For each subsegment $A_{k,\ell}$ along $A_k$,
\begin{equation}
|A_{k,\ell}| \le 2 \cdot  N(A_{k,\ell}) + 2 = O\!\left(\left(\tfrac{M+N}{\eps_0}\right)^{1/6}\right), \label{666}
\end{equation}
where $N(A_{k,\ell})$ counts the number of distinct quintic coefficient tuples that the candidate branch of $A_{k,\ell}$ passes through.
\end{claim}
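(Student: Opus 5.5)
Within a fixed subsegment $A_{k,\ell}$, the optimal path switches neither source subintervals of $A_{k-1}$ nor crosses a cell corner or valley point, so a single candidate branch $G(t)=f_{\mathrm{in}}(s^*(t))+\sqrt{D(s^*(t),t)}\,H(s^*(t),t)$ is the minimizer on the whole subsegment. The plan is therefore to count how many polynomial pieces the piecewise cost function can have while following this one branch. The count splits into two contributions: interior breakpoints, where the quintic approximant changes, and boundary effects at the two endpoints of $A_{k,\ell}$.

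\textbf{Step 1: interior breakpoints.} By construction (\Cref{thm:quinticerror}), on each bisection sub-interval the branch is replaced by a single quintic $p_{\mathrm{out}}$, determined by its coefficient tuple. Walking along $A_{k,\ell}$, a new polynomial piece starts only when the quintic coefficient tuple changes, or when a singular buffer (a constant piece near a double root of $\mathcal{C}$) is entered or left. I will charge each such event to a coefficient tuple. Each distinct tuple is visited in one contiguous run, because the sub-intervals are dyadic and disjoint and the branch is monotone along $A_k$ (by the non-crossing property, \Cref{path_spatiality}). Each tuple therefore accounts for at most two breakpoints, its entry and its exit, which gives the term $2N(A_{k,\ell})$.

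\textbf{Step 2: the endpoints.} The additive $+2$ covers the two partial pieces at the ends of $A_{k,\ell}$. These arise where the dyadic grid of $p_{\mathrm{out}}$ is truncated by the critical points bounding the subsegment.

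\textbf{Step 3: the asymptotic bound.} $N(A_{k,\ell})$ is at most the number of sub-intervals produced for a single output branch. By \Cref{thm:quinticerror} with $\sigma=\eps_0/(M+N)$, this number is $O(\sigma^{-1/6})$. The $O(1)$ extra buffer pieces per singularity are harmless, since $\mathcal{C}$ has $O(1)$ roots. Combining, $|A_{k,\ell}| = O\!\left(\left(\tfrac{M+N}{\eps_0}\right)^{1/6}\right)$.

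The main obstacle is justifying that a single candidate branch governs the entire subsegment, which is needed so that Step 3 applies to only one branch. Lower-envelope switches between branches arising from the same source subinterval must either be counted as critical points of type 3 or be absorbed into the $O(1)$ branches of \Cref{10}. I would handle this by refining the definition so that any envelope switch is a critical point, so that each $A_{k,\ell}$ really is governed by a single branch.
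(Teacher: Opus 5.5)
There is a genuine gap: you never account for the cumulative minimum of Case 2, which is where the paper's bound actually comes from. In Type 2 cells (the diagonal valley), the cost passes through $g_Z(v)=\inf_{v_{\mathrm{in}}\le v} g_{\mathrm{entry}}(v_{\mathrm{in}})$. This inserts horizontal constant pieces that are neither changes of a quintic coefficient tuple nor singular buffers around roots of $\mathcal{C}$. So your claim in Step 1, that a new piece starts only at a tuple change or at a buffer boundary, fails in this case. Step 3's count of the $O(1)$ roots of $\mathcal{C}$ gives no control over how many such constants appear.

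The paper's $2N(A_{k,\ell})+2$ counts exactly these pieces. A horizontal constant is inserted only at a local minimum of the current quintic or at one of the two endpoints of the subsegment. Each quintic piece has a quartic derivative with at most $4$ real roots, hence at most $2$ local minima. That gives at most $2$ inserted constants per coefficient tuple, plus $2$ at the endpoints.

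Your own sources for the factor $2$ and the $+2$ do not create any pieces:
\begin{itemize}
\item Consecutive tuples share their breakpoints, so charging an entry and an exit to each tuple only double-counts the $N+1$ breakpoints.
\item Truncating the first and last dyadic sub-intervals at the ends of $A_{k,\ell}$ shrinks those pieces without adding new ones.
\end{itemize}
Without the cumulative-minimum term, your argument really shows $|A_{k,\ell}|\le N(A_{k,\ell})$ plus buffer pieces. The extra slack merely happens to match the form of the statement.

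A smaller point: each tuple's domain is contiguous by construction, since each quintic is built on a single bisection sub-interval. The non-crossing property (\Cref{path_spatiality}) is not the reason; it only gives monotonicity of $t\mapsto s^*(t)$.

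Your single-branch step is fine and matches the paper. The paper takes ``no internal lower-envelope switches'' as part of the definition of $A_{k,\ell}$, which is consistent with its Davenport--Schinzel bound $|A_k|\le\lambda_{\kappa+2}(m_k)$.

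To repair the proof, add the Case 2 analysis. Bound the number of constant pieces the cumulative minimum can insert on each quintic by its number of local minima, and add the two endpoint insertions.
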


\begin{proof}
By definition, $A_{k,\ell}$ contains no internal lower envelope branch intersections, so the cost function along $A_{k,\ell}$ is controlled by a single smooth candidate branch $G_i(t)$ (or a constant valley segment from Case 2). By \Cref{thm:quinticerror}, $G_i(t)$ consists of $O\!\left(\left(\tfrac{M+N}{\eps_0}\right)^{1/6}\right)$ quintic sub-intervals over its entire domain, so $N(A_{k,\ell})$ is also at most this count of $O\!\left(\left(\tfrac{M+N}{\eps_0}\right)^{1/6}\right)$. Taking a cumulative minimum along valley paths inserts horizontal constant functions only at subsegment endpoints or local minima; since each coefficient tuple is a single quintic polynomial piece, its derivative has at most $4$ real roots, contributing at most $2$ local minima. Since there are at most $2$ endpoints and at most $2$ local minima per coefficient tuple, the cumulative minimum adds at most $2 \cdot  N(A_{k,\ell}) + 2 = O\!\left(\left(\tfrac{M+N}{\eps_0}\right)^{1/6}\right)$ pieces in $A_{k,\ell}$.
\end{proof}

Substituting the bounds \eqref{eq:ds_bound} and \eqref{666} into $T$ yields:
\begin{align*}
    T &= \sum_{k=2}^{M+N-2} \sum_{\ell=1}^{|A_k|} |A_{k,\ell}| \\ &\le \sum_{k=2}^{M+N-2} |A_k| \cdot O\!\left(\left(\tfrac{M+N}{\eps_0}\right)^{1/6}\right) \\ &\le \sum_{k=2}^{M+N-2} O\left((M+N)^{2 }\left(\tfrac{M+N}{\eps_0}\right)^{1/6}\right) \\ &= O\left((M+N)^{3 }\left(\tfrac{M+N}{\eps_0}\right)^{1/6}\right).
\end{align*}

We now find the envelope construction complexity $T_{\mathrm{env}}$. At each diagonal $A_k$, candidate curves $G_i(t)$ arrive in a monotonic spatial ordering along $A_k$ due to the non-crossing path property (\Cref{path_spatiality}). Consequently, constructing the lower envelope over $m_k = O\left((M+N)^2\right)$ spatially ordered candidate curves takes linear time $O(m_k) = O\left((M+N)^2\right)$ via a single-pass stack-based sweep (see Algorithm 6). Summing over all $M+N-2$ diagonals yields:
\[
T_{\mathrm{env}} = \sum_{k=2}^{M+N-2} O\left((M+N)^2\right) = O\left((M+N)^3\right).
\]
Since $\eps_0$ is a target accuracy, we take $\eps_0 \le 1$, so $\eps_0^{-1/6} \ge 1$ and the $T_{\mathrm{env}}$ term is dominated by $T$. Combining $T$ and $T_{\mathrm{env}}$, we obtain the total of
\[
O(T + T_{\mathrm{env}}) = O\left((M+N)^{3 }\left(\tfrac{M+N}{\eps_0}\right)^{1/6}\right)
\]
\textit{piece operations}.

Finally, we must account for the quintic interpolation. By \Cref{thm:main-quintic}, fixing a global accuracy target $\eps_0$, each of the $T = O((M+N)^{3 })$ boundary polynomial pieces now costs $\tilde{O}\big(\log\tfrac{M+N}{\eps_0}\big)$ bit operations to be produced, and rounded to a bounded-bit-size quintic. The resulting algorithm computes $\mathrm{CDTW}(P,Q)$ exactly within additive error $\eps_0$. The overall bit-complexity of \textproc{ApproxCDTW} is therefore
\begin{equation} \label{complexity}
O(T+T_{\mathrm{env}}) \cdot \tilde{O}\Big(\log\frac{M+N}{\eps_0}\Big)  =  O\left(\frac{(M+N)^{19/6}}{\eps_0^{1/6}} \log\frac{M+N}{\eps_0}\right).    
\end{equation}

\section{An FPTAS for CDTW Distance of Piecewise Algebraic Curves} \label{FPTAS}
This section presents the final construct in a simple way. The pieces are already worked out in detail and the construct is likely clear to the reader. We present it briefly for clarity. We are given two piecewise algebraic curves $P$ and $Q$ with $m$ and $n$ pieces and total-arc length $p$ and $q$. The algebraic pieces are defined by degree $d$ polynomials of bit-size at most $\tau$. We do the following:
\begin{enumerate}
    \item Compute the Fr\'{e}chet distance between $P$ and $Q$ in $O(mn \log(mn))$ time using \cite{rote2007computing}. Derive the lower bound to CDTW distance of $P$ and $Q$ using the main theorem of  \Cref{lower-bound}. Call this lower bound $\delta$. Note that $\delta$ has no dependency on $\varepsilon$: it is a lower bound produced by the mathematical result in \Cref{lower-bound} and exact algorithms for Fr\'{e}chet distance computation. The purpose of $\delta$ is to bridge additive error bounds to multiplicative error bounds on CDTW distance.
    \item Pick the multiplicative approximation target $\varepsilon$. Set the additive error bound $\varepsilon_0 = \delta \varepsilon$.
    \item Subdivide both $P$ and $Q$ using turning angles and create quadratic Bezier approximations. Here we guarantee the CDTW approximation error in Lemma \ref{lem:curve_approximation_error} is at most $\frac{\varepsilon_0}{2}$. This yields piecewise quadratic approximations to each algebraic piece, where each piecewise quadratic approximation has $O(\frac{d^2 2^{\tau} \sqrt{p+q}}{\varepsilon_0})$ many pieces, and takes $O(\tau + \log(\frac{d^2 \sqrt{p+q}}{\varepsilon_0}) )$ bits to create.
    \item Approximate the CDTW distance of piecewise quadratic curves where we guarantee the total approximation error is at most $\frac{\varepsilon_0}{2}$.  Now we have in total $M= O(\frac{m d^2 2^{\tau} \sqrt{p+q}}{\varepsilon_0})$ and $N=O(\frac{n d^2 2^{\tau} \sqrt{p+q}}{\varepsilon_0})$ quadratic pieces in our approximation. Using \Cref{complexity}, we conclude that this approximation can be done with 
    \[ O \left(  \left((m+n) d^2 2^{\tau} \sqrt{p+q} \right)^{\frac{19}{6}} (\frac{1}{\varepsilon_0})^{\frac{10}{3}} \log \left( \frac{ (m+n) d^2 2^{\tau} \sqrt{p+q}}{\varepsilon_0^2} \right) \right)  \]
    bit operations. The parameters $\tau$, arc-length, and the lower bound, are constant that do not depend on $\varepsilon$. So, if we ignore them to clarify the dependency on $\varepsilon, m,n$, the algorithm takes
        \[ O \left(  \left( (m+n) d^2\right)^{\frac{19}{6}} \left(\frac{1}{\varepsilon}\right)^{\frac{10}{3}} \log \left( \frac{ (m+n) d}{\varepsilon^2} \right) \right)  \]    
        bit operations.
\end{enumerate}
\begin{rem}
We remark that the analysis of the algorithm for Fr\'{e}chet distance --- \cite{rote2007computing} and the recent elegant variant focused on the decision-version of the problem \cite{Conradi_Driemel_Kolbe_2025} --- does not keep track of the dependency on $d$, treating it as a constant. We did not attempt to rigorously extract the dependency on $d$ from these works, but the dependency is likely sub-quadratic. Therefore the dependency on $d$ in our final estimate is likely missing a $d^2$ term.
\end{rem}
\section{Conclusion}
In this paper, we presented an algorithmic framework for computing the Continuous Dynamic Time Warping (CDTW) distance between piecewise algebraic curves. Due to the algebraic hardness of computing exact CDTW distances under the squared Euclidean $\ell_2$ norm, our framework  approximates the CDTW distance to any user-specified error $\varepsilon$.

We proved via Pontryagin's maximum principle that optimal alignments consist of straight line segments or segments along valleys of the height function. We also obtain new results on comparing CDTW distance with Fr\'{e}chet distance for piecewise smooth curves: this allows us to create a fast and coarse approximation to CDTW distance by computing Fr\'{e}chet distance.  By approximating arbitrary algebraic curves with piecewise quadratic curves, and propagating optimal costs through cell boundaries using quintic Chebyshev interpolation and Davenport-Schinzel lower envelopes, we strictly bounded both the computational complexity and accumulated errors. 

To the best of our knowledge, this is the first algorithm for approximating CDTW distance of piecewise smooth curves. We believe the mathematical results and approximation machinery we developed will remain useful for future work in this field.  Our work is a first step toward extending CDTW algorithms beyond piecewise linear curves, and we hope that future work will obtain much faster algorithms applicable to an even broader family of curves.

\section{Acknowledgements}
We would like to thank Kevin Buchin and Samson Wang for answering our questions about their nice work \cite{buchin2022computing}, and to Jonathan de Koning for helpful discussions and meticulous experimentation in the early phases of this project. A.E. is also grateful to Claire Walton for her wonderful lectures on optimal control, and to the support from NCF CCF 2414160.

\printbibliography

\appendix
\section{Pseudo-Code of the Main Algorithms} \label{pseudocode}
We provide the pseudo-code of our algorithm \textproc{ApproxCDTW} (Algorithm 1), along with supporting sub-algorithms (Algorithms 2 through 6), that compute the optimal CDTW distance via diagonal-to-diagonal propagation of individual cells, global lower envelope reduction, and quintic interpolant approximations.

\begin{algorithm}[H]
\caption{ApproxCDTW}
\begin{algorithmic}[1]
\Require Piecewise curves $P$ and $Q$ split into $M$ and $N$ quadratic pieces, respectively; target additive accuracy $\eps_0 > 0$.
\Procedure{ApproxCDTW}{$P, Q, \eps_0$}
    \State $M \gets \text{number of segments in } P$
    \State $N \gets \text{number of segments in } Q$
    \State $p \gets L_2 \text{ arc length of } P$
    \State $q \gets L_2 \text{ arc length of } Q$
    \State $\sigma \gets \eps_0 / (2(M+N))$
    
    \State $A_1.\text{bottom}(x) \gets \int_0^x \|P(z) - Q(0)\|_2^2 \, dz \quad \forall x \in [0, p]$ \Comment{Initialize base boundary $A_1$}
    \State $A_1.\text{left}(y) \gets \int_0^y \|P(0) - Q(z)\|_2^2 \, dz \quad \forall y \in [0, q]$
    
    \For{$k = 2$ \textbf{to} $M + N - 2$} \Comment{Loop over diagonal wavefronts $A_k$}
        \State $\mathcal{P}_k \gets \emptyset$ \Comment{Set of raw candidate curves propagated to $A_k$}
        
        \For{\textbf{each} cell $(i,j)$ such that $i + j = k$ \textbf{and} $1 \le i \le M-1, \, 1 \le j \le N-1$}
            \State $f_{\text{bottom}} \gets \text{active subsegments on } A_{k-1} \text{ entering bottom edge of } (i,j)$
            \State $f_{\text{left}} \gets \text{active subsegments on } A_{k-1} \text{ entering left edge of } (i,j)$
            \State $Type \gets \textproc{GetCellType}(P_i, Q_j)$
            
            \For{\textbf{each} piece $f \in \{f_{\text{bottom}} \cup f_{\text{left}}\}$}
                \State $\mathcal{P}_k \gets \mathcal{P}_k \cup \textproc{PropagatePiece}(f, Type, P_i, Q_j, \sigma)$
            \EndFor
        \EndFor
        
        \State $A_k \gets \textproc{GlobalLowerEnvelope}(\mathcal{P}_k)$ \Comment{Davenport--Schinzel Envelope}
    \EndFor
    
    \State \Return $A_{M+N-2}(p, q)$ \Comment{$\eps_0$-additive-approximate CDTW cost}
\EndProcedure
\end{algorithmic}
\end{algorithm}

\begin{algorithm}[H]
\caption{PropagatePiece}
\begin{algorithmic}[1]
\Procedure{PropagatePiece}{$f(s), Type, P_i, Q_j, \sigma$}
    \State $I \gets \text{domain interval of } f(s)$
    \If{$Type = \text{Type 1}$}
        \State $\mathcal{F} \gets \textproc{StraightPropagation}(f(s), \text{output boundary})$
        \State \Return $\bigcup_{g \in \mathcal{F}} \textproc{QuinticApproximate}(g, \mathrm{domain}(g), \tau_P, \sigma)$

    \ElsIf{$Type = \text{Type 2}$}
        \State $L_{Z(h)} \gets \text{maximum parameter length of diagonal valley } Z(h)$
        \State $g_{\text{entry}}(v) \gets \textproc{StraightPropagation}(f(s), \text{diagonal valley } Z(h))$
        \State $g_Z(v) \gets \textproc{CumulativeMin}(g_{\text{entry}}, v, L_{Z(h)})$
        \State $\mathcal{F} \gets \textproc{StraightPropagation}(g_Z(v), \text{output boundary})$
        \State \Return $\bigcup_{g \in \mathcal{F}} \textproc{QuinticApproximate}(g, \mathrm{domain}(g), \tau_P, \sigma)$

    \ElsIf{$Type = \text{Type 3}$}
        \State $\{v_1, \dots, v_k\} \gets \text{discrete valley points of } Z(h)$
        \State $\mathcal{P}_{\text{out}} \gets \emptyset$
        \For{\textbf{each} $v_x \in Z(h)$}
            \State $g_{\text{entry}}(v_x) \gets \min_{s \in I} \big(f(s) + C(s, v_x)\big)$
        \EndFor
        \State $g_Z(v_1) \gets g_{\text{entry}}(v_1)$

        \For{$x = 2$ \textbf{to} $k$}
            \State $g_Z(v_x) \gets \min\big(g_{\text{entry}}(v_x),\, g_Z(v_{x-1}) + C(v_{x-1}, v_x)\big)$
        \EndFor

        \For{$x = 1$ \textbf{to} $k$}
            \State $\mathcal{F} \gets \textproc{StraightPropagation}(g_Z(v_x), \text{output boundary})$
            \State $\mathcal{P}_{\text{out}} \gets \mathcal{P}_{\text{out}} \cup \bigcup_{g \in \mathcal{F}} \textproc{QuinticApproximate}(g, \mathrm{domain}(g), \tau_P, \sigma)$
        \EndFor

        \State \Return $\mathcal{P}_{\text{out}}$
    \EndIf
\EndProcedure
\end{algorithmic}
\end{algorithm}

\begin{algorithm}[H]
\caption{QuinticApproximate}
\begin{algorithmic}[1]
\Procedure{QuinticApproximate}{$g(t)$, $[t_a, t_b]$, $\tau_P$, $\sigma$}
    \State $\mathcal{R}(t) \gets \mathrm{Res}_s(P, P_s)$ \Comment{\Cref{dfn:regime}}
    \State Isolate roots of $\mathcal{R}(t)$ in $[t_a, t_b]$
    \State Partition $[t_a, t_b]$ into $O(1)$ Regime 1 and 2 sub-intervals
    \State $\mathcal{P}_{\text{out}} \gets \emptyset$
    \For{\textbf{each} sub-interval $[u_a, u_b]$ in the partition}
        \If{$[u_a,u_b]$ is Regime 1}
            \State Bisect $[u_a, u_b]$ to target length $L =  O\big(r\,\sigma^{1/6}/M_g^{1/6}\big)$ \Comment{\Cref{prop:regularerror}}
            \State Choose $6$ Chebyshev nodes $t_0, \dots, t_5$ on the resulting interval
            \For{$k = 0$ \textbf{to} $5$}
                \State Refine $s^*(t_k)$ and evaluate $\tilde{y}_k \gets g(t_k)$ to $\eta$ bits \Comment{\Cref{thm:refinement}}
            \EndFor
            \State $p \gets$ Lagrange interpolant of $\{(t_k, \tilde{y}_k)\}_{k=0}^5$ \Comment{\Cref{thm:coeffbitsize}}
            \State $\mathcal{P}_{\text{out}} \gets \mathcal{P}_{\text{out}} \cup \{p\}$
        \Else \Comment{Regime 2}
            \State Excise buffer of half-width $\ell_0 =  O(\sigma^2/|c'|^2)$ around the singularity
            \State $\mathcal{P}_{\text{out}} \gets \mathcal{P}_{\text{out}} \cup \{\text{constant } g(t_a) \text{ on the buffer}\}$
            \State $\mathcal{P}_{\text{out}} \gets \mathcal{P}_{\text{out}} \cup \textproc{QuinticApproximate}(g, [u_a,u_b] \setminus \text{buffer}, \tau_P, \sigma)$ \Comment{recurse back to Regime 1}
        \EndIf
    \EndFor
    \State \Return $\mathcal{P}_{\text{out}}$
\EndProcedure
\end{algorithmic}
\end{algorithm}

\begin{algorithm}[H]
\caption{StraightPropagation}
\begin{algorithmic}[1]
\Procedure{StraightPropagation}{$f_{\text{in}}(s)$, Target Boundary}
    \State $I \gets \text{domain interval } [s_{\mathrm{start}}, s_{\mathrm{end}}] \text{ of } f_{\text{in}}(s)$
    \State Define $G(s,t) := f_{\mathrm{in}}(s) + \sqrt{D(s,t)} H(s,t)$
    \State Candidates $\mathcal{R}(t) \gets \{s_{\mathrm{start}}, s_{\mathrm{end}}\}$
    \State Define $\frac{\partial}{\partial s} G(s, t) = 0 \iff P(s,t) = 0$
    \State Solve $P(s,t) = 0$ for real algebraic root branches $\{s_1(t), s_2(t), \dots, s_d(t)\}$

    \For{\textbf{each} root branch $s_m(t)$}
        \If{$s_m(t) \in I$}
            \State $\mathcal{R}(t) \gets \mathcal{R}(t) \cup \{s_m(t)\}$
        \EndIf
    \EndFor

    \State $\mathcal{F}_{\text{out}} \gets \emptyset$
    \For{\textbf{each} valid candidate branch $s^*(t) \in \mathcal{R}(t)$}
        \State $\mathcal{F}_{\text{out}} \gets \mathcal{F}_{\text{out}} \cup \{G(s^*(t), t)\}$
    \EndFor

    \State \Return $\mathcal{F}_{\text{out}}$
\EndProcedure
\end{algorithmic}
\end{algorithm}

\begin{algorithm}[H]
\caption{CumulativeMin}
\begin{algorithmic}[1]
\Procedure{CumulativeMin}{$g_1(v), v, v_{\text{end}}$}
    \State $v_{\text{crit}} \gets \{u \mid g_1'(u) = 0\}$
    \State $g_{\min} \gets g_1$

    \For{\textbf{each} $u \in v_{\text{crit}}$}
        \If{$g_1$ is strictly increasing immediately after $u$}
            \State Find $v^* \in (u, v_{\text{end}}]$ such that $g_1(v^*) = g_1(u)$
            \If{no such $v^*$ exists}
                \State $v^* \gets v_{\text{end}}$
            \EndIf
            \State Replace $g_{\min}$ on interval $[u, v^*]$ with horizontal line constant $g_1(u)$
        \EndIf
    \EndFor

    \State \Return $g_{\min}$
\EndProcedure
\end{algorithmic}
\end{algorithm}

\begin{algorithm}[H]
\caption{GlobalLowerEnvelope}
\begin{algorithmic}[1]
\Require $\mathcal{P}_k = \{G_1, G_2, \dots, G_{m_k}\}$ pre-sorted spatially on $A_k$.
\Procedure{GlobalLowerEnvelope}{$\mathcal{P}_k$}
    \State Initialize active curve stack $S \gets \text{empty stack}$
    
    \For{$i = 1$ \textbf{to} $m_k$}
        \While{$S.\text{size}() \ge 1$}
            \State $G_{\text{top}} \gets S.\text{top}()$
            \State Find entry point $t_{\text{intersect}}$ where $G_i(t) = G_{\text{top}}(t)$
            \If{$G_i(t) \le G_{\text{top}}(t)$ for all $t \ge t_{\text{intersect}}$}
                \If{$t_{\text{intersect}} \le \text{start domain of } G_{\text{top}}$}
                    \State $S.\text{pop}()$
                \Else
                    \State Truncate domain of $G_{\text{top}}$ at $t_{\text{intersect}}$
                    \State \textbf{break}
                \EndIf
            \Else
                \State \textbf{break}
            \EndIf
        \EndWhile
        \State $S.\text{push}(G_i)$
    \EndFor
    
    \State $\mathcal{E}_k \gets \text{piecewise function constructed from curves remaining in } S$
    \State \Return $\mathcal{E}_k$
\EndProcedure
\end{algorithmic}
\end{algorithm}

\end{document}